\documentclass{article}
\usepackage{custom_tex}
\usepackage{authblk}
\title{Markov Chain Monte Carlo with Diffusion Paths}
\date{September 2026}
\author[1]{Han Chen}
\author[1]{Sifan Liu}
\author[2]{Jun Yang}
\affil[1]{Department of Statistical Science, Duke University}
\affil[2]{Department of Mathematical Sciences, University of Copenhagen}

\begin{document}
\maketitle

\begin{abstract}
Sampling from multimodal distributions is a longstanding challenge for classical local Markov chain Monte Carlo (MCMC) methods. A popular remedy is to introduce a sequence of intermediate distributions that interpolate between the target and a simpler reference. The classical choice, tempering, raises the density to a power, but distorts the relative weights of asymmetric modes and can lead to poor mixing. We instead propose interpolating along the diffusion path, the marginals of a noising diffusion process that carries the target toward a Gaussian. This path preserves the relative weights of the modes and enjoys favorable mixing properties, which we make precise through a spectral-gap analysis of the corresponding ideal transition kernel. Sampling along the path requires its intermediate scores, which can be estimated from the unnormalized target through variational approaches, yielding only an approximate sampler. To remove the resulting bias, we introduce the Metropolis-adjusted diffusion path (MAD-Path) sampler, which corrects the diffusion-path proposal in an augmented path space and leaves the target invariant regardless of the accuracy of the learned score or the discretization error. We further quantify how these two errors affect the acceptance probability, providing guidance for practical tuning. Experiments on a range of Bayesian posteriors show that MAD-Path improves global exploration and mode-weight estimation relative to tempering-based MCMC methods and unadjusted diffusion samplers.
\end{abstract}

\tableofcontents

\section{Introduction}\label{sec:introduction}

Markov chain Monte Carlo (MCMC) is the standard framework for sampling from a density known only up to its normalizing constant, and it serves as the computational engine of Bayesian statistics and many scientific applications. The most widely used algorithms---random-walk Metropolis \citep[RWM;][]{metropolis1953equation}, Metropolis-adjusted Langevin algorithm \citep[MALA;][]{roberts1996exponential}, and Hamiltonian Monte Carlo \citep[HMC;][]{duane1987hybrid, neal2011mcmc}---generate proposals through local moves, and consequently mix poorly when the target has several well-separated modes, as the chain only rarely crosses the low-probability barriers between them. Such multimodal targets arise routinely in Bayesian posteriors, through label switching in mixture models \citep{Jasra2005Markov}, parameter constraints \citep{Sundberg2010FlatAM}, and nonidentifiability \citep{drton2004multimodality}, as well as throughout the sciences; see \citet{latuszynski2025mcmc} for a recent survey.

A general strategy for sampling from multimodal distributions is to introduce a path of intermediate distributions that connects the target $p$ to a flatter distribution on which the separated modes are easier to traverse. The classical construction is tempering: for a decreasing sequence of inverse temperatures $1=\beta_0>\beta_1>\cdots>\beta_n>0$, one considers intermediate densities proportional to $p^{\beta_k}$. As $\beta_k$ decreases, differences in log-density are compressed, reducing the energy barriers between modes and allowing local Markov transitions to move more freely across regions that are nearly disconnected under the original target. This idea underlies simulated tempering \citep{marinari1992simulated,Geyer1995AnnealingMC}, parallel tempering \citep{swendsen1986replica,geyer1991markov} and its non-reversible variants~\citep{okabe2001replica,syed2022non}, and tempered transitions \citep{neal1996sampling}, among other methods. Closely related annealing paths also form the basis of annealed importance sampling \citep{neal2001annealed} and sequential Monte Carlo samplers \citep{del2006sequential}.

Tempering can, however, fail when the local modes exhibit asymmetry. For instance, when the modes differ in scale, raising the density to a power distorts their relative weights.  Consider an equal-weight mixture of two well-separated Gaussians with different scales. Under the tempered density $p^{\beta}$ with $\beta<1$, the narrower mode has a weight that is exponentially small in the dimension. This weight distortion is known to cause torpid mixing of simulated and parallel tempering for asymmetric mixtures, with spectral gaps that vanish exponentially in dimension regardless of the temperature schedule~\citep{woodard2009torpid}. A related phenomenon has been described in the learning-based sampling literature as ``teleportation of mass'': when the tempering interpolation is viewed as a curve in Wasserstein probability space, probability mass is transported nonlocally along the path, leading to an irregular velocity field~\citep{mate2023learning,chemseddine2025neural}. Weight-preserving variants of tempering \citep{Tawn2018WeightpreservingST,tawn2020annealed} target this issue directly, but they require knowing the location and scale of every mode and therefore do not resolve it in general.

This shortcoming of tempering motivates the search for a better interpolating path. Crucially, most of the algorithms above---simulated and parallel tempering, tempered transitions, and their relatives---do not depend on the tempering path in particular; they require only a sequence of intermediate distributions whose unnormalized densities can be evaluated, and any such sequence may be substituted. A particularly attractive alternative is to interpolate at the level of the samples rather than by powering the density. Let $X\sim p$ and $Z\sim\N(0,I_d)$ be independent, and consider $X_t = e^{-t/2}X + \sqrt{1-e^{-t}}Z$ for $t\geq 0$. As $t$ runs from $0$ to $\infty$, the law of $X_t$ moves from the target $p$ toward the standard Gaussian $\N(0,I_d)$. Unlike tempering, this path preserves mixture weights exactly: if the target is a mixture distribution, the law of $X_t$ is again a mixture with the same weights, whatever the relative scales of the components.

The marginal laws are precisely the time marginals of the Ornstein--Uhlenbeck (OU) diffusion
\begin{align*}
    \rd X_t = -\tfrac12 X_t \,\rd t + \rd B_t,\qquad X_0\sim p,\;\; t\geq 0,
\end{align*}
which is the overdamped Langevin dynamics converging exponentially fast to its stationary distribution $\N(0, I_d)$. Writing $p_t$ for the density of $X_t$, the family $(p_t)_{t\geq 0}$ solves the Fokker--Planck equation
\begin{align*}
    \partial_t p_t + \nabla\cdot(p_t\, v_t) = 0,
    \qquad
    v_t(\bfx) = -\tfrac12\bigl(\bfx + \nabla\log p_t(\bfx)\bigr).
\end{align*}
Hence, $(p_t)_{t\geq 0}$ is an absolutely continuous curve in the Wasserstein space $(\mathcal P_2(\R^d), W_2)$ driven by the velocity field $v_t$~\citep{ambrosio2008gradient}. In contrast to tempering, mass is then transported along $(p_t)_{t\geq 0}$ by a smooth velocity field rather than teleported non-locally, making the diffusion path a more regular curve in Wasserstein space.

The difficulty in sampling with the diffusion path is that the intermediate distributions are intractable. To transport a Gaussian sample to the target $p$, one must run the noising diffusion backward in time, with a drift that involves the score $\nabla\log p_t$ at every intermediate time~\citep{anderson1982reverse}. This reverse-time construction is the basis of score-based generative modeling~\citep{song2019generative, song2021scorebased, ho2020denoising}, where the scores are estimated from data by denoising score matching. 
In generative modeling, diffusion models have become a leading class of flexible samplers. Their success is due not only to the expressiveness of neural networks for estimating the score, but also to the structure of the diffusion path itself: the forward process gradually transforms a complex data distribution into noise, so that generation can be formulated as a sequence of simpler denoising problems rather than a single global transport problem. This is closely analogous to the role of annealing paths in tempering-based MCMC~\citep{song2019generative}.

In the sampling setting, however, one has no samples from $p$ and only access to its unnormalized density. Recent work therefore estimates the intermediate-time scores through stochastic optimal control or path-space variational formulations~\citep{zhang2022path, richter2024improved, vargasdenoising}. The resulting diffusion sampler is only approximate: its output distribution is affected by score-estimation error, finite-time truncation of the diffusion path, and numerical discretization~\citep{Chen2022SamplingIA}. Moreover, recent empirical studies suggest that successful applications to multimodal targets often require incorporating the target score $\nabla\log p$ directly into the score parametrization; without this information, the learned sampler can suffer from severe mode collapse~\citep{blessing2024beyond,he2025no}. This is costly, however, since the target score must then be evaluated at every discretization step. Mode collapse can be alleviated further by encoding the target's mode structure into the reference distribution, for instance, by locating the modes from preliminary samples~\citep{noble2025learned}. But even then, the sampler can remain inaccurate when the target has a strongly nonlinear score that is hard to approximate. We return to these practical aspects of score learning in Section~\ref{sec:practical}.

Unlike other approximate samplers, such as normalizing flows, diffusion-based samplers do not provide a simple bias-correction mechanism. A normalizing flow is a deterministic invertible map from a tractable reference distribution, so the change-of-variables formula gives the exact density of each generated sample. This density can then be used to debias the sampler by importance weighting or a Metropolis--Hastings correction.
This convenience, however, comes at a price. First, requiring an invertible map with a tractable Jacobian determinant greatly constrains the expressiveness of normalizing flows~\citep{papamakarios2021normalizing}. Second, in dimension two or higher, infinitely many diffeomorphisms carry one density to the other, and it is unclear which one a normalizing flow should learn. Diffusion-based samplers, in contrast, place no invertibility constraint on the parametrization, and because they prescribe the entire marginal path, the object to be learned is uniquely determined.
The tradeoff is that samples are generated through a stochastic path, and the marginal density of the endpoint is generally intractable. Although it can be evaluated through the associated probability-flow ODE~\citep{song2021scorebased}, this requires solving an additional ODE and, in practice, introduces numerical discretization error. Thus, diffusion-based samplers do not come with exact density evaluation, making their sampling bias difficult to remove by standard correction methods.

To address this difficulty, we propose an MCMC algorithm that uses diffusion paths as proposals and corrects the resulting bias with a Metropolis--Hastings step. In each transition, the current state is first evolved forward along the noising diffusion path, then back along an approximate reverse path. This forward-then-backward construction mirrors the heating-then-cooling structure of Neal's tempered transitions~\citep{neal1996sampling}, with the diffusion path in place of the tempering path. Near the noisy end, the target's modes are smoothed together, so the chain can move between them freely; the reverse path then transports the state back toward $p$. Because the proposal is a stochastic trajectory, the Metropolis--Hastings correction is applied to the entire path in an augmented space. This augmented-space correction ensures that the marginal chain has $p$ as its exact stationary distribution, regardless of the accuracy of the learned score or the size of the discretization error.

The main contributions and the outline of the paper are as follows.
\begin{itemize}
\item In Section~\ref{sec:ideal:transition} we describe the ideal transition kernel induced by the diffusion path and study its mixing, first when the target satisfies a Poincar\'e inequality and then when it is a mixture of such distributions. This ideal kernel coincides, up to a rescaling, with the proximal sampler~\citep{lee2021structured, chen2022improved}. Our analysis takes a new perspective, interpreting the proximal sampler as a data-augmentation Gibbs sampler. This lets us bound its spectral gap without appealing to isoperimetric inequalities and, for mixture targets, decompose the gap into a between-mode and a within-mode contribution, making precise the advantage of the diffusion path as an interpolation.

\item In practice, the diffusion-path transition can only be run with approximate scores and a finite-time discretization. In Section~\ref{sec:algorithm}, we introduce the Metropolis--Hastings algorithm that removes the resulting score-approximation and discretization bias. 
Because the proposal is constructed by a stochastic trajectory, the correction is carried out in an augmented path space. This contrasts with existing Metropolis adjustments for deterministic normalizing flows, where the flow is used either to generate independent proposals~\citep{gabrie2022adaptive} or to reparametrize the target before applying local MCMC~\citep{hoffman2019neutra}. 

\item Section~\ref{sec:analysis} quantifies how the Metropolis acceptance probability depends on discretization and score approximation. With the exact score, we isolate the discretization error and show that the rejection probability is of order $O(h^{1/2})$ for step size $h$. We then derive the corresponding high-dimensional scaling rule and the associated optimal tuning guideline. With an approximate score, we show that the rejection probability is governed by the symmetrized Kullback--Leibler divergence between the approximate and exact reverse diffusion, or equivalently, an integrated squared score error.

\item In Section~\ref{sec:practical}, we describe how the intermediate-time scores can be learned through path-space variational inference and discuss related work along this line. We also give practical guidelines for choosing the reference distribution, which we find crucial for mitigating mode collapse.

\item Finally, in Section~\ref{sec:experiments}, we evaluate the method on a range of Bayesian posteriors, study how diffusion time and step size affect sampling efficiency, and compare against tempering-based MCMC. Moreover, we show that scores learned by existing variational approaches can misrepresent the relative weights of isolated modes, or be inaccurate when the target's score is highly nonlinear. In both cases, the Metropolis--Hastings correction is crucial for removing the resulting bias.
Section~\ref{sec:discussion} concludes with a discussion of future directions.
\end{itemize}

\section{Diffusion paths as global transitions}
\label{sec:ideal:transition}

Bridging the target with a tractable reference through a sequence of intermediate distributions is a common strategy for constructing MCMC transitions that can move globally across separated modes. Here, we take the diffusion path as this bridge and argue that it is preferable to classical tempering.

\subsection{Forward--backward diffusion path as an ideal transition kernel}
Let $p$ be the target distribution supported on $\R^d$ with continuously differentiable density. Recall the Ornstein--Uhlenbeck (OU) process
\begin{align}\label{equ:ou}
    \rd X_t = -\frac12 X_t \rd t + \rd B_t, \quad X_0\sim p,\;\; t\in[0,T],
\end{align} 
where $B_t$ is a standard $d$-dimensional Brownian motion, now run over a finite horizon $[0,T]$. Writing $p_t$ for the law of $X_t$, the family $(p_t)_{t\in[0,T]}$ interpolates between the target $p_0=p$ and $p_T$; for large $T$ the endpoint $p_T$ is close to the equilibrium distribution $\N(0,I_d)$, but need not equal it. The transition kernel constructed below is reversible with respect to $p$ for every finite $T$, and $T$ governs only the mixing speed.

The time-reversal of the OU process~\eqref{equ:ou} is given by
\begin{align}\label{equ:ou:backward}
    \rd Y_t = \Big(-\frac12 Y_t - \nabla \log p_t(Y_t) \Big) \widebar{\rd t} + \rd \widebar B_t, \quad Y_T\sim p_T,\, t\in[T,0],
\end{align}
where $\widebar{\rd t}$ denotes an infinitesimal negative time increment and $\widebar B_t$ is a standard Brownian motion running backward in time~\citep{anderson1982reverse}. By construction, the forward path $(X_t)_{t\in[0,T]}$ and the backward path $(Y_t)_{t\in[0,T]}$ are equal in distribution. This suggests a natural Markov transition kernel: from the current state $X_0$, sample the forward path and set $Y_T=X_T$; then sample the backward path from $Y_T$ and return its endpoint $Y_0$ as the next state.

\begin{prop}[Reversibility of the diffusion-path transition kernel]
\label{prop:ideal:kernel:reversible}
Let $P(\bfx_0, \rd \bfy_0)$ be the transition kernel that draws $\bfy_0$ by first sampling the forward path $(X_t)_{t\in[0,T]}$ from the OU process~\eqref{equ:ou} started at $X_0=\bfx_0$, and then the backward path $(Y_s)_{s\in[T,0]}$ from~\eqref{equ:ou:backward} started at $Y_T=X_T$. Then $P$ is reversible with respect to $p$, and hence admits $p$ as its invariant distribution.
\end{prop}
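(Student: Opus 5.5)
The plan is to collapse the path-level description to a two-stage kernel through the endpoint $X_T$, and then read reversibility off an explicit symmetric joint density. First I would note that the output $\bfy_0$ depends on the forward path only through $X_T$. The backward SDE~\eqref{equ:ou:backward} is started at $Y_T=X_T$ and driven by Brownian motion independent of the forward path. Hence
\begin{align*}
    P(\bfx_0,\rd\bfy_0)=\int_{\R^d} Q_T(\bfx_0,\rd\bfz)\,R_T(\bfz,\rd\bfy_0),
\end{align*}
where $Q_T$ is the OU transition kernel over time $T$, namely $\N(e^{-T/2}\bfx_0,(1-e^{-T})I_d)$, and $R_T(\bfz,\cdot)$ is the law of $Y_0$ under~\eqref{equ:ou:backward} given $Y_T=\bfz$.

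The key step is to identify $R_T$ as the Bayesian posterior (the backward conditional):
\begin{align*}
    R_T(\bfz,\rd\bfy)=\frac{p(\bfy)\,Q_T(\bfy,\bfz)}{p_T(\bfz)}\,\rd\bfy ,
\end{align*}
where $Q_T(\bfy,\bfz)$ denotes the Gaussian transition density. This is what the time-reversal result of \citet{anderson1982reverse} provides. The process~\eqref{equ:ou:backward} started from $Y_T\sim p_T$ has the same path law as the forward OU process started from $p$. In particular, the joint law of $(Y_T,Y_0)$ equals that of $(X_T,X_0)$, whose density is $p(\bfx)Q_T(\bfx,\bfz)$. Conditioning on the terminal value then gives the formula above, but only for $p_T$-a.e.\ $\bfz$. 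That suffices here: by construction the forward step always feeds in $X_T$, which, when $X_0\sim p$, is distributed as $p_T$. Moreover, $p_T>0$ everywhere because it is a Gaussian convolution.

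With this identification, the joint density of $(X_0,Y_0)$ when $X_0\sim p$ is
\begin{align*}
    p(\bfx_0)P(\bfx_0,\bfy_0)=\int_{\R^d}\frac{p(\bfx_0)Q_T(\bfx_0,\bfz)\,p(\bfy_0)Q_T(\bfy_0,\bfz)}{p_T(\bfz)}\,\rd\bfz .
\end{align*}
This expression is manifestly symmetric in $(\bfx_0,\bfy_0)$, which is detailed balance. Integrating over $\bfx_0$ then yields invariance of $p$. Equivalently, $P$ is the data-augmentation Gibbs kernel $\bfx\to\bfz\to\bfy$ for the joint law of $(X_0,X_T)$, and such kernels are always reversible.

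The main obstacle is the rigorous justification of the conditioning step: that the strong solution of~\eqref{equ:ou:backward} started at a \emph{fixed} point $\bfz$ has law equal to the regular conditional distribution of $X_0$ given $X_T=\bfz$. This requires well-posedness of the reverse SDE, and the drift $\nabla\log p_t$ may be singular as $t\downarrow0$. I would handle it by applying Anderson's theorem on $[\epsilon,T]$, where $p_t$ is smooth and positive. That gives the identity of bridge and conditional laws via disintegration of the path measure with respect to the time-$T$ marginal. I would then let $\epsilon\to0$, using that the forward path is continuous at $0$, or else simply interpret~\eqref{equ:ou:backward} as defined through that conditional law.
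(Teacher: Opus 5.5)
Your proposal is correct and takes essentially the same route as the paper. Both factor $P$ through $X_T$, identify the backward kernel with the posterior of $X_0$ given $X_T$ (the paper's reversal identity $p_0(\bfx_0)\,\vec{P}(\bfx_0,\rd\bfx_T)=p_T(\bfx_T)\,\leftvec{P}(\bfx_T,\rd\bfx_0)$), and read detailed balance off the symmetric integral $\int p_T(\bfz)\,\leftvec{P}(\bfz,\rd\bfx_0)\,\leftvec{P}(\bfz,\rd\bfy_0)$. Your extra remarks go beyond what the paper says; it simply asserts the reversal identity holds ``by construction.'' These remarks are that the conditioning only needs to hold $p_T$-a.e., that $p_T>0$ makes $P(\bfx_0,\cdot)$ independent of the version chosen, and that the reverse SDE may be ill-posed near $t=0$.
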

\begin{proof}[Proof of Proposition~\ref{prop:ideal:kernel:reversible}]
Write $\vec{P}(\bfx_0, \rd \bfx_T)$ for the forward transition from $\bfx_0$ to $\bfx_T$ induced by~\eqref{equ:ou}, and $\leftvec{P}(\bfx_T, \rd \bfy_0)$ for the backward transition from $\bfx_T$ to $\bfy_0$ induced by~\eqref{equ:ou:backward}, so that
\begin{align*}
    P(\bfx_0, \rd \bfy_0) = \int \vec{P}(\bfx_0, \rd \bfx_T)\, \leftvec{P}(\bfx_T, \rd \bfy_0).
\end{align*}
By construction, the forward and backward processes satisfy the reversal identity $p_0(\bfx_0)\, \vec{P}(\bfx_0, \rd \bfx_T) = p_T(\bfx_T)\, \leftvec{P}(\bfx_T, \rd \bfx_0)$. Hence
\begin{align*}
    p_0(\bfx_0)\, P(\bfx_0, \rd \bfy_0) &= \int p_0(\bfx_0)\, \vec{P}(\bfx_0, \rd \bfx_T)\, \leftvec{P}(\bfx_T, \rd \bfy_0)\\
    &= \int p_T(\bfx_T)\, \leftvec{P}(\bfx_T, \rd \bfx_0)\, \leftvec{P}(\bfx_T, \rd \bfy_0) \\
    &= \int p_0(\bfy_0)\, \vec{P}(\bfy_0, \rd \bfx_T)\, \leftvec{P}(\bfx_T, \rd \bfx_0)  \\
    &= p_0(\bfy_0)\, P(\bfy_0, \rd \bfx_0).
\end{align*}
This is detailed balance, so $P$ is reversible with respect to $p_0=p$.
\end{proof}

The argument does not rely on the specific form of the forward and backward processes~\eqref{equ:ou}--\eqref{equ:ou:backward}. The key property is that the forward and backward processes are constructed to be time-reversals of each other, which ensures the detailed balance condition.

However, the ideal kernel $P$ is not directly implementable, as it requires simulating the continuous-time diffusions and evaluating the score $\nabla \log p_t$ at every intermediate time. In Section~\ref{sec:algorithm}, we introduce a Metropolis--Hastings algorithm that remains reversible with respect to the target even when the diffusion paths are only approximately simulated. Before that, we analyze the spectral gap of the ideal kernel to quantify the benefit of interpolating along the diffusion path.

\subsection{Spectral gap of the diffusion-path transition kernel}

We start from a classical characterization of the spectral gap of $P$ in terms of maximal correlation. The transition kernel $P$ is the marginal chain on $X_0$ of a two-block data-augmentation Gibbs sampler on $(X_0,X_T)$, where $X_T\mid X_0\sim\vec P$ and $X_0\mid X_T\sim\leftvec P$. For such a sampler, the spectral gap can be read off from the maximal correlation between $X_0$ and $X_T$, taken under their joint law $p_0(\rd\bfx_0)\,\vec P(\bfx_0,\rd\bfx_T)$, defined by 
\begin{align*}
    \rho_{\max}(X_0,X_T)^2 = \sup_{\substack{f\in L^2(p_0),\ g\in L^2(p_T)\\ f,\,g\ \text{non-constant}}} \mathrm{Corr}\big(f(X_0),\,g(X_T)\big)^2.
\end{align*}
Writing $(Pf)(\bfx_0) := \int f(\bfy_0)\, P(\bfx_0, \rd \bfy_0)$, the spectral gap of $P$ is
\begin{align*}
    \mathrm{Gap}(P)=1-\sup_{\substack{f\in L^2(p_0)\\ \EE[p_0]{f}=0,\ f\neq 0} } \frac{\langle f, P f\rangle_{L^2(p_0)}}{\|f\|_{L^2(p_0)}^2}.
\end{align*}

\begin{lem}[Spectral gap via maximal correlation]\label{lem:gap:maximal:corr}
The spectral gap of the transition kernel $P$ satisfies
\begin{align*}
    \mathrm{Gap}(P)=1-\rho_{\max}(X_0,X_T)^2.
\end{align*}
\end{lem}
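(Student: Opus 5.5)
We will write $P$ as a product of two conditional-expectation operators and identify the spectral gap with the squared norm of one of them. Let $A: L^2(p_T)\to L^2(p_0)$ be $(Ag)(\bfx_0) = \int g(\bfx_T)\,\vec P(\bfx_0,\rd\bfx_T) = \mathbb{E}[g(X_T)\mid X_0=\bfx_0]$, and let $B: L^2(p_0)\to L^2(p_T)$ be $(Bf)(\bfx_T) = \int f(\bfy_0)\,\leftvec P(\bfx_T,\rd\bfy_0) = \mathbb{E}[f(X_0)\mid X_T=\bfx_T]$. The reversal identity $p_0(\bfx_0)\vec P(\bfx_0,\rd\bfx_T) = p_T(\bfx_T)\leftvec P(\bfx_T,\rd\bfx_0)$, used in the proof of Proposition~\ref{prop:ideal:kernel:reversible}, says that both kernels come from the same joint law $\pi(\rd\bfx_0,\rd\bfx_T)$ of $(X_0,X_T)$. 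Hence
\begin{align*}
    \langle Ag, f\rangle_{L^2(p_0)} = \mathbb{E}_\pi[f(X_0)g(X_T)] = \langle g, Bf\rangle_{L^2(p_T)},
\end{align*}
so $B=A^*$. Since $P = AB$, we get $P = BB^*$ viewed on $L^2(p_0)$, i.e.\ $\langle f,Pf\rangle_{L^2(p_0)} = \|Bf\|_{L^2(p_T)}^2\ge 0$. In particular $P$ is self-adjoint and positive.

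Next we restrict to mean-zero functions. Both $A$ and $B$ preserve constants and map mean-zero functions to mean-zero functions, because $\mathbb{E}_{p_T}[Bf] = \mathbb{E}_{p_0}[f]$ and similarly for $A$. Write $L^2_0$ for the mean-zero subspaces. Then
\begin{align*}
    \sup_{f\in L^2_0(p_0),\,f\ne 0}\frac{\langle f,Pf\rangle}{\|f\|^2} = \|B|_{L^2_0}\|^2 .
\end{align*}
It remains to identify this norm with $\rho_{\max}$. Standardizing, we may take $f\in L^2_0(p_0)$ and $g\in L^2_0(p_T)$ with unit norms, so that $\mathrm{Corr}(f(X_0),g(X_T)) = \mathbb{E}_\pi[f(X_0)g(X_T)] = \langle Bf,g\rangle_{L^2(p_T)}$. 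Taking the supremum over $g$ gives $\|Bf\|$, since $Bf\in L^2_0(p_T)$ is itself an admissible $g$ after normalization. Taking the supremum over $f$ then gives $\rho_{\max} = \|B|_{L^2_0}\|$. Non-constant $f,g$ can be reduced to mean-zero ones by centering, which changes neither the correlation nor the admissibility. Therefore $\mathrm{Gap}(P) = 1-\rho_{\max}^2$.

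The main obstacle is the measure-theoretic justification that $B$ equals the adjoint of $A$ as bounded operators on the relevant $L^2$ spaces. This follows from disintegrating $\pi$ both ways and applying Jensen's inequality, which gives $\|Bf\|\le\|f\|$ and makes $B$ bounded. A minor point is the degenerate case $Bf=0$: if $Bf=0$ for every mean-zero $f$, then both sides of the identity equal zero, so the equality still holds.
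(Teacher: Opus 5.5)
Your proposal is correct and is essentially the paper's argument: factor $P=\calT\calT^*$ through the forward and backward conditional-expectation operators, write the Rayleigh quotient as $\|\calT^* f\|^2_{L^2(p_T)}=\mathrm{Var}(\mathbb{E}[f(X_0)\mid X_T])$, and identify its supremum over mean-zero $f$ with $\rho_{\max}^2$. Your Cauchy--Schwarz step over $g$ in fact proves the identity \eqref{eq:maximal-corr-conditional-variance}, which the paper only asserts. One small slip: with $B=A^*$ you have $P=AB=B^*B$, not $BB^*$ (the latter acts on $L^2(p_T)$), though the formula you then use, $\langle f,Pf\rangle=\|Bf\|^2$, is correct.
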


The proof is given in Appendix~\ref{prf:gap:maximal:corr}.
The link between the spectral gap of a two-block Gibbs sampler and maximal correlation is classical in the analysis of data-augmentation algorithms. \citet{liu1994covariance} use it to study the covariance structure of Gibbs samplers and to compare the efficiency of different data-augmentation schemes, and the same quantity appears in the mixing theory of stationary processes and Markov chains as the $\rho$-mixing coefficient~\citep{bradley2005basic}. In the remainder of this section, we use Lemma~\ref{lem:gap:maximal:corr} to bound $\mathrm{Gap}(P)$ under increasingly general assumptions on the target: first when $p$ satisfies a Poincar\'e inequality, and then when $p$ is a mixture of such distributions.

\subsubsection{Spectral gap under a Poincar\'e inequality}

Because the forward process $(X_t)_{t\in[0,T]}$ solves the OU SDE~\eqref{equ:ou}, the forward kernel $\vec{P}$ is Gaussian,
\begin{align*}
X_T\mid X_0\sim \N\big(\sqrt{\alpha}\, X_0,\, (1-\alpha)I_d\big),\qquad \alpha:=e^{-T}.
\end{align*}
Thus $X_T$ is a contraction of $X_0$ toward the origin, perturbed by independent Gaussian noise. The parameter $\alpha\in(0,1)$ decreases as the horizon $T$ grows: it is near $1$ for small $T$, where $X_T\approx X_0$, and near $0$ for large $T$, where $X_T$ is almost independent of $X_0$. In this way, $\alpha$ controls the correlation between $X_0$ and $X_T$. When $p$ satisfies a Poincar\'e inequality, this correlation can be bounded from above, and Lemma~\ref{lem:gap:maximal:corr} converts the bound into a lower bound on the spectral gap.

\begin{thm}[Spectral gap under a Poincar\'e inequality]
\label{thm:gap:poincare}
Suppose $p$ satisfies the Poincar\'e inequality
\begin{align}\label{equ:poincare}
    \Var[p]{f} \leq C_{\mathrm P} \int \|\nabla f(\bfx)\|^2 \, p(\rd \bfx)
    \qquad\text{for all smooth } f.
\end{align}
Then
\begin{align*}
    \rho_{\max}(X_0,X_T)^2
    \le
    \frac{\alpha C_{\mathrm P}}
    {1-\alpha+\alpha C_{\mathrm P}},
    \qquad\text{so that}\qquad
    \mathrm{Gap}(P) \geq \frac{1-\alpha}{1-\alpha+\alpha C_{\mathrm {P}}}.
\end{align*}
Both bounds are tight for Gaussian targets: if $p=\N(\mu,\Sigma)$, its Poincar\'e constant is $C_{\mathrm P}=\lambda_{\max}(\Sigma)$, the largest eigenvalue of $\Sigma$, and 
\begin{align*}
    \rho_{\max}(X_0,X_T)^2 = \frac{\alpha \lambda_{\max}(\Sigma)}{1-\alpha + \alpha \lambda_{\max}(\Sigma)},
    \qquad
    \mathrm{Gap}(P)=\frac{1-\alpha}{1-\alpha+\alpha\lambda_{\max}(\Sigma)}.
\end{align*}
\end{thm}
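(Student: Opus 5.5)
By Lemma~\ref{lem:gap:maximal:corr}, it suffices to bound $\rho_{\max}(X_0,X_T)^2$. We can write the maximal correlation as the squared norm of a conditional-expectation operator:
\begin{align*}
\rho_{\max}^2=\sup\big\{\Var{\EE{g(X_T)\mid X_0}} : g\in L^2(p_T),\ \Var{g(X_T)}=1\big\}.
\end{align*}
So the plan is to fix $g$ with $\EE{g(X_T)}=0$ and $\EE{g(X_T)^2}=1$, set $f(\bfx_0):=\EE{g(X_T)\mid X_0=\bfx_0}$, and show
\begin{align*}
\Var[p]{f}\le\frac{\alpha C_{\mathrm P}}{1-\alpha+\alpha C_{\mathrm P}}.
\end{align*}
Since $X_T=\sqrt\alpha X_0+\sqrt{1-\alpha}Z$, we have $f(\bfx)=\EE{g(\sqrt\alpha\bfx+\sqrt{1-\alpha}Z)}$, which is smooth because it is a Gaussian convolution.

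\textbf{Gradient bound.} Gaussian integration by parts gives
\begin{align*}
\nabla f(\bfx)=\sqrt{\alpha/(1-\alpha)}\,\EE{g(X_T)\,Z\mid X_0=\bfx}.
\end{align*}
Moreover, subtracting $f(\bfx)$ does not change this expression, because $\EE{Z}=0$ and $Z$ is independent of $X_0$. For any unit vector $\mathbf u$, Cauchy--Schwarz then yields
\begin{align*}
|\mathbf u^\top\nabla f(\bfx)|^2\le\frac{\alpha}{1-\alpha}\,\EE{(g(X_T)-f(X_0))^2\mid X_0=\bfx}\,\EE{(\mathbf u^\top Z)^2}.
\end{align*}
The last factor equals $1$. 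Choosing $\mathbf u$ along $\nabla f(\bfx)$ gives
\begin{align*}
\|\nabla f(\bfx)\|^2\le\frac{\alpha}{1-\alpha}\,\Var{g(X_T)\mid X_0=\bfx}.
\end{align*}
The variance tensorizes this way only because we test against a single direction; this is the step to get exactly right, since bounding the full vector norm instead would cost a factor of $d$.

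\textbf{Combining with Poincar\'e.} Integrating over $p$ and applying~\eqref{equ:poincare} gives
\begin{align*}
V:=\Var[p]{f}\le C_{\mathrm P}\frac{\alpha}{1-\alpha}\,\EE{\Var{g(X_T)\mid X_0}}=C_{\mathrm P}\frac{\alpha}{1-\alpha}(1-V),
\end{align*}
where the equality is the law of total variance. Solving for $V$ gives $V\le \alpha C_{\mathrm P}/(1-\alpha+\alpha C_{\mathrm P})$. By Lemma~\ref{lem:gap:maximal:corr},
\begin{align*}
\mathrm{Gap}(P)=1-\rho_{\max}^2\ge\frac{1-\alpha}{1-\alpha+\alpha C_{\mathrm P}}.
\end{align*}
A technical point is that $g$ is only in $L^2$, not smooth. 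The Poincar\'e inequality is nevertheless applicable, since $f$ is smooth with square-integrable gradient by the bound above; if needed, we can extend it to such $f$ by density.

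\textbf{Gaussian case.} The constant $C_{\mathrm P}=\lambda_{\max}(\Sigma)$ is standard, with equality attained by linear $f$. For tightness, take $\mathbf v$ to be a top eigenvector of $\Sigma$ and use linear test functions $f(\bfx_0)=\mathbf v^\top\bfx_0$ and $g(\bfx_T)=\mathbf v^\top\bfx_T$. Then
\begin{align*}
\mathrm{Cov}\big(f(X_0),g(X_T)\big)=\sqrt\alpha\,\lambda,\qquad \Var{g(X_T)}=\alpha\lambda+1-\alpha,
\end{align*}
with $\lambda=\lambda_{\max}(\Sigma)$. Hence the squared correlation is $\alpha\lambda/(1-\alpha+\alpha\lambda)$, which matches the upper bound. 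Therefore both displayed bounds hold with equality for Gaussian targets.
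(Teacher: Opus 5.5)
Your proposal is correct and follows the paper's proof essentially step for step. It uses the same Gaussian integration-by-parts formula for $\nabla\big(\EE{g(X_T)\mid X_0=\cdot}\big)$, the same directional Cauchy--Schwarz bound $\|\nabla f\|^2\le\frac{\alpha}{1-\alpha}\Var{g(X_T)\mid X_0}$, and the same combination of the Poincar\'e inequality with the law of total variance. The only real difference is in the Gaussian tightness claim. The paper cites the fact that maximal correlation equals maximal linear correlation for jointly Gaussian vectors, whereas you exhibit the linear pair $\mathbf v\tran\bfx_0,\ \mathbf v\tran\bfx_T$ along a top eigenvector and sandwich its correlation against your upper bound with $C_{\mathrm P}=\lambda_{\max}(\Sigma)$. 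That route is slightly more self-contained and yields the same equality.
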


The proof is given in Appendix~\ref{prf:gap:poincare}. 
The result makes precise the role of the diffusion time $T$, or equivalently $\alpha=e^{-T}$, in the convergence of the ideal kernel $P$. As $T\to\infty$ ($\alpha\to0$), the correlation between $X_0$ and $X_T$ vanishes and the spectral gap tends to $1$. Conversely, as $T\to0$ ($\alpha\to1$), the maximal correlation tends to $1$ and the gap tends to $0$, since the forward step barely perturbs the current state.
Quantitatively, the lower bound on $\mathrm{Gap}(P)$ implies that the $\chi^2$-divergence contracts by a factor of $(\frac{\alpha C_{\mathrm P} }{1-\alpha + \alpha C_{\mathrm P} })^2$ per transition, which provides a mixing time bound for the ideal transition kernel $P$.

\begin{rmk}[Connection to the proximal sampler]
The ideal transition kernel $P$ coincides, up to rescaling, with the proximal sampler~\citep{lee2021structured, chen2022improved, wibisono2026mixing}. 
Under the Poincar\'e inequality, \citet[Theorem 4]{chen2022improved} showed that the $\chi^2$-divergence contracts by a factor of $\frac{1}{(1+\eta/ C_{\mathrm P})^2}$ per iteration, where $\eta=\frac{1-\alpha}{\alpha}$ is the step size. 
This is exactly the rate obtained from our spectral-gap bound in Theorem~\ref{thm:gap:poincare}.
The analysis in~\cite{chen2022improved}, however, is based on the dissipation of $\chi^2$-divergence along the forward and backward heat flows. 
In contrast, our argument uses the structural relation between the spectral gap and the maximal correlation.
This viewpoint allows us to further analyze the spectral gap without global functional inequalities, as discussed below.
\end{rmk}

\subsubsection{Spectral gap for a mixture of Poincar\'e components}

When the target is multimodal, a global Poincar\'e inequality may fail. We therefore assume a mixture structure
\[
    p=\sum_{c\in\calC} w_c\, p_c,
    \qquad w_c>0,\quad \sum_{c\in\calC} w_c=1,
\]
with each component $p_c$ a probability distribution on $\R^d$, and analyze the gap of $P$ by lifting to the augmented space $\R^d\times\calC$.
The lifted target is $\tp(\bfx,c)=w_c p_c(\bfx)$, and the lifted kernel $\widetilde P$ acts as $(X_0,C)\to X_T\to(X_0',C')$: given $(X_0,C)$ we draw $X_T\sim\N(\sqrt\alpha X_0,(1-\alpha) I_d)$, then sample $(C',X_0')$ from the posterior of $(C,X_0)\mid X_T$. The $X$-marginal of the lifted transition $\widetilde P$ is exactly $P$, 
and since $X_T$ is conditionally independent of $C$ given $X_0$,
\[
\tilde\rho_{\max}^2:=\rho_{\max}\big((X_0,C),X_T\big)^2 = \rho_{\max}(X_0,X_T)^2.
\]

The next result bounds $\tilde\rho_{\max}^2$ in terms of the squared label correlation $\rho_{\mathrm{lab}}^2:=\rho_{\max}(C,X_T)^2$ and the squared within-component correlations $\rho_c^2:=\rho_{\max}(X_0,X_T\mid C=c)^2$, the latter computed under the conditional law of $(X_0,X_T)$ given $C=c$.

\begin{prop}[Maximal-correlation bound for mixtures]
\label{prop:lifted:corr:decomposition}
With the notation above,
\[
    \tilde\rho_{\max}^2
    \leq
    \rho_{\mathrm{lab}}^2
    +
    \bigl(1-\rho_{\mathrm{lab}}^2\bigr)
    \max_{c\in\calC}\rho_c^2.
\]
Consequently,
\[
    \mathrm{Gap}(P)
    \geq
    \bigl(1-\rho_{\mathrm{lab}}^2\bigr)
    \left(1-\max_{c\in\calC}\rho_c^2\right).
\]
\end{prop}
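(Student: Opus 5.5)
We work in the lifted space and bound the maximal correlation between $(X_0,C)$ and $X_T$. Fix $g\in L^2(p_T)$ with $\EE[]{g(X_T)}=0$ and $\EE[]{g(X_T)^2}=1$. Since $\rho_{\max}$ is the operator norm of the conditional-expectation map restricted to mean-zero functions, it suffices to show
\begin{align*}
\EE[]{\bigl(\EE[]{g(X_T)\mid X_0,C}\bigr)^2}\le \rho_{\mathrm{lab}}^2+(1-\rho_{\mathrm{lab}}^2)\max_c\rho_c^2 .
\end{align*}
Indeed, the supremum over $f$ of $\mathrm{Corr}(f(X_0,C),g(X_T))^2$ equals $\EE[]{(\EE[]{g\mid X_0,C})^2}$ for such normalized $g$.

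\textbf{Orthogonal decomposition.} Write $h(\bfx,c):=\EE[]{g(X_T)\mid X_0=\bfx,C=c}$; by the conditional independence this is $(\vec P g)(\bfx)$. Split $h=m+r$, where $m(c):=\EE[]{h(X_0,C)\mid C=c}=\EE[]{g(X_T)\mid C=c}$ and $r:=h-m$ has zero conditional mean given $C$. Then $\EE[]{h^2}=\EE[]{m(C)^2}+\EE[]{r^2}$. Define also $g=\bar g+\tilde g$ with $\bar g(c):=\EE[]{g(X_T)\mid C=c}=m(c)$ and $\tilde g(X_T,C):=g(X_T)-m(C)$.

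\textbf{Between-mode term.} By the definition of $\rho_{\mathrm{lab}}$ applied to the pair $(C,X_T)$, we have $\EE[]{m(C)^2}=\EE[]{(\EE[]{g(X_T)\mid C})^2}\le \rho_{\mathrm{lab}}^2$. Moreover, $\EE[]{m(C)^2}+\EE[]{\tilde g^2}=\EE[]{g^2}=1$, so $\EE[]{\tilde g^2}=1-\EE[]{m^2}$.

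\textbf{Within-mode term.} Conditionally on $C=c$, we have $r(\cdot,c)=\EE[]{\tilde g\mid X_0,C=c}$, and $\tilde g$ has conditional mean zero given $C=c$. Applying the definition of $\rho_c$ under the conditional law gives $\EE[]{r^2\mid C=c}\le\rho_c^2\,\EE[]{\tilde g^2\mid C=c}$. Averaging over $c$ yields
\begin{align*}
\EE[]{r^2}\le \max_c\rho_c^2\,\bigl(1-\EE[]{m^2}\bigr).
\end{align*}

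\textbf{Combining.} Setting $a:=\EE[]{m^2}\in[0,\rho_{\mathrm{lab}}^2]$, we get $\EE[]{h^2}\le a+(1-a)\max_c\rho_c^2$. This expression is nondecreasing in $a$ because $\max_c\rho_c^2\le1$, so it is at most $\rho_{\mathrm{lab}}^2+(1-\rho_{\mathrm{lab}}^2)\max_c\rho_c^2$. Taking the supremum over $g$ gives the bound on $\tilde\rho_{\max}^2$, and since $\tilde\rho_{\max}^2=\rho_{\max}(X_0,X_T)^2$, Lemma~\ref{lem:gap:maximal:corr} yields $\mathrm{Gap}(P)\ge 1-\tilde\rho_{\max}^2=(1-\rho_{\mathrm{lab}}^2)(1-\max_c\rho_c^2)$.

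The main point requiring care is the within-mode step. One must check that $\EE[]{\tilde g(X_T,C)\mid X_0,C=c}$ is exactly $r(X_0,c)$, which uses $\EE[]{m(C)\mid X_0,C}=m(C)$. One must also check that the conditional maximal correlation $\rho_c$ legitimately applies to the function $\tilde g(\cdot,c)$, which is a function of $X_T$ alone for fixed $c$ and is centered under the conditional law of $X_T$ given $C=c$. A further technical detail is that the identification of squared maximal correlation with the norm of the conditional expectation holds also when $\rho_{\mathrm{lab}}$ or $\rho_c$ is defined through the supremum over non-constant functions; the degenerate cases are harmless.
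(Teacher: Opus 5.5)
Your proof is correct and is essentially the paper's argument. The paper applies the law of total variance to the residual $\mathbb E[\operatorname{Var}(g(Z)\mid X,C)]$ and lower-bounds it by $(1-\max_c\rho_c^2)(1-\rho_{\mathrm{lab}}^2)$; you instead bound the complementary explained part $\mathbb E[h^2]=\mathbb E[m(C)^2]+\mathbb E[r^2]$ from above, which is the same between/within-component split, and your monotonicity-in-$a$ step plays the role of the paper's use of $1-\max_c\rho_c^2\ge 0$.
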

The proof is given in Appendix~\ref{prf:lifted:corr:decomposition}, which uses the law of total variance to decompose the correlation into between- and within-component contributions, a classic technique in Markov-chain analysis~\citep{madras2002markov,ge2018simulated}.

Let $P_{\mathrm{lab}}$ denote the projection of $\widetilde P$ onto the label space:
$P_{\mathrm{lab}}(c,c')= \EE{\bbP(C'=c'\mid X_T) \mid C=c }$.
Let $P_c$ denote the transition kernel $X_0\to X_T\to X_0'$ under the conditional law given $C=c$.
Then Proposition~\ref{prop:lifted:corr:decomposition} can be equivalently stated as
\begin{align*}
\mathrm{Gap}(P) \geq \mathrm{Gap}(P_{\mathrm{lab}}) \cdot \min_{c\in \calC} \mathrm{Gap}(P_c).
\end{align*}
When each component $p_c$ satisfies a Poincar\'e inequality, Theorem~\ref{thm:gap:poincare} provides a lower bound on $\min_{c\in\calC}\mathrm{Gap}(P_c)$. 
Once the within-component mixing is controlled, the resulting bound is governed by $\mathrm{Gap}(P_{\mathrm{lab}})$, which measures the chain's ability to move between components. Intuitively, transitions become more difficult as the components grow further apart. The following theorem quantifies this intuition by lower-bounding $\mathrm{Gap}(P_{\mathrm{lab}})$ in terms of the Wasserstein diameter of the components.

\begin{thm}[Spectral gap for a mixture of Poincar\'e distributions]
\label{thm:mixture:poincare:gap:bound}
Let the target be $p=\sum_{c\in\calC} w_c p_c$ with $|\calC|<\infty$,
and suppose each $p_c$ satisfies the Poincar\'e
inequality~\eqref{equ:poincare} with a constant $C_{\mathrm P}$.
Then
\[
    \mathrm{Gap}(P)
    \geq
    \exp\left\{
        -\frac{\alpha}{4(1-\alpha)}
        \max_{c,c'\in\calC} W_2^2(p_c,p_{c'})
    \right\} 
    \frac{1-\alpha}{1-\alpha+\alpha C_{\mathrm P}}.
\]
In particular, the lower bound is uniform over the mixture weights $(w_c)_{c\in\calC}$.
\end{thm}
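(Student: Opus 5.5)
We start from Proposition~\ref{prop:lifted:corr:decomposition} in its equivalent form $\mathrm{Gap}(P)\ge \mathrm{Gap}(P_{\mathrm{lab}})\cdot\min_c\mathrm{Gap}(P_c)$. Each $P_c$ is exactly the ideal kernel for target $p_c$, because conditionally on $C=c$ the pair $(X_0,X_T)$ has law $p_c(\rd\bfx_0)\vec P(\bfx_0,\rd\bfx_T)$. Theorem~\ref{thm:gap:poincare} applied to each $p_c$ therefore gives $\min_c\mathrm{Gap}(P_c)\ge \frac{1-\alpha}{1-\alpha+\alpha C_{\mathrm P}}$. It remains to show
\[
\mathrm{Gap}(P_{\mathrm{lab}})\ge \exp\Bigl\{-\tfrac{\alpha}{4(1-\alpha)}\max_{c,c'}W_2^2(p_c,p_{c'})\Bigr\}.
\]

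For the label chain we use a minorization (Doeblin) argument. Write $q_c:=\mathrm{Law}(X_T\mid C=c)=p_c * \ldots$, i.e.\ the pushforward of $p_c$ through the Gaussian kernel $\N(\sqrt\alpha\,\bfx,(1-\alpha)I_d)$. Then
\[
P_{\mathrm{lab}}(c,c')=\int q_c(\bfy)\,\frac{w_{c'}q_{c'}(\bfy)}{\sum_{e}w_e q_e(\bfy)}\,\rd\bfy .
\]
This is a reversible chain with stationary law $w$. If $P_{\mathrm{lab}}(c,c')\ge \epsilon\, w_{c'}$ for all $c,c'$, then $P_{\mathrm{lab}}=\epsilon\,\mathbf 1 w^\top+(1-\epsilon)Q$ with $Q$ a $w$-reversible stochastic matrix. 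Hence every nontrivial eigenvalue is at most $1-\epsilon$, and $\mathrm{Gap}(P_{\mathrm{lab}})\ge\epsilon$, uniformly in the weights. To get such an $\epsilon$, we use
\[
P_{\mathrm{lab}}(c,c')=w_{c'}\int \frac{q_cq_{c'}}{\sum_e w_eq_e}\ge w_{c'}\min_e\int\frac{q_cq_{c'}}{q_e}\,\rd\bfy,
\]
which is not quite clean. A better route is to compare $q_c,q_{c'}$ with the mixture $\bar q=\sum_e w_e q_e$, via Cauchy--Schwarz or Jensen:
\[
\int \frac{q_cq_{c'}}{\bar q}\ge \frac{\bigl(\int\sqrt{q_cq_{c'}}\bigr)^2}{\int \bar q}=\mathrm{BC}(q_c,q_{c'})^2 .
\]
This is Cauchy--Schwarz applied to $\sqrt{q_cq_{c'}}=\sqrt{q_cq_{c'}/\bar q}\cdot\sqrt{\bar q}$, and it is uniform in $w$. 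So $\epsilon=\min_{c,c'}\mathrm{BC}(q_c,q_{c'})^2$ suffices.

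The main step is bounding the Bhattacharyya coefficient of the two Gaussian-smoothed laws by the Wasserstein distance. Let $\pi$ be an optimal $W_2$ coupling of $(p_c,p_{c'})$. Then $q_c=\int \N(\sqrt\alpha\bfx,(1-\alpha)I)\,\pi(\rd\bfx,\rd\bfx')$ and $q_{c'}$ is the analogous mixture over $\bfx'$. The Bhattacharyya coefficient $(\mu,\nu)\mapsto\int\sqrt{\mu\nu}$ is jointly concave (the geometric mean is jointly concave). By Jensen on the common mixing measure $\pi$,
\[
\mathrm{BC}(q_c,q_{c'})\ge\int \mathrm{BC}\bigl(\N(\sqrt\alpha\bfx,(1-\alpha)I),\N(\sqrt\alpha\bfx',(1-\alpha)I)\bigr)\pi(\rd\bfx,\rd\bfx')=\int e^{-\frac{\alpha\|\bfx-\bfx'\|^2}{8(1-\alpha)}}\rd\pi .
\]
A second application of Jensen, using the convexity of $\exp$, gives at least $\exp\{-\alpha W_2^2(p_c,p_{c'})/(8(1-\alpha))\}$. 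Squaring yields $\epsilon\ge\exp\{-\frac{\alpha}{4(1-\alpha)}\max W_2^2\}$, exactly the constant in the statement, and multiplying by the within-component bound completes the proof.

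The points to verify carefully are the eigenvalue argument for the Doeblin decomposition (that $Q$ is stochastic and $w$-reversible, which follows from symmetry of $w_cP_{\mathrm{lab}}(c,c')$) and the joint-concavity step. The latter I expect to be the crux, and I would check it pointwise via $\sqrt{ab}$ being concave on $\R_+^2$.
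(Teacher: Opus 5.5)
Your proposal is correct and follows the paper's proof essentially step for step. It uses the same reduction via Proposition~\ref{prop:lifted:corr:decomposition} and Theorem~\ref{thm:gap:poincare}, the same Cauchy--Schwarz bound $\lambda_{c,c'}^{(w)}\ge(\int\sqrt{q_cq_{c'}})^2$, and the same coupling-plus-concavity and Jensen argument for the Gaussian-smoothed Bhattacharyya coefficient. The only difference is cosmetic: you bound $\mathrm{Gap}(P_{\mathrm{lab}})$ with the Doeblin decomposition $P_{\mathrm{lab}}=\epsilon\mathbf 1 w^\top+(1-\epsilon)Q$ rather than the Dirichlet-form comparison of Lemma~\ref{lem:lambda:expression}; both compare the label chain to the independence chain $\mathbf 1 w^\top$, and your extra need for a diagonal minorization is harmless because $\lambda_{c,c}^{(w)}\ge 1$.
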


The proof is given in Appendix~\ref{prf:mixture:poincare:gap:bound}. The bound separates two factors governing the gap of the ideal kernel $P$: the between-component separation, quantified by the Wasserstein diameter $\max_{c,c'\in\calC}W_2(p_c,p_{c'})$, and the within-component conditioning, quantified by the Poincar\'e constant.
Notably, the bound is uniform over the mixture weights and contains no explicit dependence on the ratios of the component scales. This contrasts with tempering, where weight or scale imbalances can distort the relative mode weights along the tempering path and lead to poor mixing.
Moreover, the use of the Wasserstein diameter can be conservative, as it replaces all pairwise overlaps by the smallest one. A sharper analysis could retain the individual pairwise overlaps and bound the label-chain gap through the corresponding weighted graph on $\calC$.

\paragraph{Mixture of Gaussians with unequal variances}

We now instantiate the results on a mixture of two Gaussians with unequal variances, the target \citet{woodard2009torpid} use to show that tempering-based methods mix poorly.
Specifically, let $p_+=\N(b\bone_d, \sigma_+^2 I_d ) $, $p_-=\N(-b\bone_d, \sigma_-^2 I_d )$, and $w_+=w_-=\frac12$, with $b>0$ and $\sigma_+^2 > \sigma_-^2$.

\begin{cor}[Dimension-free ideal gap for $\alpha\sim d^{-1}$]
\label{cor:gaussian_mixture_dimension_free_gap}
Fix constants $b,\sigma_+,\sigma_-$ and set $\alpha=\kappa/d$ for a fixed $\kappa>0$. Then
\[
    \liminf_{d\to\infty}\, \mathrm{Gap}(P)
    \ge e^{-\kappa b^2}.
\]
\end{cor}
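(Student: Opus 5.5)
Plan: apply Theorem~\ref{thm:mixture:poincare:gap:bound} directly to the two-component mixture and evaluate both factors as $d\to\infty$ with $\alpha=\kappa/d$.

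\emph{Poincar\'e constant.} Each component is Gaussian, and Theorem~\ref{thm:gap:poincare} gives the Poincar\'e constant of $\N(\mu,\sigma^2 I_d)$ as $\sigma^2$. Both $p_+$ and $p_-$ therefore satisfy~\eqref{equ:poincare} with the common constant $C_{\mathrm P}=\max(\sigma_+^2,\sigma_-^2)=\sigma_+^2$, which does not depend on $d$. The within-component factor is then
\[
\frac{1-\alpha}{1-\alpha+\alpha\sigma_+^2}
=\frac{1-\kappa/d}{1-\kappa/d+\kappa\sigma_+^2/d}\;\longrightarrow\;1
\qquad (d\to\infty).
\]

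\emph{Wasserstein diameter.} For isotropic Gaussians, the optimal coupling of $p_+$ and $p_-$ is the affine map $\bfx\mapsto -b\bone_d+(\sigma_-/\sigma_+)(\bfx-b\bone_d)$. This gives
\[
W_2^2(p_+,p_-)=\|2b\bone_d\|^2+d(\sigma_+-\sigma_-)^2=d\bigl(4b^2+(\sigma_+-\sigma_-)^2\bigr).
\]
Plugging into the exponent,
\[
\frac{\alpha}{4(1-\alpha)}\,W_2^2=\frac{\kappa}{4(1-\kappa/d)}\bigl(4b^2+(\sigma_+-\sigma_-)^2\bigr)\;\longrightarrow\;\kappa b^2+\frac{\kappa(\sigma_+-\sigma_-)^2}{4}.
\]

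\emph{Main obstacle.} The Wasserstein diameter contains the scale term $d(\sigma_+-\sigma_-)^2$. Applying Theorem~\ref{thm:mixture:poincare:gap:bound} verbatim therefore yields only
\[
\liminf_{d\to\infty}\mathrm{Gap}(P)\ \ge\ e^{-\kappa b^2-\kappa(\sigma_+-\sigma_-)^2/4},
\]
which is weaker than the claimed $e^{-\kappa b^2}$ because of this extra scale factor. The result is still dimension-free, so the dimension-free conclusion holds either way.

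\emph{Removing the scale term.} To get the sharp constant, I would not use the $W_2$ bound. Instead I would go back to Proposition~\ref{prop:lifted:corr:decomposition} and bound $\mathrm{Gap}(P_{\mathrm{lab}})$ directly. The key point is the law of $X_T$ given $C=\pm$, which is
\[
X_T\mid C=\pm\ \sim\ \N\bigl(\pm\sqrt\alpha\, b\bone_d,\ (\alpha\sigma_\pm^2+1-\alpha)I_d\bigr).
\]
Both variances are $1+O(1/d)$, so the variance mismatch costs only $O(1)$ in $d$ in the overlap: in $\sum_i\log$-ratio terms, each coordinate contributes $O(\alpha^2)$ and the total is $O(\kappa^2/d)\to0$. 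The mean separation is $2\sqrt{\kappa}\,b$ in norm.

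For a two-state label chain with equal weights, the gap equals $2P_{\mathrm{lab}}(+,-)$. That quantity can be lower bounded by a Bhattacharyya/Hellinger affinity between the two laws of $X_T$: for instance, $P_{\mathrm{lab}}(+,-)=\EE{\bbP(C'=-\mid X_T)\mid C=+}$ and $\bbP(-\mid x)\ge \frac12\min\bigl(1,\,q_-(x)/q_+(x)\bigr)$. Here $q_\pm$ denote the two densities of $X_T$ given $C=\pm$. In the limit these are Gaussians with equal (identity) covariance, with the variance-mismatch factor tending to $1$. The affinity is then $\exp(-\|\Delta\mu\|^2/8)=e^{-\kappa b^2/2}$. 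Squaring via the Gibbs-sampler relation $\mathrm{Gap}(P_{\mathrm{lab}})\ge\mathrm{BC}^2$ gives $e^{-\kappa b^2}$.

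Multiplying by the within-component factor, which tends to $1$, gives the claim. I expect the delicate step to be this comparison of the label gap with the squared Bhattacharyya coefficient. It is also exactly the mechanism by which the proof of Theorem~\ref{thm:mixture:poincare:gap:bound} presumably produces its exponent, so I would mimic that proof with the explicit Gaussian overlap in place of the $W_2$ bound.
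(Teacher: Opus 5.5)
Your final argument is the paper's proof. You compute $I_\alpha=\int\sqrt{q_+q_-}$ in closed form for the noised Gaussians $\N(\pm\sqrt\alpha\,b\bone_d,a_\pm^2I_d)$, where the variance-mismatch factor is $e^{o(1)}$ and the mean factor tends to $e^{-\kappa b^2/2}$. You then use $\mathrm{Gap}(P_{\mathrm{lab}})\ge\lambda^{(w)}_{+,-}\ge I_\alpha^2$, which is Lemma~\ref{lem:lambda:expression} together with the Cauchy--Schwarz step~\eqref{equ:overlap:lower:bound}, and multiply by the within-component factor $\frac{1-\alpha}{1-\alpha+\alpha\sigma_+^2}\to1$. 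Your diagnosis is also right: quoting Theorem~\ref{thm:mixture:poincare:gap:bound} verbatim leaves the extra term $\kappa(\sigma_+-\sigma_-)^2/4$, which is exactly why the paper bypasses Lemma~\ref{lem:overlap:wasserstein} here.

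One correction concerns your illustrative bound $\bbP(C'=-\mid X_T=\bfx)\ge\frac12\min(1,q_-(\bfx)/q_+(\bfx))$. It only yields $\mathrm{Gap}(P_{\mathrm{lab}})\ge\int\min(q_+,q_-)$, and this tends to $2\Phi(-\sqrt\kappa\,b)$, which is strictly smaller than $e^{-\kappa b^2}$ when $\kappa b^2$ is small. So that route does not deliver the sharp constant. The relation $\mathrm{Gap}(P_{\mathrm{lab}})\ge I_\alpha^2$ should instead come from the exact two-state identity $\mathrm{Gap}(P_{\mathrm{lab}})=2P_{\mathrm{lab}}(+,-)=\int q_+q_-/q$, with $q=\frac12(q_++q_-)$, followed by Cauchy--Schwarz against $\int q=1$.
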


See the proof in Appendix~\ref{prf:gaussian_mixture_dimension_free_gap}.
Since $\alpha=e^{-T}$ along the OU process, taking $T=\log(d/\kappa)$ keeps the gap of $P$ bounded away from zero as $d\to\infty$.

By contrast, for this same unequal-variance mixture, \citet{woodard2009torpid} show that simulated and parallel tempering mix torpidly for any number and choice of temperatures, with a spectral gap that decays exponentially in $d$.
The cause is that tempered intermediate distributions $p^{\beta}$ ($\beta<1$) distort the relative mode weights, a phenomenon known as ``unfaithful''~\cite{neal1996sampling}, or ``low persistence''~\cite{woodard2009torpid}, or ``teleportation of mass''~\cite{mate2023learning}.
Concretely, at any inverse temperature $\beta<1$, the narrower mode $\N(-b\bone_d, \sigma_-^2 I_d)$ carries weight of order $(\sigma_-/\sigma_+)^{d(1-\beta)}$, exponentially small in $d$, so the probability of crossing from the wider mode to the narrower one is exponentially small at every intermediate level.

This contrast highlights the advantage of diffusion-path
interpolation, which preserves the mixture weights at every
intermediate time. For the Gaussian mixture above, the
asymptotic lower bound with $\alpha=\kappa/d$ depends on $b$ but not on the fixed component scales
$\sigma_+$ and $\sigma_-$.
Section~\ref{sec:experiment-mog} provides further numerical demonstration of this example.

\section{Metropolis adjustment for approximate diffusion paths}
\label{sec:algorithm}

In practice, the exact forward--backward diffusion processes described in the previous section are often intractable. 
In particular, the score function $\nabla\log p_t$ at intermediate time $t$ is typically unknown and needs to be approximated by learning-based methods; see Section~\ref{sec:practical}.
Moreover, the SDEs need to be solved numerically, which introduces discretization error. 
To correct for these errors, we propose a Metropolis--Hastings algorithm that uses the forward--backward simulation as a proposal and ensures the Markov chain has the correct invariant distribution.

We consider a general forward diffusion process $(X_t)_{t\in[0,T]}$ defined by the SDE
\begin{align}\label{equ:forward_sde}
    \rd X_t = f_t(X_t) \rd t + g_t \rd B_t,\quad X_0\sim p, \;\; t\in[0,T],
\end{align}
where $f:[0,T]\times\mathbb{R}^d \to \mathbb{R}^d$ and $g:[0,T] \to \mathbb{R}$ are drift and diffusion coefficients, respectively, and $B_t$ is a standard $d$-dimensional Brownian motion. 
Denote the marginal distribution of $X_t$ by $p_t$.
The time-reversal of the forward process is given by 
\begin{align}\label{equ:reverse_sde}
    \rd Y_t = [f_t(Y_t) - g_t^2 \nabla \log p_{t}(Y_t)] \widebar{\rd t} + g_t \rd \widebar B_t,\quad Y_T\sim p_T, \;\; t\in[T,0],
\end{align}
where $\widebar B_t$ is a standard Brownian motion when time flows backwards and $\widebar{\rd t}$ denotes an infinitesimal negative timestep.

In practice, $\nabla\log p_t$ is replaced with a learned score function $\hat s_t$ and the SDEs are simulated numerically. We describe the construction using Euler--Maruyama discretization.
Denote the backward drift as $-b_t(\bfy) = f_t(\bfy) - g_t^2 \hat s_t(\bfy)$. Let $\bfx_0\in\R^d$ be the current state of the Markov chain. We first discretize the forward SDE via Euler--Maruyama with step size $h$:
\begin{align*}
    \bfx_{k+1} = \bfx_{k} + f_{t_{k}}(\bfx_{k}) h + g_{t_{k}} \sqrt{h} \ep_k,\quad k=0,\ldots,N-1,
\end{align*}
where $t_k = kh$ and $\ep_k\iid\N(0, I_d)$.
Then, starting from $\bfy_N=\bfx_N$, we discretize the backward SDE as
\begin{align*}
    \bfy_{k} = \bfy_{k+1} + b_{t_k}(\bfy_{k+1}) h + g_{t_{k}} \sqrt{h} \zeta_k,\quad k=N-1,\ldots,0,
\end{align*}
where $\zeta_k\iid\N(0, I_d)$. Throughout, let $\varphi(\bfx; \bmu, \Sigma)$ denote the density of a multivariate Gaussian distribution with mean $\bmu$ and covariance matrix $\Sigma$, evaluated at $\bfx$. We treat $\bfy_0$ as the proposal and accept it with the Metropolis--Hastings acceptance probability
\begin{align}\label{equ:accept:prob}
    \frac{p(\bfy_0)}{p(\bfx_0) }\cdot \frac {\prod_{k=0}^{N-1}\leftvec{p}_{k}(\bfx_{k}\mid \bfx_{k+1})\cdot \prod_{k=0}^{N-1} \vec{p}_{k}(\bfy_{k+1}\mid \bfy_{k}) } {\prod_{k=0}^{N-1}\vec{p}_{k}(\bfx_{k+1}\mid \bfx_{k}) \cdot \prod_{k=0}^{N-1} \leftvec{p}_{k}(\bfy_{k}\mid \bfy_{k+1} ) } \wedge 1,
\end{align}
where 
\begin{align*}
    \vec{p}_{k}(\bfx_{k+1}\mid\bfx_{k}) &= \varphi(\bfx_{k+1}; \bfx_{k} + f_{t_{k}}(\bfx_{k}) h,\, g_{t_{k}}^2 h I_d),\\
    \leftvec{p}_{k}(\bfy_{k}\mid \bfy_{k+1}) &= \varphi(\bfy_{k}; \bfy_{k+1} + b_{t_{k}}(\bfy_{k+1}) h,\, g_{t_{k}}^2 h I_d).
\end{align*}
The resulting algorithm, named Metropolis-adjusted diffusion path (MAD-Path), is summarized in Algorithm~\ref{alg:mad-path} and illustrated in Figure~\ref{fig:illustration}.

\begin{algorithm}[]
\caption{One Iteration of MAD-Path}\label{alg:mad-path}
\begin{algorithmic}[1]
\Require {Target density $p_0$, learned score $\hat{s}_t$, drift $f$, diffusion $g$, time horizon $T$, number of steps $N$, current state $\mathbf{x}_0$}
\Ensure Next state $\bfx_0'$

\State $h \gets T / N$ 
\Statex \textit{Forward pass: run Euler--Maruyama on forward SDE}
\For{$k = 0, \ldots, N-1$}
    \State Sample $\varepsilon_k \sim \mathcal{N}(0, I_d)$
    \State $\mathbf{x}_{k+1} \gets \mathbf{x}_k + f_{t_k}(\mathbf{x}_k)\, h + g_{t_k}\sqrt{h}\, \varepsilon_k$ \Comment{$t_k = kh$}
\EndFor

\Statex \textit{Backward pass: run Euler--Maruyama on backward SDE}
\State $\mathbf{y}_N \gets \mathbf{x}_N$
\For{$k = N-1, \ldots, 0$}
    \State Sample $\zeta_k \sim \mathcal{N}(0, I_d)$
    \State $b_{t_k}(\mathbf{y}_{k+1}) \gets -f_{t_k}(\mathbf{y}_{k+1}) + g_{t_k}^2\, \hat{s}_{t_k}(\mathbf{y}_{k+1})$
    \State $\mathbf{y}_k \gets \mathbf{y}_{k+1} + b_{t_k}(\mathbf{y}_{k+1})\, h + g_{t_k}\sqrt{h}\, \zeta_k$
\EndFor

\Statex\textit{Metropolis--Hastings adjustment}
\State Set $\bfx_0' = \bfy_0$ with probability \eqref{equ:accept:prob}
and $\bfx_0' = \bfx_0$ otherwise.
\end{algorithmic}
\end{algorithm}

\begin{figure}[]
    \centering
    \includegraphics[width=\linewidth]{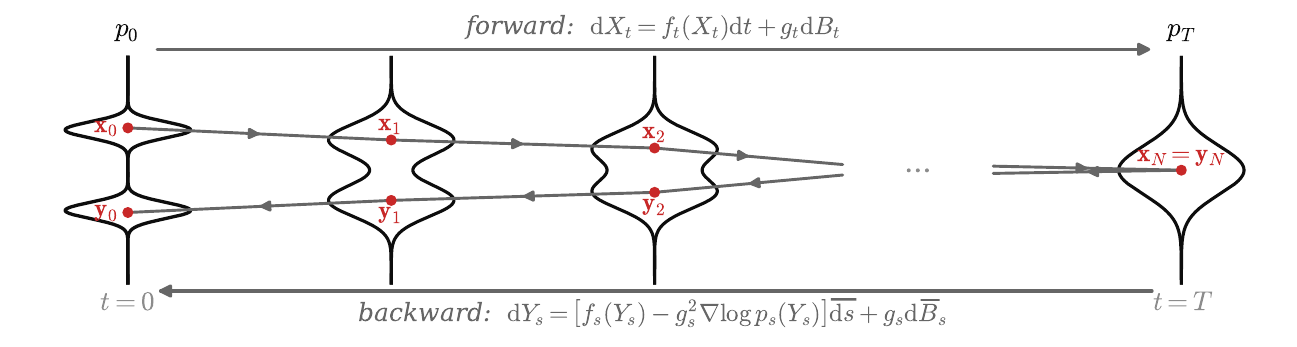}
    \caption{Illustration of MAD-Path (Algorithm~\ref{alg:mad-path}). From the current state $\bfx_0\sim p_0$, the forward process (top) generates $\bfx_1,\ldots,\bfx_N$ over $N$ discretization steps, transporting the multimodal target $p_0$ (left) toward the approximately Gaussian marginal $p_T$ (right). The backward process (bottom) starts from the shared endpoint $\bfy_N=\bfx_N$ and generates $\bfy_{N-1},\ldots,\bfy_0$, returning $\bfy_0$ as the proposed state. By smoothing the well-separated modes together near the noisy end ($t=T$), this forward--backward construction enables jumping between modes.}
    \label{fig:illustration}
\end{figure}

MAD-Path conceptually resembles Neal's tempered transition~\citep{neal1996sampling}, which likewise uses a sequence of intermediate distributions to interpolate between the target and an easier-to-sample distribution. Tempered transition first moves the candidate state forward to a ``higher temperature'', where the state space is easier to explore, and then moves it backward to the original ``lower temperature'', at which the target distribution is recovered.
The candidate is accepted with a Metropolis--Hastings acceptance probability that involves all intermediate states.
\citet{neal1996sampling} presents tempering as the canonical example of such interpolation, but notes that ``it may be appropriate to define a series of interpolating distributions in some other fashion, guided by whatever knowledge we may have of the properties of the distribution of interest.''
In this sense, our algorithm can be understood as a tempered transition guided by the forward--backward diffusion processes, which provides a more favorable interpolation than tempering, as discussed in Section~\ref{sec:ideal:transition}.

MAD-Path can also be interpreted as a transition on the augmented space of \sloppy{$\bfz:=(\bfx_0,\ldots,\bfx_N, \bfy_{N-1},\ldots,\bfy_0) $}, equipped with the joint distribution
\begin{align*}
    \Pi(\bfz) =p(\bfx_0)\cdot \prod_{k=0}^{N-1}\vec{p}_{k}(\bfx_{k+1}\mid \bfx_{k}) \cdot \prod_{k=0}^{N-1} \leftvec{p}_{k}(\bfy_{k}\mid \bfy_{k+1} ).
\end{align*}
From the current state $\bfx_0$, the forward and backward passes draw the remaining components $(\bfx_1,\ldots,\bfx_N,\bfy_{N-1},\ldots,\bfy_0)$ exactly from the conditional $\Pi(\cdot\mid \bfx_0)$. The sampler then proposes a move by applying the deterministic path-reversal map $\mathsf{R}: \R^{d(2N+1)} \to \R^{d(2N+1)}$,
\[
    \mathsf R(\bfx_0,\ldots,\bfx_N, \bfy_{N-1},\ldots,\bfy_0)
    :=(\bfy_0,\ldots,\bfy_{N-1},\bfx_N,
       \bfx_{N-1},\ldots,\bfx_0),
\]
which swaps the roles of the forward and backward paths. Because $\mathsf{R}$ is a volume-preserving involution, i.e. $\mathsf{R}\circ\mathsf{R}=\mathrm{id}$ and $\sfR$ preserves Lebesgue measure, the Metropolis acceptance probability is simply $\frac{\Pi(\mathsf{R}\bfz)}{\Pi(\bfz)} \wedge 1$, recovering Equation~\eqref{equ:accept:prob}.

The following theorem establishes that MAD-Path (Algorithm~\ref{alg:mad-path}) is reversible with respect to $p$, and hence admits $p$ as its invariant distribution. Notably, this holds \emph{regardless of the score and discretization errors}, which affect only the efficiency of the sampler, not its correctness.

\begin{thm}[Reversibility of MAD-Path transition kernel]
\label{thm:reversibility}
The MAD-Path transition kernel is reversible with respect to $p$.
\end{thm}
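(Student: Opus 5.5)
We prove Theorem~\ref{thm:reversibility} by working on the augmented path space and then marginalizing, following the interpretation given just before the statement. Write $\bfz=(\bfx_0,\ldots,\bfx_N,\bfy_{N-1},\ldots,\bfy_0)\in\R^{d(2N+1)}$ and let $\Pi$ be the joint density defined above. Its $\bfx_0$-marginal is $p$, because each forward factor $\vec p_k(\cdot\mid\bfx_k)$ and each backward factor $\leftvec p_k(\cdot\mid\bfy_{k+1})$ is a normalized Gaussian density and can be integrated out in turn. We take $\bfy_{N-1}$ first, then continue down to $\bfy_0$, and then integrate $\bfx_N,\ldots,\bfx_1$. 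The $\bfy_0$-marginal is also $p$. This follows because $\Pi\circ\mathsf R$ has the same structural form with the roles of the two paths exchanged, as we show next.

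The core step is to define the augmented kernel $K(\bfz,\rd\bfz')=\alpha(\bfz)\,\delta_{\mathsf R\bfz}(\rd\bfz')+(1-\alpha(\bfz))\,\delta_{\bfz}(\rd\bfz')$, where $\alpha(\bfz)=\frac{\Pi(\mathsf R\bfz)}{\Pi(\bfz)}\wedge 1$. We then check detailed balance for $K$ with respect to $\Pi$. The rejection part is trivially symmetric. For the move part, take a test function $\phi(\bfz,\bfz')$. We use that $\mathsf R$ is an involution and preserves Lebesgue measure: it is a coordinate permutation, so its Jacobian determinant is $\pm1$. A change of variables $\bfz\mapsto\mathsf R\bfz$ then gives
\begin{align*}
\int \phi(\bfz,\mathsf R\bfz)\,\bigl(\Pi(\bfz)\wedge\Pi(\mathsf R\bfz)\bigr)\,\rd\bfz=\int \phi(\mathsf R\bfz,\bfz)\,\bigl(\Pi(\mathsf R\bfz)\wedge\Pi(\bfz)\bigr)\,\rd\bfz,
\end{align*}
which is exactly detailed balance. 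We must also verify that $\alpha$ in this form coincides with \eqref{equ:accept:prob}. Substituting $\mathsf R\bfz$ into $\Pi$ gives $\Pi(\mathsf R\bfz)=p(\bfy_0)\prod_k\vec p_k(\bfy_{k+1}\mid\bfy_k)\prod_k\leftvec p_k(\bfx_k\mid\bfx_{k+1})$. This holds because the reversed path has forward part $(\bfy_0,\ldots,\bfy_{N-1},\bfx_N)$, using $\bfy_N=\bfx_N$, and backward part $(\bfx_{N-1},\ldots,\bfx_0)$. The ratio is therefore \eqref{equ:accept:prob}.

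Finally, we pass to the marginal chain. One MAD-Path step from $\bfx_0$ consists of two moves. First we refresh the auxiliary coordinates exactly from $\Pi(\cdot\mid\bfx_0)$, which is the forward--backward simulation. Then we apply $K$ and read off the first block. The MAD-Path kernel is $Q(\bfx_0,A)=\int \Pi(\rd\bfz_{-0}\mid\bfx_0)\,K(\bfz,\{\bfz':\bfz'_0\in A\})$. For sets $A,B$, the joint measure $p(\rd\bfx_0)Q(\bfx_0,\rd\bfx_0')$ restricted to $A\times B$ equals $\int \Pi(\rd\bfz)K(\bfz,\rd\bfz')\mathbf 1_A(\bfz_0)\mathbf 1_B(\bfz'_0)$. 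By detailed balance of $K$, this quantity is symmetric in $(A,B)$. Hence $p(\rd\bfx)Q(\bfx,\rd\bfy)=p(\rd\bfy)Q(\bfy,\rd\bfx)$.

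The main subtlety is measure-theoretic rather than conceptual. We must ensure the ratio $\Pi(\mathsf R\bfz)/\Pi(\bfz)$ is well defined, which holds since the Gaussian factors are strictly positive and $\Pi(\bfz)>0$ almost surely under $\Pi$. Points where $p(\bfy_0)=0$ give acceptance probability $0$. We also need to state detailed balance as an equality of measures rather than densities, since $K$ has atoms. Notably, the argument never uses that $\hat s_t$ approximates the true score or that $h$ is small, which is why the result holds regardless of score and discretization error.
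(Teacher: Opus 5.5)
Your proof is correct and is essentially the paper's argument: the paper writes $\int_A p(\rd\bfx_0)\,Q(\bfx_0,B)$ directly as the accept and reject integrals against $\Pi$ and uses the change of variables $\bfw=\mathsf R\bfz$ to show symmetry in $(A,B)$, which is the same as your detailed balance for $K$ followed by marginalization to the first block. One unused side remark should be removed because it is false: the $\bfy_0$-marginal of $\Pi$ is the proposal law, not $p$ (otherwise no correction would be needed); only the density $\Pi\circ\mathsf R$ has last-block marginal $p$. Also, to get the $\bfx_0$-marginal, integrate out the leaves $\bfy_0,\bfy_1,\ldots,\bfy_{N-1}$ first, not $\bfy_{N-1}$ first.
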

\begin{proof}[Proof of Theorem~\ref{thm:reversibility}]
For two measurable sets $A,B\subset\R^d$, the transition probability from $A$ to $B$ can be written as
\begin{align}\label{equ:reversible:prf}
    \int \Indc{\bfx_0\in A,\bfy_0\in B} \Pi(\bfz) \cdot \Big(\frac{\Pi(\mathsf{R}\bfz)}{\Pi(\bfz)} \wedge 1 \Big) \rd \bfz 
    + \int \Indc{\bfx_0\in A\cap B } \Pi(\bfz) \cdot \Big(1 - \frac{\Pi(\mathsf{R}\bfz)}{\Pi(\bfz)} \wedge 1 \Big) \rd \bfz.
\end{align}
Writing $\bfz_0=\bfx_0$ and $\bfz_{-1}=\bfy_0$ for the first and last components of $\bfz$, the first term equals
\begin{align*}
    \int \Indc{\bfx_0\in A,\bfy_0\in B} \Pi(\bfz) \cdot \Big(\frac{\Pi(\mathsf{R}\bfz)}{\Pi(\bfz)} \wedge 1 \Big) \rd \bfz &= 
    \int \Indc{\bfz_0\in A,\bfz_{-1}\in B}\cdot \big(\Pi(\bfz) \wedge \Pi(\mathsf{R}\bfz) \big) \rd \bfz \\
    &=\int \Indc{\bfw_{-1}\in A,\bfw_{0}\in B}\cdot \big(\Pi(\mathsf{R}\bfw) \wedge \Pi(\bfw) \big) \rd \bfw \\
    &=\int \Indc{\bfw_{0}\in B,\bfw_{-1}\in A}\cdot \big(\Pi(\bfw) \wedge \Pi(\mathsf{R}\bfw) \big) \rd \bfw ,
\end{align*}
where the second equality applies the change of variable $\bfw=\mathsf{R}\bfz$ and uses that $\mathsf{R}$ is a volume-preserving involution. Thus, the last display is symmetric in $A$ and $B$. The second term in~\eqref{equ:reversible:prf} is also symmetric in $A$ and $B$ by definition. Hence, the transition kernel is reversible.

\end{proof}

This result is not limited to Euler--Maruyama discretization. For other numerical integrators, provided that $\vec p_k$ and $\leftvec p_k$ in~\eqref{equ:accept:prob} are the densities of the transitions actually used to generate the paths, the transition kernel remains reversible.

\section{Analysis of acceptance probability and tuning guidelines}
\label{sec:analysis}

Although MAD-Path (Algorithm~\ref{alg:mad-path}) is guaranteed to have the correct invariant distribution, the acceptance probability is affected by discretization error and score-estimation error. We now analyze how these errors affect the acceptance probability~\eqref{equ:accept:prob}. 

\subsection{Effects of discretization error}

We first isolate the effect of time discretization by assuming that the exact
score
\[
    s_t(\bfx):=\nabla\log p_t(\bfx), \qquad t\in[0,T]
\]
is available. 
Denote the log Metropolis--Hastings ratio with step size $h$ by
\begin{align}\label{equ:def:Delta}
    \Delta_{h}
    :=\log \left[\frac{p(\bfy_0)}{p(\bfx_0)}\cdot \frac {\prod_{k=0}^{N-1}\leftvec{p}_{k}(\bfx_{k}\mid \bfx_{k+1})\cdot \prod_{k=0}^{N-1} \vec{p}_{k}(\bfy_{k+1}\mid \bfy_{k}) } {\prod_{k=0}^{N-1}\vec{p}_{k}(\bfx_{k+1}\mid \bfx_{k}) \cdot \prod_{k=0}^{N-1} \leftvec{p}_{k}(\bfy_{k}\mid \bfy_{k+1} ) } \right].
\end{align}
Thus, the acceptance probability is $1\wedge e^{\Delta_h}$.
The analysis below concerns the Euler--Maruyama scheme. Remark~\ref{rmk:exponential_integrator} gives the corresponding
result for an exponential integrator.

Throughout this section, the forward marginals $p_t$ are strictly positive and satisfy the Fokker--Planck equation $\partial_t p_t=-\nabla\cdot(f_tp_t)+\tfrac12\Delta(g_t^2p_t)$; here $\partial_t$ denotes the time derivative, while $\nabla$, $\nabla\cdot$, $\Delta$, and all higher-order operators act on the spatial variable $\bfx$. 
We write $L_t(\bfx):=\log p_t(\bfx)$, so that $s_t=\nabla L_t$, and let $Y$ denote the exact reverse diffusion initialized at $Y_T=X_T$ and driven, conditionally on $X_T$, by reverse-time Brownian motion independent of the forward path. 
We assume that the Jacobian $\nabla f_t$ is symmetric for every $t$, as holds whenever $f_t$ is a gradient field.
For integers $r,s\ge0$, let $C^{r,s}$ denote the space of (scalar- or $\R^d$-valued) functions on $[0,T]\times\R^d$ whose mixed derivatives up to order $r$ in time and order $s$ in space are continuous and grow at most polynomially in $\bfx$, uniformly in $t$.
We impose the following regularity conditions on the coefficients and the Euler--Maruyama scheme.

\begin{assumption}[Regularity of the exact forward--backward diffusion]\label{assump:regularity}
There exists $h_0>0$ such that the following holds.
\begin{enumerate}[(i)]
    \item \emph{Coefficients.} $g\in C^1([0,T])$ is bounded away from zero and infinity.
    \item \emph{Smoothness and growth.} $f\in C^{1,3}$ and $L\in C^{2,4}$.
    \item \emph{Well-posedness and moments.} The forward and reverse SDEs have unique solutions, and for every $q\ge2$ the solutions $X_t,Y_t$ and the Euler iterates $\bfx_k,\bfy_k$ have $L^q$ norms bounded uniformly in $t$, $k$, and $0<h\le h_0$.
    \item \emph{Strong error.} For every $q\ge2$, the Euler scheme can be coupled to the exact solution through shared Brownian increments so that
    \begin{equation}
    \label{eq:exact-score-strong-error}
        \max_{0\leq k\leq N}
        \left(
            \|\bfx_k-X_{t_k}\|_{L^q}
            +
            \|\bfy_k-Y_{t_k}\|_{L^q}
        \right)
        \leq C_q\,h^{1/2},
        \qquad 0<h\leq h_0.
    \end{equation}
\end{enumerate}
Conditions (iii)--(iv) hold under global Lipschitz and linear-growth assumptions on the forward and reverse drifts together with (i)--(ii); see, for example, \cite[Chapter 10]{kloeden1992numerical}. 
Some of these conditions may be stronger than necessary; we adopt them to keep the analysis self-contained and to avoid technical digressions. 
\end{assumption}

Define 
\begin{equation*}
    \mathcal D(T)
    :=
    4\int_0^T
    \mathbb E\!\left[
        \left\|\nabla f_t(X_t)
        -\frac{g_t^2}{2}\nabla s_t(X_t)\right\|_{\mathrm F}^2
    \right]\,\rd t,
\end{equation*}
where the expectation is taken over $(X_t)_{t\in[0,T]}$ satisfying the SDE~\eqref{equ:forward_sde}.
In the special case of the OU process~\eqref{equ:ou}, where $f_t(\bfx)=-\tfrac12\bfx$ and $g_t=1$, this reduces to $\calD(T)=\int_0^T\EE{\|\nabla s_t(X_t)+I_d\|_{\mathrm F}^2}\,\rd t$, the accumulated squared deviation of the log-density Hessian $\nabla s_t=\nabla^2\log p_t$ from $-I_d$, its value at the Gaussian equilibrium $\N(0,I_d)$.

\begin{thm}[Effects of discretization error on acceptance probability]
\label{thm:exact-score-discretization}
Suppose the exact score $s_t$ is available for $t\in[0, T]$.
Under Assumption~\ref{assump:regularity}, and under stationarity $\bfx_0\sim p_0$, as $h\to0$,
\begin{align*}
    \mathbb E[\Delta_h] &= -\frac{h}{2}\mathcal D(T)+O(h^{3/2}),\\
    \operatorname{Var}(\Delta_h) &= h\mathcal D(T)+O(h^{3/2}).
\end{align*}
\end{thm}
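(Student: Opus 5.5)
The plan is to write $\Delta_h$ as a sum of per-step contributions plus the boundary term $\log p(\bfy_0)-\log p(\bfx_0)$. The boundary term will be telescoped along both paths using the exact log-densities $L_t$. Concretely, write
\[
L_0(\bfy_0)-L_0(\bfx_0)=\sum_{k=0}^{N-1}\big[L_{t_k}(\bfx_k)-L_{t_{k+1}}(\bfx_{k+1})\big]+\sum_{k=0}^{N-1}\big[L_{t_{k+1}}(\bfy_{k+1})-L_{t_k}(\bfy_k)\big],
\]
which uses $\bfy_N=\bfx_N$. Then $\Delta_h=\sum_k (a_k+b_k)$, where $a_k$ collects the forward-step terms $L_{t_k}(\bfx_k)-L_{t_{k+1}}(\bfx_{k+1})+\log\leftvec p_k(\bfx_k\mid\bfx_{k+1})-\log\vec p_k(\bfx_{k+1}\mid\bfx_k)$, and $b_k$ is the analogous backward-step quantity.

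For each step, substitute $\bfx_{k+1}=\bfx_k+f h+g\sqrt h\,\ep_k$ and Taylor expand everything to order $h^{3/2}$:
\begin{itemize}
\item the Gaussian log-ratio, whose quadratic terms cancel to leading order;
\item $L_{t_{k+1}}(\bfx_{k+1})$, expanded in time and space.
\end{itemize}
In continuous time the leading drift terms cancel exactly: the reversal identity $b_t=-f_t+g_t^2 s_t$, together with the Fokker--Planck equation for $\partial_t L_t$, is what makes the exact kernel detailed-balanced. What survives per step is a mean-zero Gaussian fluctuation of size $h$, plus a deterministic $O(h^2)$ remainder.

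The surviving fluctuation comes from the mismatch between $\nabla b_t$ evaluated at $\bfx_{k+1}$ and at $\bfx_k$, i.e.\ from the Hessian-type quantity $M_t:=\nabla f_t-\tfrac{g_t^2}{2}\nabla s_t$ (using that $\nabla f_t$ is symmetric). I expect the forward-step term to be
\[
a_k=-h\big(\ep_k^\top M_{t_k}(\bfx_k)\ep_k-\operatorname{tr}M_{t_k}(\bfx_k)\big)+r_k,
\]
with $\mathbb E r_k=O(h^{2})$ after summation effects. The backward-step term $b_k$ has the same form with $\zeta_k$ in place of $\ep_k$ and $\bfy_{k+1}$ in place of $\bfx_k$.

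Given these forms, the variance follows from $\operatorname{Var}(\ep^\top M\ep)=2\|M\|_F^2$. Each step then contributes $2h^2\|M\|_F^2$ from the forward pass and the same from the backward pass, so summing over the $N=T/h$ steps gives $4h\int_0^T\mathbb E\|M_t\|_F^2\,\rd t=h\mathcal D(T)$. The cross-step covariances vanish because the centered quadratic forms are martingale increments.

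For the mean, I will not compute $O(h)$ bias terms directly. Instead I will use the identity $\mathbb E[e^{\Delta_h}]=1$. This holds because $\Pi(\mathsf R\bfz)/\Pi(\bfz)$ integrates to one under $\Pi$, since $\mathsf R$ is a volume-preserving involution. Expanding this identity to second order, once we know $\Delta_h=O_{L^q}(h^{1/2})$ with third cumulant $O(h^{3/2})$, forces $\mathbb E\Delta_h=-\tfrac12\operatorname{Var}\Delta_h+O(h^{3/2})$, which yields the stated mean.

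The remaining analytic work has two parts:
\begin{itemize}
\item replacing Euler iterates by exact diffusion values inside $\mathbb E\|M_t\|_F^2$, which costs $O(h^{1/2})$ relative error by the strong-error bound~\eqref{eq:exact-score-strong-error} and the polynomial growth in Assumption~\ref{assump:regularity};
\item noting that the backward path, being the exact reversal, has the same marginals $p_t$, so the backward contribution equals the forward one.
\end{itemize}

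I expect the main obstacle to be the per-step expansion. The required cancellations only hold once the $\partial_t L_t$ term is rewritten through the Fokker--Planck equation, and the expansion also produces $O(h^{3/2})$ odd-moment terms (cubic in $\ep_k$ with third derivatives of $L$). These must be shown either to have mean zero or to sum to $O(h^{3/2})$ in the variance, which requires careful bookkeeping with the moment bounds in $C^{1,3}$ and $C^{2,4}$.
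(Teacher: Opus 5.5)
Your plan follows the paper's proof closely. The shared steps are: per-step expansions of the forward and backward contributions; the log-Fokker--Planck identity to cancel the $O(h^{1/2})$ and drift terms; the centered quadratic forms $hQ_{A_k}$ as the leading martingale; Riemann sums via the strong-error bound and the fact that the reverse path has marginals $p_t$; and $\mathbb E[e^{\Delta_h}]=1$ for the mean. Two things need repair, and the second is a genuine missing step.

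First, your telescoping identity has the wrong sign: its right-hand side equals $L_0(\bfx_0)-L_0(\bfy_0)$. With your $a_k$, the Gaussian log-ratio and $L_{t_k}(\bfx_k)-L_{t_{k+1}}(\bfx_{k+1})$ each contribute $-\sqrt h\,g_{t_k}s_{t_k}(\bfx_k)^\top\bfep_k$. The $O(h^{1/2})$ terms therefore double instead of cancelling. The forward step must carry $L_{t_{k+1}}(\bfx_{k+1})-L_{t_k}(\bfx_k)$, and the backward step $L_{t_k}(\bfy_k)-L_{t_{k+1}}(\bfy_{k+1})$. Relatedly, the per-step remainder is not deterministic: besides a random $O_{L^4}(h^2)$ term there is $h^{3/2}U_k$, odd in the innovation. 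What you need is its conditional centering given $\bfx_k$ (resp.\ $\bfy_{k+1}$), so that these terms sum as martingale differences to $O_{L^2}(h)$. A triangle-inequality bound gives only $O(h^{1/2})$, and Cauchy--Schwarz would then bound its covariance with the leading martingale only by $O(h)$, the same order as $h\mathcal D(T)$.

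Second, the mean step does not follow from $L^q$ bounds on $\Delta_h$. Expanding $\mathbb E[e^{\Delta_h}]=1$ to second order leaves a remainder bounded by $\tfrac16\mathbb E[|\Delta_h|^3(1+e^{\Delta_h})]$. Polynomial moments cannot control $\mathbb E[e^{\Delta_h}|\Delta_h|^3]$: an event of negligible probability on which $\Delta_h$ is very large can carry an order-one share of $\mathbb E[e^{\Delta_h}]$. The third cumulant is not the relevant quantity, since $\mathbb E|\Delta_h|^3=O(h^{3/2})$ is automatic from your moment bounds. The paper closes this with the path-reversal identity $\mathbb E[e^{\Delta_h}\psi(\Delta_h)]=\mathbb E[\psi(-\Delta_h)]$ for nonnegative $\psi$. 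It comes from the same change of variables $\bfz\mapsto\mathsf R\bfz$ you already use, together with $\Delta_h(\mathsf R\bfz)=-\Delta_h(\bfz)$. Taking $\psi(x)=|x|^3$, your second-order expansion gives $\mathbb E\Delta_h=-\tfrac12\mathbb E\Delta_h^2+O(h^{3/2})$. Hence $\mathbb E\Delta_h=O(h)$ and $\mathbb E\Delta_h^2=\operatorname{Var}(\Delta_h)+O(h^2)$, which is all the theorem needs. The paper instead expands to third order, using $\psi(x)=|x|^4$ and $\psi(x)=x^2$ to show $\mathbb E\Delta_h^3=O(h^2)$, and so obtains the sharper relation with error $O(h^2)$.
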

The proof is given in Appendix~\ref{prf:exact-score-discretization}.
In particular, the variance of the log Metropolis--Hastings ratio is linear in $h$ when $\calD(T)>0$, while its mean is asymptotically minus one-half of that variance.
Consequently, the rejection probability satisfies
\[
1-\EE{e^{\Delta_h}\wedge 1}\leq \EE{|\Delta_h|} = O(h^{1/2}).
\]
In the continuous-time limit $h\downarrow 0$, the discretized proposal converges to the ideal diffusion-path kernel defined in Section~\ref{sec:ideal:transition}, which has acceptance probability $1$.
The discretization error therefore induces a rejection probability that is at most $O(h^{1/2})$ and a trade-off between acceptance and computational cost: smaller $h$ raises the acceptance probability but requires more discretization steps, and hence higher cost per proposal.

\begin{rmk}[High-dimensional scaling for product i.i.d.~targets]
\label{rem:exact-score-product-scaling}
To understand the trade-off between acceptance probability and computational cost, we adopt the high-dimensional optimal-scaling framework for product-form targets~\citep{roberts1997weak,roberts2001optimal}. We restrict attention to product-form targets for analytical tractability, although analogous arguments can be extended to more general target distributions \citep{Yang2020}. Specifically, following the same setting as in \cite{beskos2013optimal}, we assume that the target consists of $n$ independent copies of the $d$-dimensional component analyzed above, so the total dimension is $nd$.
For each $n$, the log Metropolis--Hastings ratio decomposes as
\[
    \Delta_n=\sum_{j=1}^n\Delta_{n,j},
\]
where each $\Delta_{n,j}$ is as defined in Equation~\eqref{equ:def:Delta}, and under stationarity, are i.i.d. for $1\leq j\leq n$. 
By Theorem~\ref{thm:exact-score-discretization}, each component has mean and variance of order $h$. Thus, a non-degenerate limit for $\Delta_n$ requires $h\asymp n^{-1}$. Fix $T$, and let $h_n=\ell/n$, $N_n=\lceil Tn/\ell\rceil$. 
By Theorem~\ref{thm:exact-score-discretization} and the Lyapunov central limit theorem, as $n\to\infty$ we have
\[
    \Delta_n
    \ \Rightarrow\
    \N\Big(-\frac{\ell\mathcal D(T)}{2}, \ell\mathcal D(T)\Big),
\]
and hence the acceptance probability has the limit
\[
    \mathbb E[1\wedge e^{\Delta_n}]
    \to
    a(\ell)
    :=
    2\Phi\Big(
        -\frac{\sqrt{\ell\mathcal D(T)}}{2}
    \Big),
\]
where $\Phi$ is the standard normal cumulative distribution function. Since the cost of a proposal is proportional to $N_n\asymp n/\ell$, the limiting cost-normalized acceptance probability is proportional to $\ell\,\Phi(-\sqrt{\ell\mathcal D(T)}/2)$. 
At the maximizer $\ell^*=\arg\max_{\ell}\ell\,\Phi(-\sqrt{\ell\mathcal D(T)}/2)$, the acceptance probability is approximately $0.234$. 
Figure~\ref{fig:optimal-scaling} illustrates this scaling numerically for a $500$-dimensional i.i.d.\ Gaussian target with exact score, where the empirically optimal acceptance rate is close to $0.234$.

The value $0.234$ coincides numerically with the classical optimum for random-walk Metropolis~\citep{roberts1997weak}. This coincidence does not, however, imply that the proposed algorithm exhibits random-walk behavior; the fixed-$T$ forward--backward construction remains a global proposal. 
The scaling $h_n\asymp n^{-1}$ is needed solely to control the accumulation of discretization error across the $n$ independent components, each contributing a perturbation of variance $O(h)$ to the log-ratio.
For comparison, in HMC each component contributes a perturbation of variance $O(h^4)$, giving the $h_n\asymp n^{-1/4}$ scaling~\citep{beskos2013optimal}.
The more favorable step-size scaling of HMC arises because it discretizes deterministic Hamiltonian dynamics with the second-order leapfrog integrator, which is more accurate than the Euler--Maruyama scheme used here. 
\end{rmk}

\begin{figure}[htbp]
    \centering
    \includegraphics[width=0.5\linewidth]{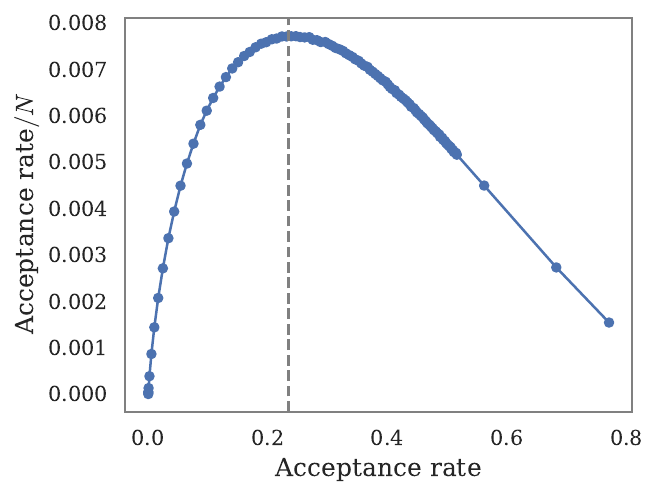}
    \caption{Cost-normalized efficiency versus acceptance rate for the target $\N(10\cdot\boldsymbol{1}_d,\,3^2 I_d)$ in dimension $d=500$, with exact score and horizon $T=2$, as the step size $h$ is varied. The $x$-axis is the empirical acceptance rate; the $y$-axis is the acceptance rate divided by the number of discretization steps $N$ per proposal. The vertical line marks $0.234$, the acceptance rate that maximizes efficiency in the high-dimensional limit (Remark~\ref{rem:exact-score-product-scaling}).}
    \label{fig:optimal-scaling}
\end{figure}

\subsection{Effects of score-estimation error}
\label{subsec:score-error}

We now consider the case where the score function $s_t=\nabla\log p_t$ is not known exactly, but is approximated by $\widehat s_t$. 
Let $\widehat Y$ denote the continuous-time reverse diffusion driven by the
approximate score,
\begin{align}\label{equ:reverse_sde_approx}
    \rd \widehat Y_t = \big[f_t(\widehat Y_t) - g_t^2 \widehat s_{t}(\widehat Y_t)\big]\, \widebar{\rd t} + g_t \rd \widebar B_t,\quad \widehat Y_T\sim p_T, \;\; t\in[T,0].
\end{align}
Thus, the backward Euler scheme $\{\bfy_k\}_{k=0}^N$ is the time discretization of $\widehat Y$. Define the score error $\delta_t:=\widehat s_t-s_t$.

\begin{assumption}[Regularity of the score error and estimated reverse path]
\label{assump:regularity:score}
Assume that $\delta\in C^{1,3}$ and that the approximate reverse diffusion~\eqref{equ:reverse_sde_approx} is well-posed. Its Euler scheme $\{\bfy_k\}_{k=0}^N$ satisfies the uniform $L^q$ moment and strong-error conditions in~\eqref{eq:exact-score-strong-error}, with $Y_{t_k}$ replaced by $\widehat Y_{t_k}$. Moreover, the forward and approximate-reverse Euler schemes have weak order one, uniformly over the time grid.
\end{assumption}

Define the integrated score error
\begin{equation}
\label{eq:integrated-score-error}
    \mathcal E_{\mathrm{score}}(\delta,T)
    :=
    \frac12\int_0^T g_t^2
    \left\{
        \mathbb E\!\left[\|\delta_t(X_t)\|^2\right]
        +
        \mathbb E\!\left[\|\delta_t(\widehat Y_t)\|^2\right]
    \right\}\,\rd t.
\end{equation}
Let $\bbP_Y$ denote the path law of the exact reverse diffusion~\eqref{equ:reverse_sde} and $\bbP_{\widehat Y}$ that of the approximate reverse diffusion~\eqref{equ:reverse_sde_approx}, both initialized from $p_T$. 
Under regularity conditions, the two directed Kullback--Leibler divergences between these path laws are
\begin{align*}
    \mathrm{KL}(\bbP_{Y} \,\|\, \bbP_{\widehat Y})
    &=\frac12\int_0^T g_t^2
      \mathbb E \left[\|\delta_t(Y_t)\|^2\right]\,\rd t,\\
    \mathrm{KL}(\bbP_{\widehat Y} \,\|\, \bbP_{Y})
    &=\frac12\int_0^T g_t^2
      \mathbb E\left[\|\delta_t(\widehat Y_t)\|^2\right]\,\rd t.
\end{align*}
Therefore, we can write 
\begin{align*}
\mathcal E_{\mathrm{score}}(\delta,T) = \mathrm{KL}(\bbP_{Y} \,\|\, \bbP_{\widehat Y}) + \mathrm{KL}(\bbP_{\widehat Y} \,\|\, \bbP_{Y}),
\end{align*}
the symmetrized Kullback--Leibler divergence between the exact and approximate path laws.

\begin{thm}[Effect of score-estimation error]
\label{thm:score-error-mean}
Under Assumptions~\ref{assump:regularity}--\ref{assump:regularity:score}, and under stationarity $\bfx_0\sim p_0$,
\begin{equation*}
    \mathbb E[\Delta_h]
    =-\mathcal E_{\mathrm{score}}(\delta,T)+O(h),
    \qquad h\to0.
\end{equation*}
Consequently, the acceptance probability satisfies
\begin{align*}
    \EE{1\wedge e^{\Delta_h}} \geq \frac12 \exp\big(-\mathcal E_{\mathrm{score}}(\delta,T)+O(h)\big).
\end{align*}
\end{thm}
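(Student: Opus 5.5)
The plan is to write $\Delta_h$ as a log-likelihood ratio of path laws and identify its expectation as a (negative) sum of two KL divergences, up to discretization error.

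\textbf{Step 1: $\Delta_h$ as a log-likelihood ratio.} Under stationarity, the augmented vector $\bfz$ has law $\Pi$. Write $\Pi=\mu_{\mathrm F}\otimes_{\bfx_N}\mu_{\mathrm B}$, where $\mu_{\mathrm F}$ is the law of the discrete forward path $(\bfx_0,\dots,\bfx_N)$ with $\bfx_0\sim p$, and $\mu_{\mathrm B}$ is the approximate backward chain started at $\bfx_N$. By definition, $\Delta_h=\log\Pi(\mathsf R\bfz)-\log\Pi(\bfz)$. This splits as
\[
\Delta_h=\log\frac{\nu_{\mathrm B}(\bfx_{0:N})}{\mu_{\mathrm F}(\bfx_{0:N})}+\log\frac{\mu_{\mathrm F}(\bfy_{0:N})}{\nu_{\mathrm B}(\bfy_{0:N})},
\]
where $\nu_{\mathrm B}(\bfw_{0:N}):=p_{\bfx_N}^{h}(\bfw_N)\prod_k\leftvec p_k(\bfw_k\mid\bfw_{k+1})$. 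Here $p^h_{\bfx_N}$ is the discrete forward marginal at time $T$, and the two copies of $p^h_{\bfx_N}(\bfx_N)$ cancel because $\bfy_N=\bfx_N$. The forward path $\bfx_{0:N}$ has law $\mu_{\mathrm F}$. Its endpoint has law $p^h_{\bfx_N}$, so conditionally on it the backward path $\bfy_{0:N}$ has law $\nu_{\mathrm B}$. Taking expectations then gives exactly
\[
\mathbb E[\Delta_h]=-\mathrm{KL}(\mu_{\mathrm F}\|\nu_{\mathrm B})-\mathrm{KL}(\nu_{\mathrm B}\|\mu_{\mathrm F}).
\]
This identity is exact for every $h$.

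\textbf{Step 2: pass to continuous time.} The two KLs are discrete path-space divergences, and each is a sum over steps of expected Gaussian log-ratios. I would reverse the forward chain using Bayes. Its exact continuous reversal is $Y$, whose drift uses $s_t$, while $\nu_{\mathrm B}$ uses $\widehat s_t$. Each KL should then decompose as the continuous Girsanov term
\[
\tfrac12\int g_t^2\,\mathbb E\|\delta_t\|^2\,\rd t,
\]
evaluated along $X$ (equivalently $Y$, by stationarity and equality in law) or along $\widehat Y$, plus a discretization term. Summing the two gives $\mathcal E_{\mathrm{score}}(\delta,T)$.

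The discretization remainder is controlled in two pieces. When $\delta\equiv0$, Theorem~\ref{thm:exact-score-discretization} already shows that the pure-discretization part of $\mathbb E[\Delta_h]$ is $-\tfrac h2\mathcal D(T)+O(h^{3/2})=O(h)$. The cross terms between score error and discretization are also $O(h)$: each step contributes
\[
\frac{h\,g^2}{2}\|\delta_{t_k}(\bfy_{k+1})\|^2 \quad\text{or}\quad \frac{h\,g^2}{2}\|\delta_{t_k}(\bfx_{k})\|^2,
\]
plus mixed terms of order $h^{3/2}$ with mean zero to leading order. The weak-order-one assumption (Assumption~\ref{assump:regularity:score}) converts the Riemann sums $\sum_k h\,\mathbb E\|\delta_{t_k}(\bfy_k)\|^2$ into $\int_0^T\mathbb E\|\delta_t(\widehat Y_t)\|^2\rd t+O(h)$, and likewise for the forward chain. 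The regularity $\delta\in C^{1,3}$ with polynomial growth, together with the uniform moments, keeps these expectations finite.

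\textbf{Main obstacle.} The hard step is the expansion of the exact-score portion. In Step 1 the discrete-time reversal uses $p^h_k$, not $p_{t_k}$, and a naive Taylor expansion of the Gaussian log-ratios gives terms of size $h^{1/2}$ per step. I would handle this by reusing the expansion from the proof of Theorem~\ref{thm:exact-score-discretization}, adding $\delta$ as a perturbation of the backward drift, and checking that the $h^{1/2}$ martingale terms vanish in expectation.

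\textbf{Step 3: acceptance bound.} Let $\mathbb Q$ be the law of $\mathsf R\bfz$. Then $\mathbb E[e^{\Delta_h}\wedge1]=1-\mathrm{TV}(\Pi,\Pi\circ\mathsf R)$. By the Bretagnolle--Huber inequality,
\[
1-\mathrm{TV}\ge\tfrac12 e^{-\mathrm{KL}(\Pi\|\Pi\circ\mathsf R)}.
\]
Since $\mathrm{KL}(\Pi\|\Pi\circ\mathsf R)=-\mathbb E[\Delta_h]=\mathcal E_{\mathrm{score}}(\delta,T)+O(h)$, this yields the stated lower bound.
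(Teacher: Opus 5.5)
Your Step 1 is correct and is a clean exact reformulation that the paper does not use: $\mathbb E[\Delta_h]=-\mathrm{KL}(\mu_{\mathrm F}\|\nu_{\mathrm B})-\mathrm{KL}(\nu_{\mathrm B}\|\mu_{\mathrm F})$ holds for every $h$. Step 3 is also correct. Bretagnolle--Huber with $\mathrm{KL}(\Pi\|\Pi\circ\mathsf R)=-\mathbb E[\Delta_h]$ replaces the paper's Jensen argument, which uses $\max(\Delta_h,0)\le e^{\Delta_h-1}$ and $\mathbb E[e^{\Delta_h}]=1$ and gives the slightly better constant $e^{-1/e}$.

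The gap is in Step 2, where all the content of the theorem lies: your accounting of the $\delta$-dependent terms omits one. Expand $\Delta_k^{\mathrm f}$ with the perturbed backward drift $\widehat b_k=-f_k+g_k^2(s_k+\delta_k)$ and the log-Fokker--Planck identity. Besides $hQ_{A_k}(\boldsymbol\varepsilon_k)$ and $-\tfrac h2 g_k^2\|\delta_k\|^2$, the order-$h$ part contains the linear term
\[
-h\,g_k^2\bigl\{\delta_k(\mathbf x_k)^\top s_k(\mathbf x_k)+\boldsymbol\varepsilon_k^\top\nabla\delta_k(\mathbf x_k)\,\boldsymbol\varepsilon_k\bigr\}.
\]
Its conditional mean given $\mathbf x_k$ is $-hg_k^2(\delta_k^\top s_k+\nabla\cdot\delta_k)(\mathbf x_k)$, which is not zero. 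This term is neither an $h^{1/2}$ martingale increment (the only thing your ``main obstacle'' plan checks) nor an odd $h^{3/2}$ term. Summed over $N=T/h$ steps it is $O(1)$, the same order as $\mathcal E_{\mathrm{score}}(\delta,T)$.

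Your KL formulation does not avoid this term; it is hidden in ``reverse the forward chain using Bayes.'' Write $m_k(\mathbf y):=\mathbb E[\mathbf x_k\mid\mathbf x_{k+1}=\mathbf y]$. The part of the $k$th chain-rule summand of $\mathrm{KL}(\mu_{\mathrm F}\|\nu_{\mathrm B})$ that is linear in $\delta_k$ is $-\mathbb E\langle m_k(\mathbf x_{k+1})-\mathbf x_{k+1}-hb_k^\star(\mathbf x_{k+1}),\,\delta_k(\mathbf x_{k+1})\rangle$. The tower property and a Taylor expansion turn this into $h g_k^2\,\mathbb E[(\delta_k^\top s_k+\nabla\cdot\delta_k)(\mathbf x_k)]+O(h^2)$, which is the same term again. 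Showing that the Bayes reversal matches the exact-score Euler step to higher order in this direction requires exactly the argument below.

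The missing ingredient is Stein's identity. For $X_t\sim p_t$, integration by parts gives $\mathbb E[\delta_t(X_t)^\top s_t(X_t)+\nabla\cdot\delta_t(X_t)]=0$, using that $\delta_tp_t$ vanishes at infinity and $\nabla\cdot(\delta_tp_t)\in L^1$. Combine this with the weak-order-one assumption, which moves from the law of $\mathbf x_k$ to $p_{t_k}$ at cost $O(h)$ per step. Then $h\sum_k g_k^2\,\mathbb E[(\delta_k^\top s_k+\nabla\cdot\delta_k)(\mathbf x_k)]=O(h)$. This is precisely what the paper's proof does after its local expansion. In the backward pass the corresponding linear terms cancel algebraically, so your symmetric description of the two passes (``$\tfrac{hg^2}{2}\|\delta\|^2$ plus mean-zero $h^{3/2}$ terms'') is accurate only there.

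Two smaller points:
\begin{itemize}
\item Theorem~\ref{thm:exact-score-discretization} does not literally give a ``pure-discretization part'' of $\mathbb E[\Delta_h]$ when $\delta\neq0$, since the backward path law changes. You do not need it, because the $\pm hQ_{A_k}$ terms are conditionally centered.
\item If you telescope with $\log p_{t_k}$, each of your two KLs individually picks up $\mp\mathrm{KL}(p^h_{\mathbf x_N}\|p_T)$, where $p^h_{\mathbf x_N}$ is the law of $\mathbf x_N$. The assumptions do not control this quantity, and the two copies cancel only in the sum. So do the expansion on the sum, rather than claiming each KL separately equals its Girsanov term plus $O(h)$.
\end{itemize}
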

The proof is given in Appendix~\ref{prf:thm:score-error-mean}.
As $h\to0$, the mean log-ratio converges to $-\mathcal E_{\mathrm{score}}(\delta,T)$, a limit set by the score error. Consequently, once the score error dominates the discretization error, shrinking $h$ barely affects the acceptance probability. Improving acceptance then requires reducing $\mathcal E_{\mathrm{score}}(\delta,T)$ itself, either by shortening the diffusion horizon $T$ or by learning a more accurate score $\widehat s_t$. We turn to the latter next.

\section{Score estimation in practice}
\label{sec:practical}

MAD-Path (Algorithm~\ref{alg:mad-path}) requires an approximate score
$\widehat s_t\approx\nabla\log p_t$, whose accuracy governs the sampler's
efficiency (Theorem~\ref{thm:score-error-mean}). We first describe a
path-space variational formulation that learns these scores from an
unnormalized target density without requiring samples from $p$, and then
discuss its implementation for multimodal targets and connections to
related work.

\subsection{Path-space variational score estimation}

Let $p^{\mathrm{ref}}$ be a tractable reference distribution, and let
$p_t^{\mathrm{ref}}$ denote the law of $X_t$ under the noising diffusion
\begin{align}\label{equ:reference:process}
    \rd X_t
    =
    f_t(X_t)\,\rd t + g_t\,\rd B_t,
    \qquad
    X_0\sim p^{\mathrm{ref}},
    \qquad t\in[0,T].
\end{align}
Suppose that the reference score
$s_t^{\mathrm{ref}}:=\nabla\log p_t^{\mathrm{ref}}$
is available in closed form, as is the case when $f_t$ is linear and
$p^{\mathrm{ref}}$ is Gaussian or a Gaussian mixture. The time reversal
of~\eqref{equ:reference:process} is
\begin{align*}
    \rd Y_t
    =
    \bigl[
        f_t(Y_t)-g_t^2s_t^{\mathrm{ref}}(Y_t)
    \bigr]\,\widebar{\rd t}
    +
    g_t\,\rd\widebar B_t,
    \qquad
    Y_T\sim p_T^{\mathrm{ref}},
    \qquad t\in[T,0],
\end{align*}
with the same convention as in~\eqref{equ:reverse_sde}.

Let $\bbP^{\mathrm{ref}}$ denote the path measure of this reverse
process. We define the target path measure $\bbP^*$ by
\begin{align*}
    \frac{\rd\bbP^*}{\rd\bbP^{\mathrm{ref}}}(Y)
    =
    \frac{p(Y_0)}{p^{\mathrm{ref}}(Y_0)}.
\end{align*}
Thus, $\bbP^*$ is the path law of the same forward diffusion initialized
from $p$, expressed in reverse-time coordinates; in particular,
$\bbP_t^*=p_t$ for every $t\in[0,T]$.

We approximate $\bbP^*$ using the controlled SDE
\begin{align*}
    \rd Y_t^u
    =
    \Bigl[
        f_t(Y_t^u)
        -
        g_t^2
        \bigl(
            s_t^{\mathrm{ref}}(Y_t^u)+u_t(Y_t^u)
        \bigr)
    \Bigr]\,\widebar{\rd t}
    +
    g_t\,\rd\widebar B_t,
    \qquad
    Y_T^u\sim p_T^{\mathrm{ref}},
    \qquad t\in[T,0],
\end{align*}
where $u$ is a control and
$\widehat s_t:=s_t^{\mathrm{ref}}+u_t$ is the approximate score used by
MAD-Path. Let $\bbQ^u$ denote the resulting path measure. The control is
trained by minimizing $\mathrm{KL}(\bbQ^u\,\|\,\bbP^*)$.

By Girsanov's theorem,
\begin{equation}\label{equ:reverse:path:kl}
\begin{aligned}
    \mathrm{KL}(\bbQ^u\,\|\,\bbP^*)
    &=
    \mathrm{KL}(\bbQ^u\,\|\,\bbP^{\mathrm{ref}})
    +
    \bbE_{\bbQ^u}
    \left[
        \log
        \frac{p^{\mathrm{ref}}(Y_0^u)}{p(Y_0^u)}
    \right]
    \\
    &=
    \bbE_{\bbQ^u}
    \left[
        \int_0^T
        \frac{g_t^2}{2}
        \|u_t(Y_t^u)\|^2
        \,\rd t
        +
        \log
        \frac{p^{\mathrm{ref}}(Y_0^u)}{p(Y_0^u)}
    \right].
\end{aligned}
\end{equation}
This is a stochastic optimal control problem with running cost
$\tfrac{g_t^2}{2}\|u_t\|^2$ and terminal cost
$\log\frac{p^{\mathrm{ref}}(Y_0^u)}{p(Y_0^u)}$. 

Writing $s_t:=\nabla\log p_t$, the chain rule for KL divergence and
another application of Girsanov's theorem give
\begin{align*}
    \mathrm{KL}(\bbQ^u\,\|\,\bbP^*)
    &=
    \mathrm{KL}
    \bigl(
        p_T^{\mathrm{ref}}
        \,\|\,
        p_T
    \bigr)+
    \bbE_{\bbQ^u}
    \left[
        \int_0^T
        \frac{g_t^2}{2}
        \bigl\|
            s_t^{\mathrm{ref}}(Y_t^u)
            +
            u_t(Y_t^u)
            -
            s_t(Y_t^u)
        \bigr\|^2
        \,\rd t
    \right].
\end{align*}
Therefore, under standard regularity conditions, if the control class
contains
\begin{align*}
    u_t^*
    =
    \nabla\log p_t-\nabla\log p_t^{\mathrm{ref}},
    \qquad t\in[0,T],
\end{align*}
then the optimal control recovers the exact score, and the minimum
objective value is
$\kl{p_T^{\mathrm{ref}}}{p_T}$. Importantly, this conclusion holds for
every finite $T$: terminal matching changes the minimum value of the
objective but not its minimizer.

Standard diffusion samplers commonly initialize the reverse process from $p_T^{\mathrm{ref}}$. Even with the exact score and continuous-time simulation, the resulting path-space KL divergence contains the irreducible terminal-mismatch term $\kl{p_T^{\mathrm{ref}}}{p_T}$. Eliminating this term requires the memoryless condition $p_T^{\mathrm{ref}}=p_T$~\citep{zhang2022path,vargasdenoising}, which, for a stationary OU reference, is generally attained only as $T\to\infty$. MAD-Path instead initializes the reverse pass at the endpoint of its forward pass, while the Metropolis--Hastings correction preserves the target for every finite $T$. The horizon can therefore be chosen based on mode overlap (Section~\ref{sec:ideal:transition}), acceptance probability (Section~\ref{sec:analysis}), and computational cost.

Although the population objective~\eqref{equ:reverse:path:kl} recovers the exact score when the control class is sufficiently rich, its parametrized optimization can be difficult for multimodal targets. In particular, the reverse path-space KL is mode-seeking: it penalizes placing mass in low-density regions of the target more severely than omitting well-separated modes, and can therefore lead to mode collapse~\citep{vargas2024transport,he2025no}. A common strategy in multimodal MCMC is to supplement the sampler with a mode-discovery procedure~\citep{pompe2020framework,tawn2020annealed,tan2025accelerate,latuszynski2025mcmc}. We adopt the same principle for score estimation. We first locate the dominant modes and then construct $p^{\mathrm{ref}}$ as a Gaussian mixture whose components encode them. This provides the controlled process with coverage of the relevant modes while retaining a closed-form reference score under linear noising dynamics. Details are given in Appendix~\ref{appendix:score}; a related strategy is used in~\cite{noble2025learned}.

\subsection{Related work}

Different choices of the reference process~\eqref{equ:reference:process}
recover or closely relate to several existing diffusion-based samplers.
The path integral sampler~\citep[PIS;][]{zhang2022path} uses Brownian
motion ($f_t=0$, $g_t=1$) pinned at the origin, while the denoising
diffusion sampler~\citep[DDS;][]{vargasdenoising} uses an equilibrium OU
process with $p^{\mathrm{ref}}=\N(0,I_d)$ and $T\to\infty$. Non-equilibrium annealed
references are studied in~\cite{albergo2025nets,choi2026non}. Other
approaches replace the path-space objective with regression targets
derived from the target score identity, enabling simulation-free or
adjoint-based training
\citep{de2024target,havens2025adjoint,blessing2026bridge}, or estimate
the intermediate scores through Tweedie's formula using importance
sampling or MCMC
\citep{huang2024diffusion,grenioux2024stochastic}.

A separate line of work learns the intermediate log-densities along the
diffusion path
\citep{mate2023learning,sun2024dynamical,chemseddine2025neural,phillips2024particle},
which enables corrections based on annealed importance sampling or
sequential Monte Carlo, as in~\cite{phillips2024particle}. However,
\citet{grenioux2026diffusion} show that such estimates can be
mode-blind: even when the corresponding score is accurate, the
estimated density may misrepresent the relative weights of separated
modes. MAD-Path requires only the learned score, not intermediate
log-density estimates, and its Metropolis--Hastings correction preserves
the target's mode weights. Section~\ref{sec:experiments} provides a
numerical illustration.

MAD-Path is therefore agnostic to how the score and reference are
constructed: any of the approaches above may be used within the
algorithm. The choice affects acceptance and efficiency, but not the
invariant distribution. Nevertheless, an informative reference and an
accurate score remain crucial for efficient sampling, particularly for
complex multimodal targets.

\section{Numerical experiments} \label{sec:experiments}

In this section, we present numerical experiments for our method (Algorithm~\ref{alg:mad-path}).
We evaluate its performance over six examples: a mixture of Gaussians with unequal variances; a mixture of well-separated skew-normal distributions; two Bayesian mixture models applied to real-world datasets: the Old Faithful geyser \cite{azzalini1990look} and fetal deaths in litters \cite{brooks1997finite}; a Bayesian model for sensor network localization \cite{ihler2004nonparametric}; and a seemingly unrelated regression (SUR) model \cite{zellner1962efficient} with multimodal profile likelihood.

On the first two synthetic targets, we compare MAD-Path against six baselines: adaptive parallel tempering \cite[APT;][]{miasojedow2013adaptive}, adaptive parallel tempering with deterministic even-odd swaps \cite[APT-DEO;][]{okabe2001replica}, tempered transition \cite[TT;][]{neal1996sampling}, annealed importance sampling \cite[AIS;][]{neal2001annealed}, sequential Monte Carlo \cite[SMC;][]{gordon1993novel,del2006sequential}, as well as the unadjusted diffusion (UD) sampler, which uses the same learned score $\widehat s_t$ as MAD-Path but omits the Metropolis--Hastings correction.
All the tempering-based methods use ten MALA steps for each within-temperature transition. 
We run APT and APT-DEO such that the swap rate, defined as the percentage of proposed state exchanges between parallel temperature chains that are successfully accepted, reaches around its theoretically optimal value of 0.234~\cite{atchade2011towards}.
For TT, we calibrate the geometric inverse-temperature ladder and the per-level MALA step size to achieve a moderate acceptance rate.
For AIS and SMC, we use the adaptive schedule of~\cite{zhou2016toward}. 
In Section~\ref{sec:fetal:deaths}, we examine the sensitivity of MAD-Path to the choice of horizon $T$ and number of steps $N$. 
In the remaining examples, we mainly compare MAD-Path against APT-DEO, which proved the strongest competitor among the baselines. MAD-Path uses the Euler--Maruyama integrator in all experiments.

We measure the computational cost of each method by the number of target score evaluations and ensure each method uses roughly the same number of evaluations. 
MAD-Path is an amortized sampler: the target score is evaluated only during training, and the trained score $\widehat s_t$, which does not involve the target score, is used during MCMC sampling. Therefore, the cost of gradient evaluation of MAD-Path only comes from the training phase, while the other MCMC methods (with MALA as the within-temperature transition) require target score evaluations at every step. 
The total wall clock time, including both training and sampling, is also comparable, as listed in Section~\ref{app:experiments:details}.

All experiments were conducted on an Apple M3 machine with 16 GB of unified memory. Further implementation details, including the choices of $p^{\mathrm{ref}}$, $T$, and $h$, as well as further results and details, are deferred to Section~\ref{appendix:experiments}. 
The code to reproduce the results is available at \href{https://github.com/chenh77/mad-path.git}{\tt github.com/chenh77/mad-path}.

\subsection{Mixture of Gaussians with unequal variances} \label{sec:experiment-mog}

We first evaluate our algorithm on the two-Gaussian mixture target with unequal variances,
\begin{equation*}
    p(\bfx) = \frac12 \, \varphi\big(\bfx;\, \boldsymbol{1}_d, \sigma_+^2 I_d\big) + \frac12 \, \varphi\big(\bfx;\, - \boldsymbol{1}_d, \sigma_-^2 I_d\big),
\end{equation*}
with $d=20$, $\sigma_+^2=0.5$, and $\sigma_-^2=0.2$. As discussed in Section~\ref{sec:ideal:transition}, tempering-based methods are prone to torpid mixing on such targets~\cite{woodard2009torpid}.

We compare MAD-Path against tempering-based samplers under a matched computational budget, measured by the number of target score evaluations (MAD-Path: $5.0\times 10^5$; APT and APT-DEO: $2.5\times 10^6$; TT: $3.5\times 10^6$; AIS and SMC: $7.0\times 10^5$). We run MAD-Path for 10,000 iterations, at an average acceptance rate of 0.372. 
We run APT, APT-DEO, and TT for 12,500 iterations each, discarding the first 2,500 as burn-in. 
The average swap acceptance rates of APT and APT-DEO are 0.387 and 0.207, respectively, and the average acceptance rate of TT is 0.356.
AIS and SMC each use 10,000 particles, for which we report the histograms and effective sample size (ESS) of the weighted samples. Finally, we include the unadjusted reverse-diffusion sampler as a baseline.

Figure~\ref{fig:mog-particle-trace} reports histograms of the projection $\bfv_1^\top \bfx$ (top row) and the corresponding trace plots (bottom row), where $\bfv_1 = \boldsymbol{1}_d/\sqrt{d}$ is the unit vector along which the two modes are most separated. MAD-Path is the only sampler to recover the target accurately (red dashed), and its trace exhibits the most frequent crossings between the two modes. APT-DEO also moves between the modes but overrepresents the narrow one over the finite run. Because the narrow mode has a much higher peak density, swaps that replace a cold narrow-mode state with a hotter wide-mode state are rarely accepted, resulting in long residence times in the narrow mode.
APT and TT each remain trapped in the wider mode, even under extensive tuning of the temperature schedule. AIS likewise yields biased mode-weight estimates and an ESS of only 17; the resampling step in SMC improves the ESS but still fails to recover the target accurately. Finally, UD is biased, underrepresenting the narrower mode, whereas MAD-Path's Metropolis--Hastings correction recovers the target's mode weights accurately.

\begin{figure}[htbp]
    \centering
    \includegraphics[width=\linewidth]{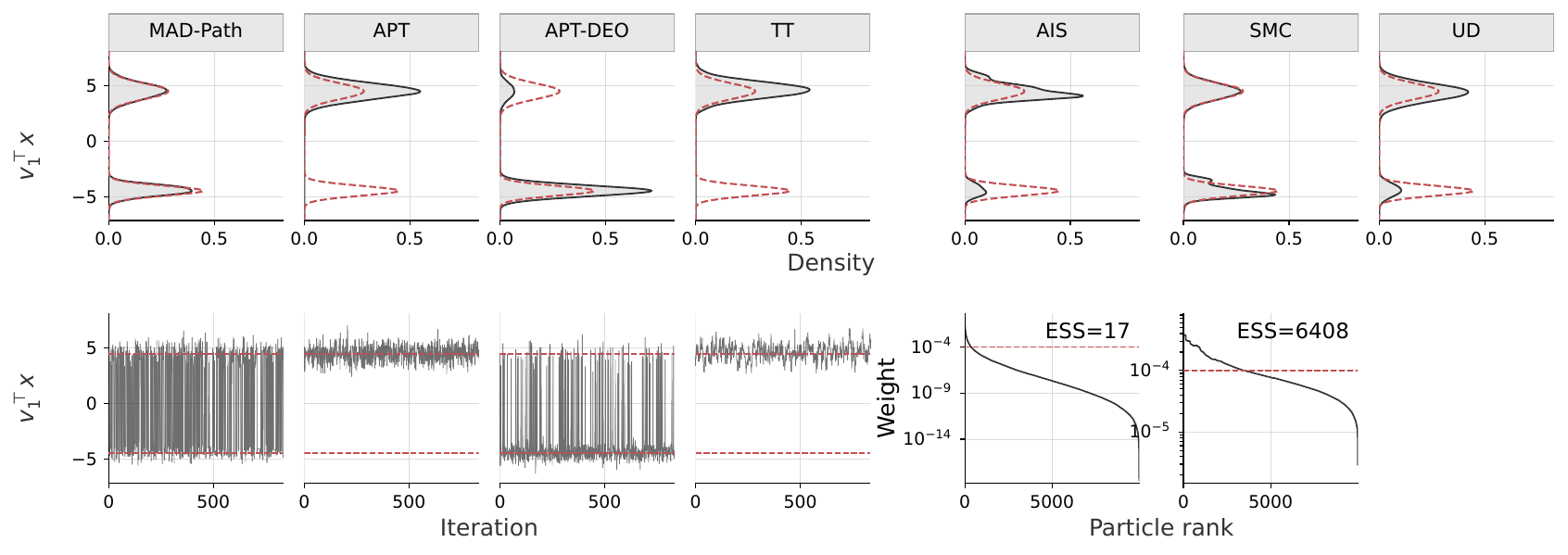}
    \caption{Results for the two-Gaussian mixture example. Top row: empirical distribution of the diagonal projection $\bfv_1^\top \bfx$ for all seven samplers, with red dashed lines indicating the true marginal. Bottom row: trace plots of $\bfv_1^\top \bfx$ for the four MCMC samplers and sorted-weight curves for AIS and SMC.}
    \label{fig:mog-particle-trace}
\end{figure}

\subsection{Mixture of skew normals} \label{sec:experiment-skew}

We next consider a mixture of thirteen well-separated, 20-dimensional skew-normal distributions, following \cite{tawn2020annealed,tan2025accelerate}, who extend a two-dimensional example of \cite{tjelmeland2001mode}:
\begin{align*}
    p(\bfx) \propto \sum_{k=1}^{13} w_k \prod_{j=1}^{20} \varphi\!\left(\frac{x_j-(\boldsymbol{\mu}_k)_j}{\sigma_k}\right)\Phi\!\left(\alpha \, \frac{x_j-(\boldsymbol{\mu}_k)_j}{\sigma_k}\right),
\end{align*}
with equal weights $w_k=1/13$, skewness $\alpha = 5$, and scale $\sigma_k=1$ for all $k$. The location vectors $\boldsymbol{\mu}_k$ fix the first two coordinates as in \cite{tjelmeland2001mode} and repeat this pattern across the remaining eighteen; Figure~\ref{fig:skew-particle-trace} shows the resulting mode locations in the first two coordinates.

We compare the same seven samplers as in Section~\ref{sec:experiment-mog}, each using a comparable number of gradient evaluations ($1$--$2\times 10^6$).
All MCMC samplers are run for 10,000 iterations, and AIS and SMC use 10,000 particles. MAD-Path attains an average acceptance rate of 0.240 and TT of 0.362. The average swap acceptance rates of APT and APT-DEO are 0.561 and 0.192, respectively. 
Figure~\ref{fig:skew-particle-trace} shows the histogram of the first coordinate together with scatter and trace plots of the first two coordinates. Table~\ref{tab:skew} reports the accuracy of the estimated mode weights: the total variation (TV) distance to the true weights (all equal) and the ratio of the largest to smallest estimated weight (Max/Min); the ratio is marked NA when a method fails to recover all modes. Results are averaged over 100 independent runs, with standard deviations in parentheses. In each run, MAD-Path is initialized from a draw from the learned reverse diffusion, while all other methods are initialized from the Gaussian distribution.

A few observations are in order.
\begin{itemize}
    \item Table~\ref{tab:skew} shows that MAD-Path, APT-DEO and UD all recover the 13 modes consistently across the 100 replicates. Among these, MAD-Path estimates the mode weights substantially more accurately than the other two, by both the TV distance to the true (uniform) weights and the Max/Min imbalance ratio.

    \item The histogram in Figure~\ref{fig:skew-particle-trace} shows that MAD-Path closely matches the true marginal of the first coordinate. Notably, the unadjusted reverse diffusion (last column) covers all local modes, yet its samples remain approximately Gaussian within each mode, as seen in both the histogram and the scatter plot: the estimated score is too inaccurate to capture the local skewness. MAD-Path corrects for this through the Metropolis--Hastings step, producing samples that reflect the true local shape.

    \item The two-dimensional trace plots (bottom row of Figure~\ref{fig:skew-particle-trace}) show that MAD-Path is the only method that jumps frequently among all modes. APT-DEO also traverses all modes but spends more time in the top-right corner, toward which every local mode is skewed. This asymmetry likely induces a weight distortion analogous to the one discussed in Section~\ref{sec:ideal:transition}, accounting for APT-DEO's biased weight estimates.
\end{itemize}

\begin{figure}[htbp]
    \centering
    \includegraphics[width=\linewidth]{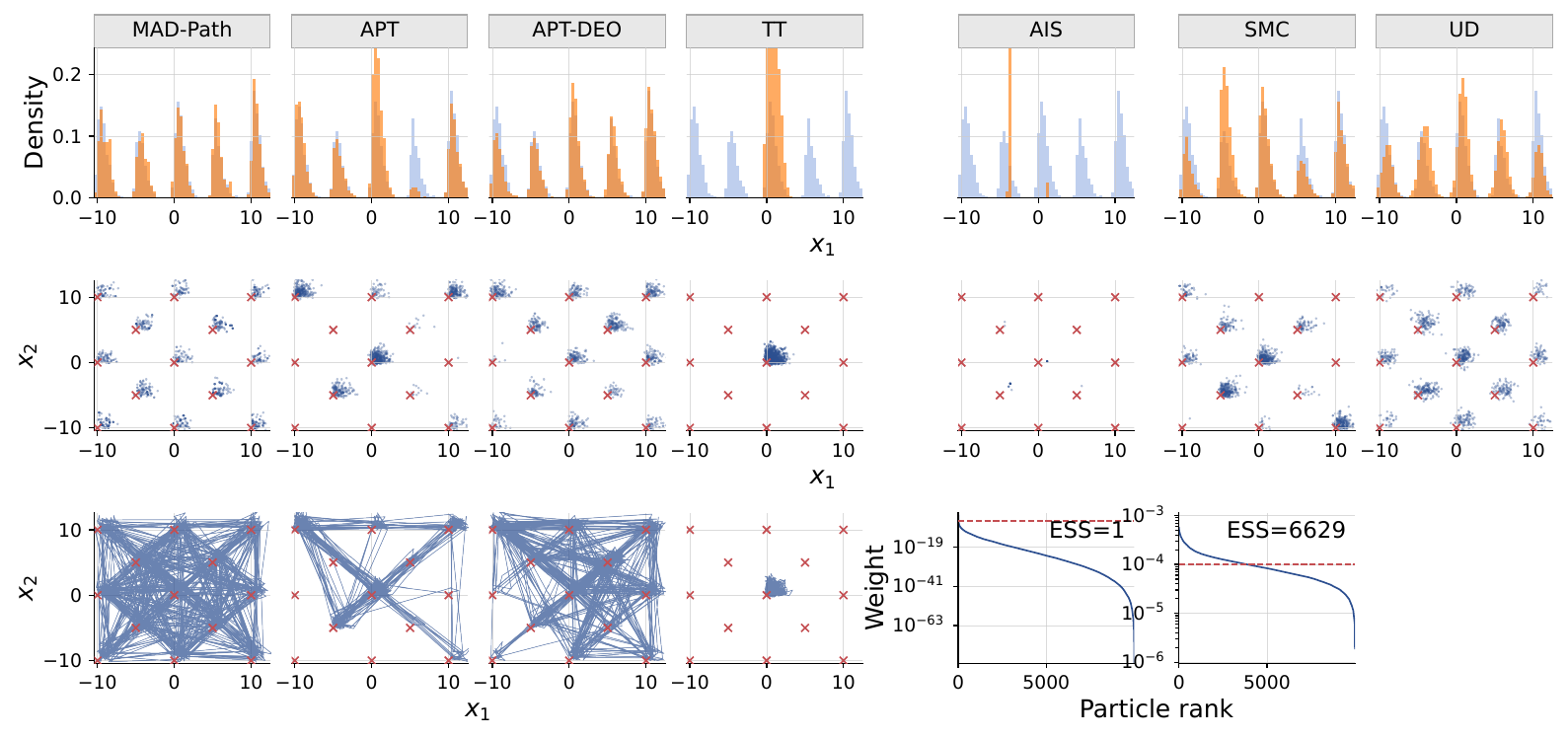}
    \caption{Results for the skew-normal mixture example. 
    First row: marginal distribution of the first coordinate (orange) against the ground truth (blue). 
    Second row: scatter plots of the first two coordinates, with the mode locations $\bmu_k$ marked in red.
    Third row: two-dimensional trace plots of the first two coordinates for the four MCMC samplers, and sorted-weight curves for AIS and SMC.}
    \label{fig:skew-particle-trace}
\end{figure}

\begin{table}[htbp]
    \centering
    \begin{tabular}{l
        S[table-format=2.4]
        S[table-format=2.4]
        S[table-format=2.4]
        S[table-format=2.4]
        S[table-format=1.4]
        S[table-format=1.4]
        S[table-format=1.4]}
        \toprule
         & {MAD-Path} & {APT} & {APT-DEO} & {TT} & {AIS} & {SMC} & {UD} \\
        \midrule
        \multirow{2}{*}{TV}
         & \textbf{0.0777} & 0.3423 & 0.1159 & 0.9231 & 0.7932 & 0.5241 & 0.2136 \\
         & {(0.0361)} & {(0.0641)} & {(0.0231)} & {(0.0000)} & {(0.0978)} & {(0.0978)} & {(0.0181)} \\
        \addlinespace
        \multirow{2}{*}{Max/Min}
         & \textbf{1.8819} & {--} & 2.8080 & {--} & {--} & {--} & 7.8484 \\
         & {(1.4190)} & {} & {(0.7594)} & {} & {} & {} & {(1.8446)} \\
        \bottomrule
    \end{tabular}
    \caption{Mode-weight estimation accuracy for the mixture of skew-normals example, over 100 independent runs (standard deviations in parentheses). TV is the total variation distance between the estimated and true (uniform) mode weights; Max/Min is the ratio of the largest to the smallest estimated weight, which equals 1 under perfect estimation. A dash (--) indicates that the method failed to recover all 13 modes, in which case the ratio is undefined. }
    \label{tab:skew}
\end{table}

\subsection{Bayesian Gaussian mixture model}
\label{sec:mixture:model}

Mixture models are canonical examples of multimodal posteriors, owing to label switching: any permutation of the component labels leaves the likelihood unchanged, producing modes of equal mass that may be well-separated in parameter space and are often inaccessible to local MCMC. As \citet{Jasra2005Markov} note, a sampler must adequately explore all such labelings to be regarded as convergent.

We consider a Bayesian Gaussian mixture model with $K = 2$ components on standardized observations $\bfy_1,\dots,\bfy_n \in \mathbb{R}^m$:
\begin{align*}
    \bfy_i \mid z_i,\bmu,\Sigma &\;\sim\; \N(\bmu_{z_i},\, \Sigma_{z_i}), \\
    z_i \mid \pi &\;\sim\; \mathrm{Cat}(\pi),
\end{align*}
with conjugate priors $\bmu_k \sim \mathcal{N}_2(\bfm_0,\, V_0)$, $\Sigma_k^{-1} \sim \mathrm{Wishart}(\nu_0,\, W_0)$, and $\pi \sim \mathrm{Dirichlet}(\alpha_0\mathbf{1}_K)$; hyperparameter settings are given in Appendix~\ref{app:experiments:details}. We use the Old Faithful geyser dataset \cite{azzalini1990look}, comprising $n = 272$ observations of $m = 2$ variables, eruption duration and waiting time to the next eruption (both in minutes). This is a standard benchmark in density estimation and clustering, notable for its clear bimodal structure. Reparametrizing all parameters into unconstrained space yields a parameter vector in $\R^{11}$; see Appendix~\ref{app:experiments:details}.

We evaluate the samplers by their accuracy in estimating the weights of the two modes, which are equal by symmetry. We run MAD-Path for 10,000 iterations and, for comparison, APT-DEO for 12,000, discarding the first 2,000 as burn-in. We use the same initialization as described in Section~\ref{sec:experiment-skew}. The experiment is repeated 100 times independently. 
The average acceptance rate of MAD-Path is 0.607, and the average swap acceptance rate of APT-DEO is 0.347. 

The left panel of Figure~\ref{fig:bayesian-gmm} shows the empirical distribution of the estimated weight of mode 1 across 100 runs. For MAD-Path, the estimates concentrate tightly around the true value $0.5$, indicating that the chain explores both modes and mixes well across them. The APT-DEO estimates, by contrast, vary widely across runs, a sign that the chains have not fully mixed. The table in the right panel reports the fraction of runs in which each sampler recovers both modes, together with the imbalance index Max/Min---the ratio of the largest to smallest estimated mode weight, computed over successful runs. MAD-Path recovers both modes in all 100 runs, with a Max/Min ratio close to $1$. APT-DEO fails to recover both modes in 6\% of runs, and its Max/Min ratio is substantially larger than $1$, confirming that the recovered weights are highly imbalanced even when both modes are found.

\begin{figure}[htbp]
    \centering
    \hspace{1em}
    \begin{minipage}{.25\linewidth}
    \includegraphics[width=\linewidth]{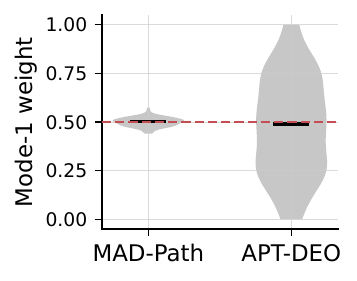}
    \end{minipage}
    \hspace{.1\linewidth}
    \begin{minipage}{.6\linewidth}
        \begin{tabular}{l
        S[table-format=2.4]
        S[table-format=2.4]}
        \toprule
         & {MAD-Path} & {APT-DEO} \\
        \midrule
        {Success rate}
         & \textbf{100\%} & {94\%} \\
        \addlinespace
        \multirow{2}{*}{Max/Min}
         & \textbf{1.0833}  & 15.4167  \\
         & {(0.0655)} & {(79.7315)} \\
        \bottomrule
        \end{tabular}
    \end{minipage}
    \caption{Results for the Bayesian Gaussian mixture model, comparing MAD-Path and APT-DEO over 100 independent runs. Left: empirical distribution of the estimated mode-1 weight, with the true value 0.5 marked by the red dashed line. Right: success rate (proportion of runs identifying both modes), Max/Min (ratio of the largest to smallest estimated mode weight for successful runs, equal to 1 under perfect estimation), 
    with standard deviations in parentheses.}
    \label{fig:bayesian-gmm}
\end{figure}

\subsection{Effects of diffusion horizon and discretization steps}
\label{sec:fetal:deaths}
We study the effects of the horizon $T$ and the number of discretization steps $N$ on MAD-Path using the mixture model of beta-binomial and binomial distributions from \cite{tjelmeland2001mode}. The dataset consists of observations on fetal deaths in litters of mice \cite{brooks1997finite}. The conditional distribution of death given the number of implants or fetuses is defined by
\begin{align*}
    p(x\mid n) = \gamma \left[ \binom{n}{x} \prod_{r=0}^{x-1} \frac{\mu + r\theta}{1 + r\theta} \prod_{r=0}^{n-x-1} \frac{1-\mu + r\theta}{1 + (r+x)\theta} \right] + (1 - \gamma) \left[  \binom{n}{x} \nu^x (1-\nu)^{n-x} \right], 
\end{align*}
where $x$ is the number of deaths of mice and $n$ the number of implants or fetuses, the mixing weight $\gamma\in[0,1]$, and where $\mu,\nu \in [0,1]$, $\theta\geq0$. Further details of the model are provided in Appendix~\ref{app:experiments:details}.

We evaluate MAD-Path across a range of values of $T$ and $N$, measuring the acceptance rate and the inter-mode crossings per 1,000 steps. Note that during training, we set $T=7$ and $N=100$. The experiments are repeated 100 times independently, and the median and interquartile range of the acceptance rate and inter-mode crossings are reported in Figure~\ref{fig:fetal-deaths}. 

The left panel of Figure~\ref{fig:fetal-deaths} shows that the acceptance rate increases with $N$, as the discretization error decreases. Conversely, it decreases with $T$, since a larger $T$ incurs a larger score error $\calE_{\mathrm{score}}$ defined in Equation~\eqref{eq:integrated-score-error}. With $T=7$, the acceptance rate plateaus around $0.35$, at which point score estimation error dominates the discretization error and larger $N$ does not improve the acceptance rate further.

The right panel of Figure~\ref{fig:fetal-deaths} shows that the inter-mode crossing rate increases with both $T$ and $N$. In particular, for $T\leq 4$, there is almost no mode-jumping. This corroborates the analysis in Section~\ref{sec:ideal:transition}, which shows that mixing across modes requires a sufficiently long diffusion horizon for the modes to merge by the end of the forward process. 
Trace plots for various combinations of $T$ and $N$ are provided in Figure~\ref{fig:fetal-deaths-traces} in Appendix~\ref{app:experiments:details}.

\begin{figure}[htbp]
    \centering
    \includegraphics[width=\linewidth]{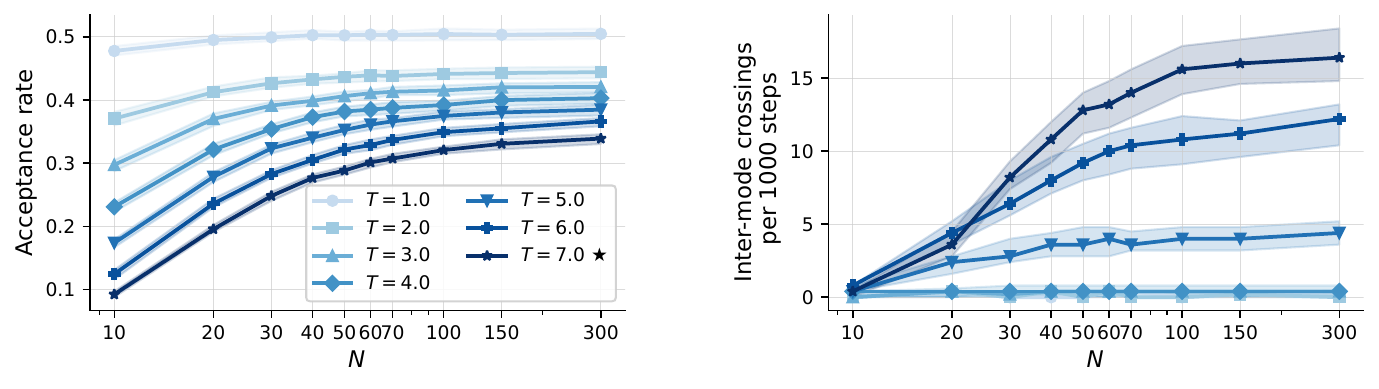}
    \caption{MAD-Path acceptance rate (left) and inter-mode crossing rate (right) with varying horizon $T$ and number of steps $N$ on the fetal deaths in litters example. Lines show the median over 100 replications; shaded regions indicate the interquartile range (25th--75th percentiles). 
    } 
    \label{fig:fetal-deaths}
\end{figure}

\subsection{Sensor network localization}
\label{sec:sensor:network}

We compare MAD-Path against APT-DEO during the initial burn-in phase on a sensor network localization problem studied, for example, in~\citep{ihler2004nonparametric, ahn2013distributed, lan2014wormhole, pompe2020framework}. Eleven sensors are placed in $\R^2$, three of which have known locations. The goal is to infer the remaining eight unknown locations $\bfx_1,\dots,\bfx_8 \in \R^2$ from noisy inter-sensor distance measurements. Let $w_{ij}\in\{0,1\}$ indicate whether the distance between sensors $i$ and $j$ is observed; it follows a Bernoulli distribution with success probability $\exp\!\left(-\frac{\|\bfx_i - \bfx_j\|^2}{2 \times 0.25^2}\right)$. When a measurement is obtained, the observed distance $y_{ij}$ follows $\N\!\left(\|\bfx_i - \bfx_j\|,\, 0.01^2\right)$. Following~\cite{ahn2013distributed, lan2014wormhole, pompe2020framework}, we place an improper uniform prior on each sensor location $\bfx_i$, $1\leq i\leq 8$. The resulting posterior is
\begin{align*}
    p (\bfx_1,\ldots, \bfx_8 \mid y,w) \propto \prod_{\substack{1\leq i\leq 8 \\ i< j \leq 11 }} f_{ij} (\bfx_i, \bfx_j \mid y_{ij}, w_{ij}),
\end{align*}
where
\begin{align*}
    f_{ij} (\bfx_i, \bfx_j \mid y_{ij}, w_{ij}) = \begin{cases}
        \exp \left(-\frac{\| \bfx_i - \bfx_j \|^2}{2 \times 0.25^2}\right)\, \exp\left(-\frac{ (y_{ij} - \| \bfx_i - \bfx_j \|)^2 }{2 \times 0.01^2}\right), & \text{if}\; w_{ij} = 1\\
        1 - \exp\left( -  \frac{\| \bfx_i - \bfx_j \|^2}{2 \times 0.25^2}\right), & \text{otherwise}.
    \end{cases} 
\end{align*}

Sampling from this posterior presents two sources of difficulty. First, given the location of sensor $j$, each observed distance $y_{ij}$ concentrates the marginal density of sensor $i$ near a circle of radius $y_{ij}$ centered at $\bfx_j$; when a sensor is constrained by two or more such distances, these marginals form banana-shaped ridges. Second, the sparsity of observations induces near non-identifiability and, consequently, multimodality. Figure~\ref{fig:sensor-network} in Appendix~\ref{app:experiments:details} visualizes the posterior.

For MAD-Path, we use the natural initialization from the learned reverse diffusion path. For APT-DEO, we consider two initializations, $0.5 \cdot \boldsymbol{1}_d$ and $\boldsymbol{1}_d$. The average acceptance rate of MAD-Path is $0.254$, and the average swap acceptance rates of the two APT-DEO chains are $0.271$ and $0.461$, respectively. Figure~\ref{fig:sensor-network-traces} shows the first 500 iterations of each chain under these initializations. APT-DEO requires hundreds of iterations to locate the posterior mass during its transient phase. MAD-Path exhibits essentially no such transient phase: the mode-finding is instead absorbed into the score-learning stage, so the sampler starts from a distribution that already covers the modes.

\begin{figure}[htbp]
    \centering
    \includegraphics[width=\linewidth]{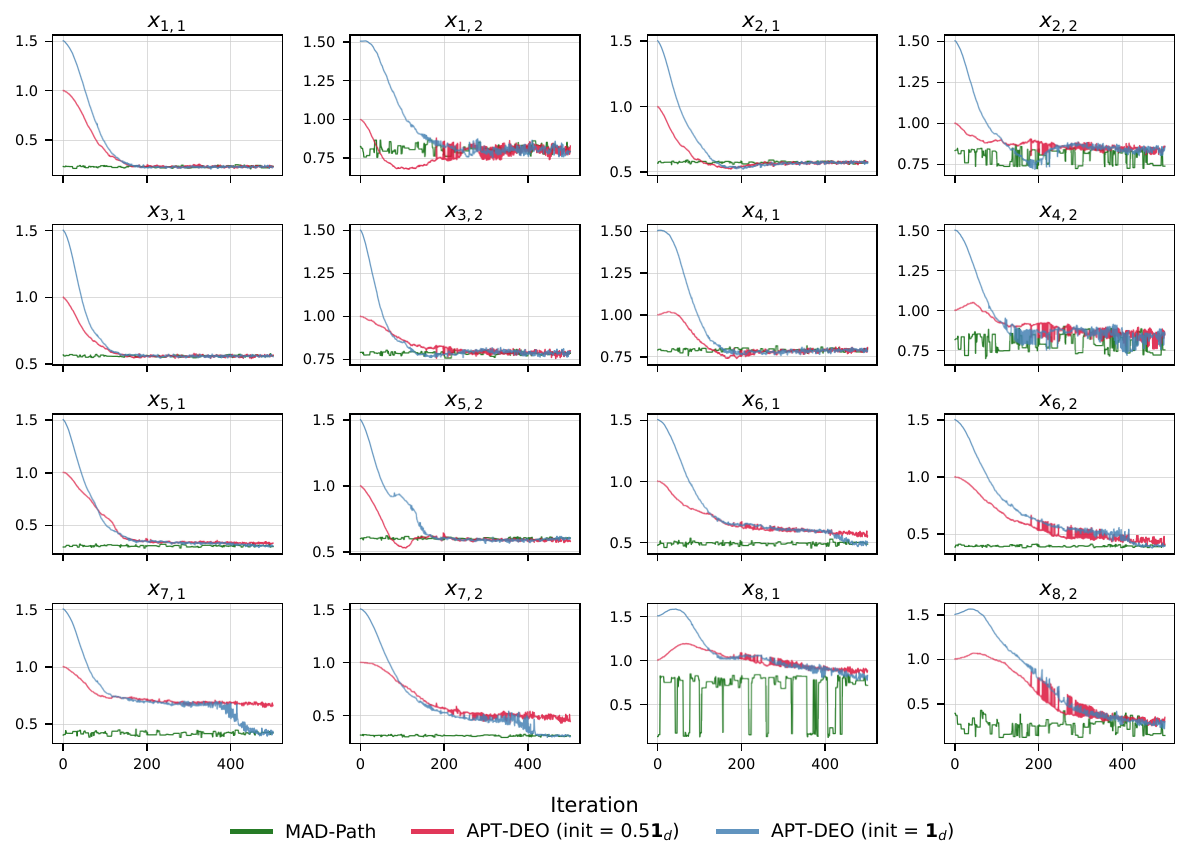}
    \caption{Burn-in phase (first 500 iterations) of MAD-Path (green), APT-DEO with initial state at $0.5 \boldsymbol{1}_d$ (red), and APT-DEO with initial state at $\boldsymbol{1}_d$ (blue).}
    \label{fig:sensor-network-traces}
\end{figure}

\subsection{Seemingly unrelated regression}
\label{sec:sur}

A well-known multimodal posterior arises from the seemingly unrelated regression (SUR) model~\citep{zellner1962efficient, srivastava1987seemingly}. The model comprises $M$ regression equations, each with its own response vector and design matrix but sharing a common error covariance:
\begin{align*}
    \bfy_m = X_m \boldsymbol{\beta}_m + \boldsymbol{\varepsilon}_m, \quad m = 1, \ldots, M,
\end{align*}
where $\bfy_m \in \mathbb{R}^N$ is the response for equation $m$, $X_m \in \mathbb{R}^{N \times k_m}$ is the fixed design matrix, and $\boldsymbol{\beta}_m \in \mathbb{R}^{k_m}$ is the coefficient vector. The errors are jointly distributed as
\begin{align*}
    \boldsymbol{\varepsilon} = (\boldsymbol{\varepsilon}_1, \ldots, \boldsymbol{\varepsilon}_M) \sim \mathcal{N}(0,\, \Sigma \otimes I_N),
\end{align*}
with unknown covariance $\Sigma \in \mathbb{R}^{M \times M}$. We denote the full coefficient vector by $\boldsymbol{\beta} = (\boldsymbol{\beta}_1, \ldots, \boldsymbol{\beta}_M) \in \mathbb{R}^{\sum_m k_m}$. Profiling out the covariance by its MLE yields the profile log-likelihood
\begin{equation}
    \label{eq:sur}
    \ell(\boldsymbol{\beta}) = -\frac{NM}{2}\log(2\pi) - \frac{N}{2}\log\det\left(G(\boldsymbol{\beta})\right) - \frac{NM}{2},
\end{equation}
where $G(\boldsymbol{\beta}) =\frac1N R(\boldsymbol{\beta})^\top R(\boldsymbol{\beta})$ and $R(\boldsymbol{\beta}) = (\bfr_1 \, \cdots \, \bfr_M) \in \mathbb{R}^{N\times M}$ collects the residual vectors $\bfr_m(\boldsymbol{\beta}) = \bfy_m - X_m  \boldsymbol{\beta}_m \in \R^{N}$. Such profile likelihoods can exhibit multimodality in small samples~\cite{drton2004multimodality}. We take $M=5$ equations with scalar coefficients ($k_m = 1$ for all $m$) and $N=10$ observations per equation. This small-$N$ regime induces strong finite-sample multimodality: the profile likelihood exhibits three distinct modes.

MAD-Path is run for 10,000 iterations, achieving an average acceptance rate of $0.327$. APT-DEO is run for 12,000 iterations with the first 2,000 discarded as burn-in; it is initialized with a warm start near one mode and repeated across two random seeds, attaining average swap acceptance rates of $0.467$ and $0.470$, respectively.
Figure~\ref{fig:sur-hist} shows the marginal sample distributions from MAD-Path and APT-DEO, with a reference computed by the No-U-Turn Sampler (NUTS)~\citep{hoffman2014nuts} over 50{,}000 iterations (dashed line). 
MAD-Path recovers sharp, well-separated peaks that align closely with the reference mode locations. In contrast, APT-DEO is unstable across random seeds: the relative heights of the three modes vary noticeably between runs, indicating that the chain has not fully mixed and the estimated mode weights have not settled. Under a cold-start initialization, the transient phase is even longer, further delaying convergence; see Figure~\ref{fig:sur-trace} in Appendix~\ref{app:experiments:details}.

\begin{figure}[htbp]
    \centering
    \includegraphics[width=\linewidth]{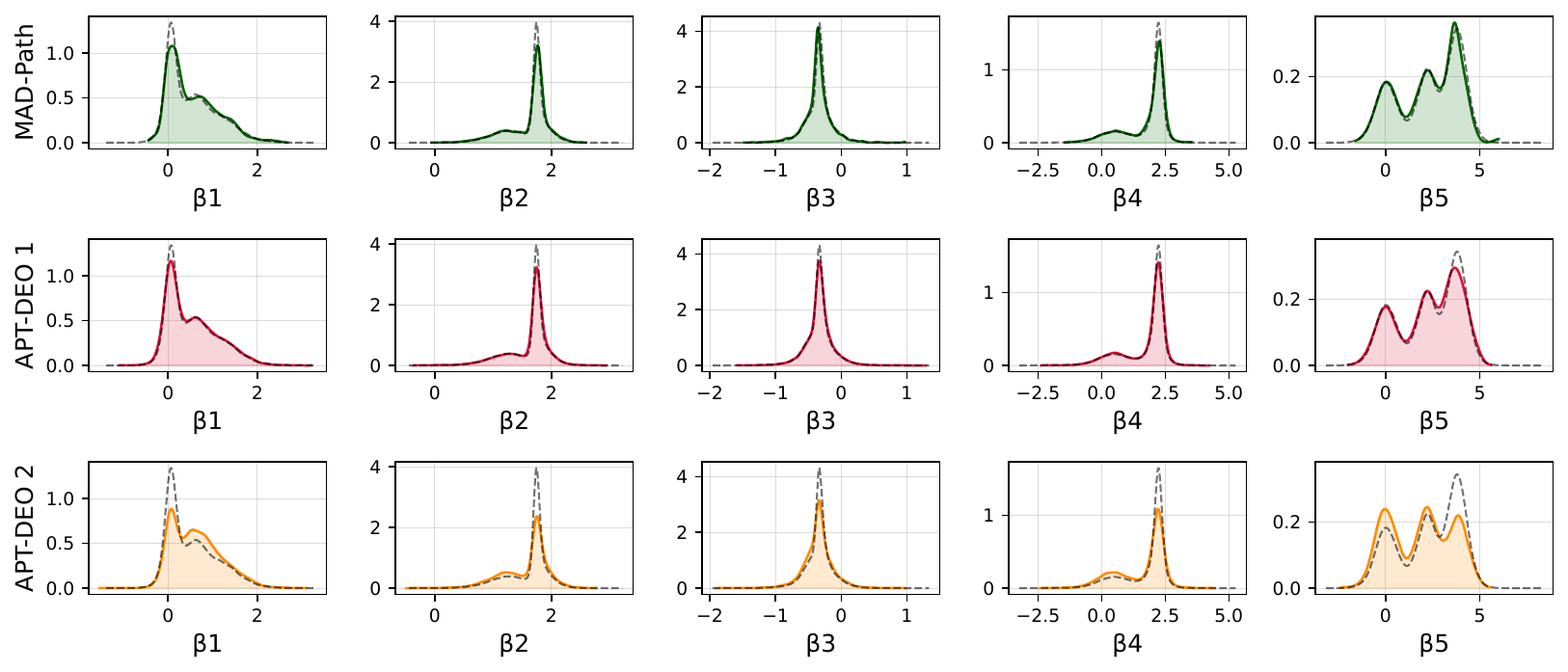}
    \caption{Marginal sample distributions for the SUR example. The top row shows MAD-Path; the bottom two rows show APT-DEO under two different random seeds. Black dashed lines indicate the reference distribution obtained via NUTS.}
    \label{fig:sur-hist}
\end{figure}

\section{Discussion}
\label{sec:discussion}

We introduced MAD-Path, a Metropolis--Hastings sampler that builds nonlocal proposals from a forward--backward diffusion path. The sampler leaves the target invariant even under an approximate score function and discretization error. We showed that the ideal diffusion-path transition has favorable mixing properties, and we characterized how the acceptance probability of MAD-Path degrades under score and discretization error. Finally, we discussed practical considerations and demonstrated the method on a range of challenging multimodal targets.

Several avenues remain for future work. First, practical guidelines are needed for jointly tuning the diffusion horizon $T$ and the number of discretization steps $N$, balancing mode overlap and acceptance probability against computational cost. Second, existing training objectives minimize a one-sided path-space KL divergence~\eqref{equ:reverse:path:kl}, whereas the acceptance probability of MAD-Path is governed by a symmetrized path-space KL (Theorem~\ref{thm:score-error-mean}). Developing tractable objectives that more directly target this symmetrized divergence may therefore improve sampling efficiency. Finally, trajectories generated during sampling could be used to refine the score online, suggesting adaptive variants of MAD-Path.

On the theoretical side, it would be interesting to establish the mixing time of MAD-Path. The present analysis separates the ideal diffusion-path transition from the effects of score and discretization error. A promising route is to view MAD-Path as a hybrid data-augmentation chain on path space. The comparison theory of Markov chains \cite{andrieu2022comparison,qin2025spectral} may then yield spectral-gap bounds for MAD-Path by combining the gap of the ideal diffusion-path kernel (Section~\ref{sec:ideal:transition}) with a conditional mixing bound for the path-space Metropolis update.

\subsection*{Acknowledgments}

We thank Yifan Chen, Marylou Gabrie, Louis Grenioux, Guanyang Wang, Esmeralda Whitammer, Luhuan Wu for helpful discussions on diffusion-based samplers, and Changwoo Lee for discussions on applications of this work. HC acknowledges support from the Danish Data Science Academy (DDSA) Visit Grant (2026-5978). JY acknowledges support from the Independent Research Fund Denmark (DFF) through the Sapere Aude Starting Grant (5251-00032B).

\appendix

\section{Proofs for Section~\ref{sec:ideal:transition}}

\subsection{Proof of Lemma~\ref{lem:gap:maximal:corr}}
\label{prf:gap:maximal:corr}
\begin{proof}[Proof of Lemma~\ref{lem:gap:maximal:corr}]
    Let $\calT$ and $\calT^*$ denote the operators associated with the forward transition $\vec{P}$ and backward transition $\leftvec{P}$. For any $f\in L^2(p_0)$ and $g\in L^2(p_T)$, we have
    \begin{align*}
        \langle f, \calT g\rangle_{L^2(p_0)} = \langle \calT^* f, g\rangle_{L^2(p_T)}
    \end{align*}
    and $P=\calT \calT^*$. Thus, for any $f\in L^2(p_0)$ and $\EE[p_0]{f}=0$,
    \begin{align*}
        \langle f, Pf\rangle_{L^2(p_0)} = \|\calT^* f\|_{L^2(p_T)}^2=\Var{\EE{f(X_0)\mid X_T}}.
    \end{align*}
    Therefore, we have
    \begin{align*}
        \mathrm{Gap}(P)=1-\sup_{\substack{f\in L^2(p_0)\\ \EE[p_0]{f}=0, f\neq 0} } \frac{\langle f, Pf\rangle_{L^2(p_0)}}{\|f\|_{L^2(p_0)}^2} = 1-\sup_{\substack{f\in L^2(p_0)\\ \EE[p_0]{f}=0,f\neq 0 }} \frac{\Var{\EE{f(X_0)\mid X_T}}}{\Var{f(X_0)}}.
    \end{align*}
    The result follows from the definition of maximal correlation and the fact that
    \begin{align}\label{eq:maximal-corr-conditional-variance}
    \sup_{\substack{f\in L^2(p_0)\\ g\in L^2(p_T)}} \mathrm{Corr}(f(X_0), g(X_T))^2 = \sup_{f\in L^2(p_0)} \frac{\Var{\EE{f(X_0)\mid X_T } } }{\Var{f(X_0)}}.
    \end{align}
\end{proof}

\subsection{Proof of Theorem~\ref{thm:gap:poincare}}
\label{prf:gap:poincare}
\begin{proof}[Proof of Theorem~\ref{thm:gap:poincare}]
Let $X\sim p$ and $\bfxi\sim \N(0,I_d)$ be independent.
Let $Z = \sqrt{\alpha} X + \sqrt{1-\alpha} \bfxi$ and let $p_Z$ denote the distribution of $Z$. 
Let $\calT:L^2(p_Z)\to L^2(p)$ be the forward conditional expectation operator
\[
    (\calT g)(\bfx)
    =
    \E[g(Z)\mid X=\bfx]
    =
    \int
    g(\sqrt{\alpha}\bfx+\sqrt{1-\alpha}\bfxi)
    \varphi(\bfxi) \rd \bfxi.
\]
Since the maximal correlation can be equivalently expressed as
\[
    \sup_{g\in L^2(p_Z)} \frac{\Var{\EE{g(Z)\mid X } } }{\Var{g(Z)}},
\]
it suffices to prove that, for every $g\in L^2(p_Z)$ with $\EE[p_Z]{g}=0$,
\[
    \Var[p]{\calT g}
    \le
    \frac{\alpha C_{\mathrm P}}
    {1-\alpha+\alpha C_{\mathrm P}}
    \Var[p_Z]{g}.
\]
Applying integration by parts, 
\[
    \nabla (\calT g)(\bfx)= \sqrt{\frac{\alpha}{1-\alpha}}\, \EE{\bfxi g(Z) \mid X=\bfx}
    = \sqrt{\frac{\alpha}{1-\alpha}}\, \EE{\bfxi (g(Z) - (\calT g)(\bfx)) \mid X=\bfx}.
\]
For any unit vector $\bfu\in\R^d$,
\[
    \bfu\tran \nabla (\calT g)(\bfx)
    = \sqrt{\frac{\alpha}{1-\alpha}}\, \EE{(\bfu\tran \bfxi) (g(Z) - (\calT g)(\bfx)) \mid X=\bfx}.
\]
Applying Cauchy--Schwarz inequality in the conditional law,
\begin{align*}
    \big|\bfu\tran \nabla (\calT g)(\bfx)\big|^2
    \leq \frac{\alpha}{1-\alpha} \EE{(\bfu\tran \bfxi)^2 \mid X=\bfx} \cdot \EE{(g(Z) - (\calT g)(\bfx))^2 \mid X=\bfx}.
\end{align*}
Therefore,
\begin{align*}
\|\nabla (\calT g)(\bfx)\|_2^2 \leq \frac{\alpha}{1-\alpha} \EE{g(Z)^2 - (\calT g)(\bfx)^2 \mid X=\bfx }.
\end{align*}
By the Poincar\'e inequality,
\begin{align*}
    \Var[p]{\calT g} &\leq C_{\mathrm P} \int \|\nabla (\calT g)(\bfx)\|_2^2 p(\rd \bfx) \leq \frac{\alpha C_{\mathrm P}}{1-\alpha} \EE[p]{ \EE{g(Z)^2 - (\calT g)(\bfx)^2 \mid X=\bfx } }\\
    &= \frac{\alpha C_{\mathrm P}}{1-\alpha} \Bigl(\EE{g(Z)^2 } - \EE{(\calT g)(X) ^2 }  \Bigr) \\
    &= \frac{\alpha C_{\mathrm P}}{1-\alpha} \Bigl(\Var[p_Z]{g } - \Var[p]{\calT g }  \Bigr) .
\end{align*}
Rearranging the above inequality gives the desired result.

When $p=\N(\mu,\Sigma)$ and $Z\mid X=\N(\sqrt\alpha X, (1-\alpha)I_d)$, $X$ and $Z$ are jointly Gaussian, and their maximal correlation is equal to the maximum linear correlation~\citep{dembo2001remarks,bryc2005maximum}, which is given by
\begin{align*}
    \rho_{\max}(X,Z)^2=\frac{\alpha \lambda_{\max}(\Sigma)}{1-\alpha+ \alpha \lambda_{\max}(\Sigma) }.
\end{align*}

The upper bound on $\rho_{\max}(X,Z)$ implies a lower bound on $\mathrm{Gap}(P)$ following Lemma~\ref{lem:gap:maximal:corr}. The proof is complete.
\end{proof}

\subsection{Proof of Proposition~\ref{prop:lifted:corr:decomposition}}
\label{prf:lifted:corr:decomposition}

\begin{proof}[Proof of Proposition~\ref{prop:lifted:corr:decomposition}]
Let $X\sim p$ and $\bfxi\sim\N(0,I_d)$ be independent, and let
$Z=\sqrt\alpha X+\sqrt{1-\alpha}\bfxi$.
Write $\vec{P}(\bfx,\rd\bfz)=\vec{p}(\bfx,\bfz)\,\rd\bfz$.
For each label $c\in\calC$, define the noised component density
\begin{align}\label{eq:define-q_c}
    q_c(\bfz)
    :=\int_{\mathbb R^d}
    p_c(\bfx)\vec{p}(\bfx,\bfz)\,\rd\bfx,
\end{align}
and let $q(\bfz)=\sum_{c\in\calC}w_cq_c(\bfz)$ denote the marginal
density of $Z$.

Let $g\in L^2(q)$ be centered with $\Var[q]{g}=1$.
By the law of total variance and the equivalent definition of maximal 
correlation in Equation~\eqref{eq:maximal-corr-conditional-variance},
\begin{align*}
    \EE{\Var{g(Z)\mid C}}
    &=
    1-\Var{\EE{g(Z)\mid C}} \geq
    1-\rho_{\mathrm{lab}}^2.
\end{align*}
Similarly, for each $c\in\calC$, the definition of $\rho_c^2$ gives
\[
    \Var{\EE{g(Z)\mid X,C}\mid C=c}
    \leq
    \rho_c^2\Var{g(Z)\mid C=c}.
\]
Applying the law of total variance under the conditional law given
$C=c$, we obtain
\begin{align*}
    \EE{\Var{g(Z)\mid X,C}\mid C=c}
    &=
    \Var{g(Z)\mid C=c}
    -
    \Var{\EE{g(Z)\mid X,C}\mid C=c} \\
    &\geq
    (1-\rho_c^2)\Var{g(Z)\mid C=c}.
\end{align*}
Averaging over $C$ therefore yields
\begin{align*}
    \EE{\Var{g(Z)\mid X,C}}
    &\geq
    \sum_{c\in\calC}
    w_c(1-\rho_c^2)\Var{g(Z)\mid C=c} \\
    &\geq
    \left(1-\max_{c\in\calC}\rho_c^2\right)
    \EE{\Var{g(Z)\mid C}} \\
    &\geq
    \left(1-\max_{c\in\calC}\rho_c^2\right)
    \bigl(1-\rho_{\mathrm{lab}}^2\bigr).
\end{align*}
Since
\[
    \Var{\EE{g(Z)\mid X,C}}
    =
    1-\EE{\Var{g(Z)\mid X,C}},
\]
taking the supremum over all centered, unit-variance $g\in L^2(q)$
gives
\begin{align*}
    \tilde\rho_{\max}^2
    &=
    \sup_{\substack{g\in L^2(q)\\ \EE[q]{g}=0,\ \Var[q]{g}=1}}
    \Var{\EE{g(Z)\mid X,C}} \\
    &\leq
    1-
    \bigl(1-\rho_{\mathrm{lab}}^2\bigr)
    \left(1-\max_{c\in\calC}\rho_c^2\right) \\
    &=
    \rho_{\mathrm{lab}}^2
    +
    \bigl(1-\rho_{\mathrm{lab}}^2\bigr)
    \max_{c\in\calC}\rho_c^2.
\end{align*}
The lower bound on the spectral gap follows from
$\mathrm{Gap}(P)= 1-\tilde\rho_{\max}^2$.
\end{proof}

\subsection{Proof of Theorem~\ref{thm:mixture:poincare:gap:bound}}
\label{prf:mixture:poincare:gap:bound}

\begin{proof}[Proof of Theorem~\ref{thm:mixture:poincare:gap:bound}]
Since each $p_c$ satisfies a Poincar\'e inequality, Theorem~\ref{thm:gap:poincare} implies
\begin{align*}
    \mathrm{Gap}(P_c) \geq \frac{1-\alpha}{1-\alpha + \alpha C_{\mathrm P}}.
\end{align*}
Following Proposition~\ref{prop:lifted:corr:decomposition}, it suffices to lower bound the label transition gap $\mathrm{Gap}(P_{\mathrm{lab}})=1-\rho_{\mathrm{lab}}^2$. 

For distinct $c,c'\in\calC$, define the pairwise overlap
\begin{align}\label{eq:define-lambda}
\lambda_{c,c'}^{(w)}
:=
\int
\frac{q_c(\bfz)q_{c'}(\bfz)}
     {q(\bfz)}
\,\rd\bfz,
\qquad
q(\bfz)=\sum_{k\in\calC}w_kq_k(\bfz),
\end{align}
where $q_c$ is the density of the noised component corresponding to class $c$, as defined in Equation~\eqref{eq:define-q_c}.
Lemma~\ref{lem:lambda:expression} below gives
\[
\mathrm{Gap}(P_{\mathrm{lab}})
\geq
\min_{\substack{c,c'\in\calC\\c\neq c'}}
\lambda_{c,c'}^{(w)}.
\]
By Lemma~\ref{lem:overlap:wasserstein},
\[
\lambda_{c,c'}^{(w)}
\geq
\exp\left\{
-\frac{\alpha}{4(1-\alpha)}
W_2^2(p_c,p_{c'})
\right\}.
\]
Combining this inequality with Proposition~\ref{prop:lifted:corr:decomposition} and the bounds on $\mathrm{Gap}(P_c)$ yields the desired result.

\end{proof}

\begin{lem}[Label transition gap via overlap]
\label{lem:lambda:expression}
Recall $\lambda_{c,c'}^{(w)}$ is defined as in Equation~\eqref{eq:define-lambda}.
We have
\[
\mathrm{Gap}(P_{\mathrm{lab}})
\geq
\min_{\substack{c,c'\in\calC\\c\neq c'}}
\lambda_{c,c'}^{(w)}.
\]

\end{lem}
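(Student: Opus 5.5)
The plan is to compute the label kernel $P_{\mathrm{lab}}$ explicitly, show that its off-diagonal entries dominate a multiple of the independence kernel $(c,c')\mapsto w_{c'}$, and then compare Dirichlet forms.

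\emph{Step 1: explicit form of $P_{\mathrm{lab}}$.} By Bayes' rule, the posterior label probability given $Z=\bfz$ is $\bbP(C'=c'\mid Z=\bfz)=w_{c'}q_{c'}(\bfz)/q(\bfz)$. Given $C=c$, the noised variable $Z$ has density $q_c$ from~\eqref{eq:define-q_c}. Hence
\[
P_{\mathrm{lab}}(c,c')=\int q_c(\bfz)\,\frac{w_{c'}q_{c'}(\bfz)}{q(\bfz)}\,\rd\bfz = w_{c'}\,\lambda^{(w)}_{c,c'} .
\]
Since $\lambda^{(w)}_{c,c'}$ is symmetric in $(c,c')$, this also shows directly that $w_cP_{\mathrm{lab}}(c,c')$ is symmetric. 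So $P_{\mathrm{lab}}$ is reversible with respect to $w$, as expected from its data-augmentation structure.

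\emph{Step 2: variational form of the gap.} With the paper's definition of the gap via the Rayleigh quotient, and using reversibility, I write
\[
\mathrm{Gap}(P_{\mathrm{lab}})=\inf_{f\ \text{non-constant}}\frac{\mathcal E(f,f)}{\Var[w]{f}},
\qquad
\mathcal E(f,f)=\tfrac12\sum_{c,c'\in\calC}w_cP_{\mathrm{lab}}(c,c')\bigl(f(c)-f(c')\bigr)^2 .
\]
This identity follows from $\langle f,f\rangle_w-\langle f,P_{\mathrm{lab}}f\rangle_w=\mathcal E(f,f)$, applied to centered $f$. It is consistent with $\mathrm{Gap}(P_{\mathrm{lab}})=1-\rho_{\mathrm{lab}}^2$, because $P_{\mathrm{lab}}$ factors as $\calT\calT^*$ through $Z$, exactly as in the proof of Lemma~\ref{lem:gap:maximal:corr}.

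\emph{Step 3: comparison.} Diagonal terms $c=c'$ contribute nothing to $\mathcal E$. For $c\neq c'$, Step 1 gives $w_cP_{\mathrm{lab}}(c,c')=w_cw_{c'}\lambda^{(w)}_{c,c'}\ge w_cw_{c'}\,\lambda_{\min}$, where $\lambda_{\min}:=\min_{c\ne c'}\lambda^{(w)}_{c,c'}$. Therefore
\[
\mathcal E(f,f)\ge \lambda_{\min}\cdot\tfrac12\sum_{c,c'}w_cw_{c'}\bigl(f(c)-f(c')\bigr)^2=\lambda_{\min}\Var[w]{f},
\]
using the standard identity $\tfrac12\sum_{c,c'}w_cw_{c'}(f(c)-f(c'))^2=\Var[w]{f}$. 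Taking the infimum over non-constant $f$ gives the claim.

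The only point needing care is Step 2: matching the paper's operator definition of the gap to the Dirichlet-form expression, which relies on reversibility with respect to $w$. Once $P_{\mathrm{lab}}(c,c')=w_{c'}\lambda^{(w)}_{c,c'}$ is in hand, this is routine.
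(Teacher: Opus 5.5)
Your proposal is correct and follows the paper's proof essentially step for step. You compute $P_{\mathrm{lab}}(c,c')=w_{c'}\lambda^{(w)}_{c,c'}$, note reversibility with respect to $w$, and bound the Dirichlet form below by $\min_{c\neq c'}\lambda^{(w)}_{c,c'}$ times $\tfrac12\sum_{c,c'}w_cw_{c'}(f(c)-f(c'))^2=\Var[w]{f}$; your only addition is making explicit the Rayleigh-quotient-to-Dirichlet-form identification, which the paper uses implicitly and which in fact needs only stationarity of $w$, not full reversibility.
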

\begin{proof}[Proof of Lemma~\ref{lem:lambda:expression}]
    Given $C=c$, the variable $Z$ has density $q_c$, while
\[
\PP{C'=c'\mid Z=\bfz}
=
\frac{w_{c'}q_{c'}(\bfz)}{q(\bfz)}.
\]
Hence the label-transition kernel is
\[
P_{\mathrm{lab}}(c,c')
=
w_{c'}
\int
\frac{q_c(\bfz)q_{c'}(\bfz)}
     {q(\bfz)}
\,\rd\bfz
=
w_{c'}\lambda_{c,c'}^{(w)}.
\]
In particular,
\[
w_cP_{\mathrm{lab}}(c,c')
=
w_cw_{c'}\lambda_{c,c'}^{(w)}
=
w_{c'}P_{\mathrm{lab}}(c',c),
\]
so $P_{\mathrm{lab}}$ is reversible with respect to the mixture weights $\{w_c\}_{c\in\calC}$.

For any $h:\calC\to\mathbb R$, its Dirichlet form satisfies
\begin{align*}
\mathcal E_{\mathrm{lab}}(h,h)
&=
\frac12
\sum_{c,c'\in\calC}
w_cP_{\mathrm{lab}}(c,c')
\bigl(h(c)-h(c')\bigr)^2 \\
&=
\frac12
\sum_{c,c'\in\calC}
w_cw_{c'}\lambda_{c,c'}^{(w)}
\bigl(h(c)-h(c')\bigr)^2 \\
&\geq
\frac12
\min_{\substack{c,c'\in\calC\\c\neq c'}}
\lambda_{c,c'}^{(w)}
\sum_{c,c'\in\calC}
w_cw_{c'}
\bigl(h(c)-h(c')\bigr)^2 \\
&=
\min_{\substack{c,c'\in\calC\\c\neq c'}}
\lambda_{c,c'}^{(w)}
\Var{h(C)}.
\end{align*}
Taking the infimum over all nonconstant $h$ proves the result.

\end{proof}

\begin{lem}[Overlap lower bound via Wasserstein distance]
\label{lem:overlap:wasserstein}
For every distinct $c,c'\in\calC$,
\[
\lambda_{c,c'}^{(w)}
\geq
\exp\left\{
-\frac{\alpha}{4(1-\alpha)}
W_2^2(p_c,p_{c'})
\right\}.
\]
\end{lem}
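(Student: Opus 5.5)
The plan is to lower bound the overlap by a squared Bhattacharyya coefficient, which removes the dependence on the weights. Then we control that coefficient through a coupling of $p_c$ and $p_{c'}$, using the explicit Gaussian form of the forward kernel $\vec P$.

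\emph{Step 1 (removing the weights).} Since $q=\sum_k w_kq_k$ is a probability density, the Cauchy--Schwarz inequality gives
\[
\Bigl(\int\sqrt{q_c(\bfz)q_{c'}(\bfz)}\,\rd\bfz\Bigr)^2
=\Bigl(\int\sqrt{\tfrac{q_cq_{c'}}{q}}\,\sqrt{q}\,\rd\bfz\Bigr)^2
\le \int\frac{q_cq_{c'}}{q}\,\rd\bfz\cdot\int q\,\rd\bfz
=\lambda_{c,c'}^{(w)}.
\]
Hence $\lambda_{c,c'}^{(w)}\ge \mathrm{BC}(q_c,q_{c'})^2$, where $\mathrm{BC}(q_c,q_{c'}):=\int\sqrt{q_cq_{c'}}$ is the Bhattacharyya coefficient. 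This bound no longer involves $(w_k)$, which explains why the final result is uniform over the mixture weights.

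\emph{Step 2 (coupling and concavity).} Let $\pi$ be an optimal $W_2$ coupling of $p_c$ and $p_{c'}$. Write $\varphi_{\bfx}(\bfz):=\varphi(\bfz;\sqrt\alpha\bfx,(1-\alpha)I_d)$. Then
\[
q_c(\bfz)=\int\varphi_{\bfx}(\bfz)\,\pi(\rd\bfx,\rd\bfx'),\qquad
q_{c'}(\bfz)=\int\varphi_{\bfx'}(\bfz)\,\pi(\rd\bfx,\rd\bfx').
\]
Apply Cauchy--Schwarz in $L^2(\pi)$ to the functions $\sqrt{\varphi_{\bfx}(\bfz)}$ and $\sqrt{\varphi_{\bfx'}(\bfz)}$. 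This is the joint concavity and homogeneity of $(a,b)\mapsto\sqrt{ab}$, and it gives pointwise
\[
\sqrt{q_c(\bfz)q_{c'}(\bfz)}\ge\int\sqrt{\varphi_{\bfx}(\bfz)\varphi_{\bfx'}(\bfz)}\,\pi(\rd\bfx,\rd\bfx').
\]
Integrating over $\bfz$ and using Tonelli yields
\[
\mathrm{BC}(q_c,q_{c'})\ge\int\mathrm{BC}(\varphi_{\bfx},\varphi_{\bfx'})\,\rd\pi.
\]

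\emph{Step 3 (Gaussian computation and Jensen).} For two Gaussians with common covariance $\sigma^2I_d$ and means $\bmu,\bmu'$, a routine computation gives $\mathrm{BC}=\exp\{-\|\bmu-\bmu'\|^2/(8\sigma^2)\}$. Here $\sigma^2=1-\alpha$ and $\bmu-\bmu'=\sqrt\alpha(\bfx-\bfx')$, so
\[
\mathrm{BC}(\varphi_{\bfx},\varphi_{\bfx'})=\exp\Bigl\{-\frac{\alpha\|\bfx-\bfx'\|^2}{8(1-\alpha)}\Bigr\}.
\]
By Jensen's inequality (convexity of $\exp$),
\[
\mathrm{BC}(q_c,q_{c'})\ge\exp\Bigl\{-\frac{\alpha\,\mathbb E_\pi\|\bfx-\bfx'\|^2}{8(1-\alpha)}\Bigr\}=\exp\Bigl\{-\frac{\alpha W_2^2(p_c,p_{c'})}{8(1-\alpha)}\Bigr\}.
\]
Squaring and combining with Step 1 gives the claimed bound.

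\emph{Main obstacle.} The key difficulty is Step 1: the quantity $\lambda_{c,c'}^{(w)}$ depends on all of the weights through $q$, so a direct estimate looks hopeless. The Cauchy--Schwarz trick against $\int q=1$ is what reduces the problem to a weight-free quantity. The remaining steps are standard, apart from minor measurability or attainment issues. If an optimal coupling does not exist, one can take a near-optimal coupling and let the error tend to zero.
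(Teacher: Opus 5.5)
Your proposal is correct and follows the paper's proof essentially step for step. You reduce to the squared Bhattacharyya coefficient $I_{\alpha;c,c'}^2$ via Cauchy--Schwarz against $\int q=1$, apply Cauchy--Schwarz under the coupling, use the Gaussian computation $\exp\{-\alpha\|\bfx-\bfx'\|^2/(8(1-\alpha))\}$, then Jensen, then square; the near-optimal-coupling caveat is harmless but unnecessary, since an optimal $W_2$ coupling exists whenever $W_2(p_c,p_{c'})<\infty$, and the bound is trivial otherwise.
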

\begin{proof}[Proof of Lemma~\ref{lem:overlap:wasserstein}]
By the Cauchy--Schwarz inequality,
\begin{align}
\lambda_{c,c'}^{(w)}
&=
\int
\frac{q_c(\bfz)q_{c'}(\bfz)}
     {q(\bfz)}
\,\rd\bfz
\cdot
\int q(\bfz)\,\rd\bfz \geq
\left(
\int\sqrt{q_c(\bfz)q_{c'}(\bfz)}\,\rd\bfz
\right)^2
=:I_{\alpha;c,c'}^2.
\label{equ:overlap:lower:bound}
\end{align}
Let $\Gamma$ be any coupling of $p_c$ and $p_{c'}$, i.e. $\Gamma$ is a probability measure on $\mathbb R^d\times \mathbb R^d$ with marginals $p_c$ and $p_{c'}$. Then we can write
\[
q_c(\bfz)
=
\int
\varphi\bigl(
\bfz;\sqrt{\alpha}\bfx,(1-\alpha)I_d
\bigr)
\,\Gamma(\rd\bfx,\rd\bfy),\quad
q_{c'}(\bfz)
=
\int
\varphi\bigl(
\bfz;\sqrt{\alpha}\bfy,(1-\alpha)I_d
\bigr)
\,\Gamma(\rd\bfx,\rd\bfy).
\]
Applying the Cauchy--Schwarz inequality with respect to $\Gamma$ gives
\begin{align*}
    I_{\alpha;c,c'}&= \int \sqrt{q_c(\bfz)q_{c'}(\bfz)}\,\rd\bfz \geq \int \sqrt{
        \varphi(\bfz; \sqrt\alpha \bfx, (1-\alpha)I_d)
        \varphi(\bfz; \sqrt\alpha \bfy, (1-\alpha)I_d)
    }\, \rd\bfz \Gamma(\rd\bfx,\rd\bfy)\\
    &=\int \left(2\pi(1-\alpha)\right)^{-d/2} \Exp{-\frac{\|\bfz-\sqrt\alpha\bfx\|^2+\|\bfz-\sqrt\alpha\bfy\|^2}{4(1-\alpha)}} \rd\bfz \Gamma(\rd\bfx,\rd\bfy) \\
    &= \int \Exp{-\frac{\alpha\|\bfx-\bfy\|^2}{8(1-\alpha)}}\Gamma(\rd\bfx,\rd\bfy).
\end{align*}
By Jensen's inequality,
\[
I_{\alpha;c,c'}
\geq
\exp\left\{
-\frac{\alpha}{8(1-\alpha)}
\int
\|\bfx-\bfy\|^2
\,\Gamma(\rd\bfx,\rd\bfy)
\right\}.
\]
Taking $\Gamma$ to be an optimal coupling of $p_c$ and $p_{c'}$ yields
\[
I_{\alpha;c,c'}
\geq
\exp\left\{
-\frac{\alpha}{8(1-\alpha)}
W_2^2(p_c,p_{c'})
\right\}.
\]
Squaring this bound and applying
Equation~\eqref{equ:overlap:lower:bound} completes the proof.

\end{proof}

\subsection{Proof of Corollary~\ref{cor:gaussian_mixture_dimension_free_gap}}
\label{prf:gaussian_mixture_dimension_free_gap}

\begin{proof}[Proof of Corollary~\ref{cor:gaussian_mixture_dimension_free_gap}]
Let $m=\sqrt\alpha b\bone_d$, $a_{\pm}^2=\alpha \sigma_{\pm}^2 + (1-\alpha)$. 
Direct calculation gives
\begin{align*}
    I_\alpha&=\int \sqrt{q_+(\bfz) q_-(\bfz) }\rd \bfz =\frac{1}{(2\pi)^{d/2} (a_+ a_-)^{d/2}} \int \exp\Bigl(-\frac{\|\bfz - m\|^2}{4a_+^2} -\frac{\|\bfz + m\|^2}{4a_-^2}\Bigr) \rd \bfz\\
    &=\Bigl(\frac{2a_+a_-}{a_+^2 + a_-^2} \Bigr)^{d/2} \exp\Bigl(-\frac{\|m\|^2}{(a_+^2 + a_-^2)}\Bigr).
\end{align*}
When $\alpha=\kappa/d$,
\[
    a_c^2
    =
    1+\frac{\kappa(\sigma_c^2-1)}{d},
    \qquad c\in\{+,-\}.
\]
Thus
\[
    a_c^2=1+O(d^{-1}),
    \qquad
    a_c=1+O(d^{-1}).
\]
Moreover, $\|m\|^2=\alpha b^2 d=\kappa b^2$.
Therefore,
\[
    \left(
        \frac{2a_+a_-}{a_+^2+a_-^2}
    \right)^d
    =
    e^{o(1)}
\]
and
\[
    \exp\left\{
        -\frac{2\|m\|^2}{a_+^2+a_-^2}
    \right\}
    =
    \exp\{-\kappa b^2+o(1)\}.
\]
Hence, by Equation~\eqref{equ:overlap:lower:bound},
\[
    \lambda_{+,-}^{(w)} \ge I_\alpha^2
    \ge
     e^{-\kappa b^2+o(1)}.
\]
On the other hand, since $\sigma_+^2>\sigma_-^2$,
\[
    1-\max_{c\in\{+,-\}}\rho_c^2
    =
    \frac{1-\alpha}{1-\alpha+\alpha\sigma_+^2}
    \longrightarrow 1
\]
as $d\to\infty$ with $\alpha=\kappa/d$.
Therefore, by Proposition~\ref{prop:lifted:corr:decomposition} and Lemma~\ref{lem:lambda:expression}
\[
    \mathrm{Gap}(P)
    \geq
    \lambda_{+,-}^{(w)}
    \left(1-\max_{c\in\{+,-\}}\rho_c^2\right)
    \geq
    I_\alpha^2
    \frac{1-\alpha}{1-\alpha+\alpha\sigma_+^2}.
\]
Since $I_\alpha^2\to e^{-\kappa b^2}$, it follows that
\[
    \liminf_{d\to\infty}\mathrm{Gap}(P)
    \geq e^{-\kappa b^2}.
\]
\end{proof}

\section{Proofs for Section~\ref{sec:analysis}}

\subsection{Proof of Theorem~\ref{thm:exact-score-discretization}}
\label{prf:exact-score-discretization}
\begin{proof}[Proof of Theorem~\ref{thm:exact-score-discretization}]
Let
\[
    b_t^\star(\bfx):=-f_t(\bfx)+g_t^2s_t(\bfx)
\]
denote the exact drift used in the backward pass. 
Denote
\begin{align*}
    A_t(\bfx):=\nabla f_t(\bfx)
    -\frac{g_t^2}{2}\nabla s_t(\bfx),
\end{align*}
which is a symmetric matrix by the assumption that $\nabla f_t$ is symmetric.
Set $h=T/N$, $t_k=kh$ and abbreviate
$f_k=f_{t_k}$,
$b_k^\star=b_{t_k}^\star$,
$g_k=g_{t_k}$, $L_k=L_{t_k}$, $A_k=A_{t_k}$.
The Euler--Maruyama paths satisfy
\begin{align*}
    \bfx_{k+1}
    &=
    \bfx_k+hf_k(\bfx_k)
    +\sqrt h\,g_k\bfep_k,
    &
    \bfep_k
    &\stackrel{\mathrm{i.i.d.}}{\sim}\mathcal N(\mathbf{0},I_d),
    \\
    \bfy_k
    &=
    \bfy_{k+1}+hb_k^\star(\bfy_{k+1})
    +\sqrt h\,g_k\bfzeta_k,
    &
    \bfzeta_k
    &\stackrel{\mathrm{i.i.d.}}{\sim}\mathcal N(\mathbf{0},I_d),
\end{align*}
with $\bfy_N=\bfx_N$, and the two innovation sequences $\{\bfep_k\}_{k=0}^{N-1}$ and $\{\bfzeta_k\}_{k=0}^{N-1}$ are independent.

The log-ratio decomposes as
\begin{align*}
    \Delta_h
    =
    \sum_{k=0}^{N-1}\Delta_k^{\mathrm f}
    +
    \sum_{k=0}^{N-1}\Delta_k^{\mathrm b},
\end{align*}
where
\begin{align*}
    \Delta_k^{\mathrm f}
    &:={}
    \log\frac{\leftvec{p}_k(\bfx_k\mid\bfx_{k+1})}
                  {\vec{p}_k(\bfx_{k+1}\mid\bfx_k)}
    +L_{k+1}(\bfx_{k+1})-L_k(\bfx_k),
    \\
    \Delta_k^{\mathrm b}
    &:={}
    \log\frac{\vec{p}_k(\bfy_{k+1}\mid\bfy_k)}
                  {\leftvec{p}_k(\bfy_k\mid\bfy_{k+1})}
    +L_k(\bfy_k)-L_{k+1}(\bfy_{k+1}).
\end{align*}

For a symmetric matrix $A\in\R^{d\times d}$, define
\[
    Q_A(\bfxi)
    :=\bfxi\tran A\bfxi-\tr(A).
\]
In Lemma~\ref{lem:exact-score-local-expansion},
we perform local Taylor expansions of $\Delta_k^{\mathrm f}$ and $\Delta_k^{\mathrm b}$, and decompose $\Delta_h$ into a martingale term and a remainder term as
\begin{align*}
    \Delta_h = M + R,
\end{align*}
where
\begin{equation}\label{eq:def:M:R}
\begin{aligned}
    M&:=M^{\mathrm f}+M^{\mathrm b},\quad
    M^{\mathrm f}:=h\sum_{k=0}^{N-1}Q_{A_k(\bfx_k)}(\bfep_k),\quad
    M^{\mathrm b}:=-h\sum_{k=0}^{N-1}Q_{A_k(\bfy_{k+1})}(\bfzeta_k),\\
    R&:=\Delta_h-M.
\end{aligned}
\end{equation}
In Lemma~\ref{lem:exact-score-global-expansion}, we take expectations and show that the martingale term dominates the remainder term, giving
\begin{equation}\label{equ:M:R:moments}
\begin{aligned}
    &\mathbb E[M] = 0,\quad \operatorname{Var}(M) = h\mathcal D(T) + O(h^{3/2}),\\
    &\|M\|_{L^4} = O(h^{1/2}),\quad \|R\|_{L^2}+\|R\|_{L^4} = O(h).
\end{aligned}
\end{equation}
The Cauchy--Schwarz inequality gives
$|\operatorname{Cov}(M,R)|=O(h^{3/2})$.
Therefore,
\begin{align*}
    \operatorname{Var}(\Delta_h)
    &=
    \operatorname{Var}(M)
    +2\operatorname{Cov}(M,R)
    +\operatorname{Var}(R)
    =
    \operatorname{Var}(M)+O(h^{3/2}).
\end{align*}
This proves that
\[
    \operatorname{Var}(\Delta_h)= h\mathcal D(T) + O(h^{3/2}).
\]
It remains to prove $\EE{\Delta_h} = -\frac12\var(\Delta_h) + O(h^{3/2})$.

Taylor expansion of $e^{\Delta_h}-1$ to third order gives
\begin{align*}
    e^{\Delta_h} - 1 = \Delta_h + \frac12\Delta_h^2 + \frac16\Delta_h^3 + R_4(\Delta_h),
    \qquad
    |R_4(\Delta_h)|
    \leq
    C(1+e^{\Delta_h})|\Delta_h|^4.
\end{align*}
Since $\EE{e^{\Delta_h}}=1$, we have
\begin{align}\label{equ:taylo:expansion:exp:delta}
    0
    &=
    \mathbb E[e^{\Delta_h}-1] 
    =
    \mathbb E[\Delta_h]
    +\frac12\mathbb E[\Delta_h^2]
    +\frac16\mathbb E[\Delta_h^3]
    +O\!\left(
        \mathbb E[
            (1+e^{\Delta_h})|\Delta_h|^4
        ]
      \right).
\end{align}
From Equation~\eqref{equ:M:R:moments}, we have
\[
    \mathbb E|\Delta_h|^4=O(h^2),
\]
and hence by Cauchy--Schwarz,
\[
    \EE{\Delta_h^2} =O(h).
\]
For any integrable function $\psi$, we have $\EE{e^{\Delta_h}\psi(\Delta_h)}=\EE{\psi(-\Delta_h)}$ by Lemma~\ref{lem:exact-score-path-reversal}.
Taking $\psi(x)=|x|^4$ gives
\[
    \EE{e^{\Delta_h}|\Delta_h|^4}
    =
    \EE{|\Delta_h|^4}
    =
    O(h^2).
\]
Taking $\psi(x)=|x|^2$ gives
\begin{align*}
    0
    &=
    \EE{\Delta_h^2(e^{\Delta_h}-1)} 
    =
    \EE{\Delta_h^3}
    +O\left(
        \EE{(1+e^{\Delta_h})|\Delta_h|^4}
      \right),
\end{align*}
so that
\[
    \EE{\Delta_h^3}=O(h^2).
\]
Plugging $\EE{\Delta_h^3}=O(h^2), \EE{(1+e^{\Delta_h} )|\Delta_h|^4}=O(h^2)$ into Equation~\eqref{equ:taylo:expansion:exp:delta} gives
\[
    \EE{\Delta_h}
    =
    -\frac12\EE{\Delta_h^2}+O(h^2).
\]
This together with $\EE{\Delta_h^2}=O(h)$ implies that $|\EE{\Delta_h}| = O(h)$. Therefore,
\[
    \EE{\Delta_h^2} = \var(\Delta_h) + |\EE{\Delta_h}|^2
    =\var(\Delta_h)+O(h^2).
\]
Combining the last two displays gives
\[
    \EE{\Delta_h}
    =
    -\frac12\operatorname{Var}(\Delta_h)+O(h^2) = -\frac{h}{2}\calD(T)+O(h^{3/2}).
\]

\end{proof}

\begin{lem}[Local expansion]
\label{lem:exact-score-local-expansion}
Let
\[
    A_k(\bfx)
    :=
    \nabla f_k(\bfx)
    -\frac{g_k^2}{2}\nabla s_k(\bfx),
    \qquad
    Q_A(\bfxi)
    :=
    \bfxi\tran A\bfxi-\tr(A).
\]
Then
\begin{align*}
    \Delta_k^{\mathrm f}
    &=
    hQ_{A_k(\bfx_k)}(\bfep_k)
    +h^{3/2}U_k^{\mathrm f} + O_{L^4}(h^2),\\
    \Delta_k^{\mathrm b}
    &=
    -hQ_{A_k(\bfy_{k+1})}(\bfzeta_k)
    +h^{3/2}U_k^{\mathrm b} + O_{L^4}(h^2),
\end{align*}
and
\begin{equation}
\label{eq:exact-score-U-centered}
    \mathbb E[U_k^{\mathrm f}\mid\bfx_k]=0,
    \qquad
    \mathbb E[U_k^{\mathrm b}\mid\bfy_{k+1}]=0.
\end{equation}
\end{lem}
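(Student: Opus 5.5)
We will prove the forward expansion by writing every term of $\Delta_k^{\mathrm f}$ explicitly in $\sqrt h$ and $\bfep_k$, and Taylor expanding about $(t_k,\bfx_k)$. The backward expansion is analogous, with the roles swapped.

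Write $\bfx_{k+1}=\bfx_k+\delta$ with $\delta=hf_k(\bfx_k)+\sqrt h g_k\bfep_k$. The forward log-density is explicit:
\[
\log\vec p_k(\bfx_{k+1}\mid\bfx_k)=-\tfrac12\|\bfep_k\|^2+\mathrm{const}.
\]
The backward log-density evaluates $\bfx_k$ under the Gaussian centered at $\bfx_{k+1}+hb_k^\star(\bfx_{k+1})$. Its residual is
\[
\bfx_k-\bfx_{k+1}-hb_k^\star(\bfx_{k+1})=-\delta-hb^\star_k(\bfx_{k+1}).
\]
We then expand $b^\star_k(\bfx_{k+1})=b_k^\star(\bfx_k)+\nabla b_k^\star(\bfx_k)\delta+\tfrac12\nabla^2 b_k^\star[\delta,\delta]+\dots$ and note that $f_k+b_k^\star=g_k^2 s_k$. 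The residual becomes
\[
-\sqrt h g_k\bfep_k-hg_k^2s_k(\bfx_k)-h^{3/2}g_k\nabla b^\star_k\bfep_k+O(h^2).
\]
Hence
\[
\log\leftvec p_k-\log\vec p_k=-\tfrac{1}{2g_k^2h}\|\text{residual}\|^2+\tfrac12\|\bfep_k\|^2.
\]
Next we expand
\[
L_{k+1}(\bfx_{k+1})-L_k(\bfx_k)=h\partial_tL+\nabla L\cdot\delta+\tfrac12\delta^\top\nabla^2L\,\delta+\tfrac16\nabla^3L[\delta^3]+O(h^2),
\]
using $L\in C^{2,4}$. Then we collect orders. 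At order $h^{1/2}$ the terms $\mp s_k\cdot g_k\bfep_k$ cancel. At order $h$ we obtain
\[
-g_k^2\bfep_k^\top\nabla b_k^\star\bfep_k/ \ldots + \tfrac{g_k^2}{2}\bfep^\top\nabla s_k\bfep + \partial_t L + s\cdot f - \tfrac{g^2}{2}\|s\|^2 + g^2\|s\|^2 \ldots
\]
These deterministic-in-$\bfep$ pieces combine via the Fokker--Planck equation, rewritten for $L$ as
\[
\partial_tL=-\nabla\!\cdot f-f\cdot s+\tfrac{g^2}{2}(\Delta L+\|s\|^2),
\]
with $g$ depending only on $t$. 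This identity should turn the $O(h)$ term into exactly $\bfep^\top A_k\bfep-\tr A_k=Q_{A_k(\bfx_k)}(\bfep_k)$, using $\nabla b^\star=-\nabla f+g^2\nabla s$ and the symmetry of $\nabla f$ to symmetrize the quadratic forms. This bookkeeping is the main obstacle. The signs and coefficients of $\nabla f$ versus $\nabla s$ must come out as $\nabla f-\tfrac{g^2}{2}\nabla s$, and checking that the trace terms match $-\tr A_k$ is exactly where Fokker--Planck is used. I would first verify it in the OU case as a sanity check.

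The order-$h^{3/2}$ terms are all odd polynomials in $\bfep_k$: cubic terms from $\nabla^3L[\delta^3]$ and from $\nabla b^\star$ cross $\nabla^2 b^\star$, and linear terms from cross products of $h$-drifts with $\sqrt h\bfep$. Their coefficients are polynomially growing functions of $(t_k,\bfx_k)$. We define $U_k^{\mathrm f}$ as this sum, so $\mathbb E[U_k^{\mathrm f}\mid\bfx_k]=0$ by Gaussian symmetry, since $\bfep_k$ is independent of $\bfx_k$. The remaining terms are $O(h^2)$ pointwise times polynomials in $(\bfx_k,\bfep_k)$ and Taylor remainders at intermediate points. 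Their $L^4$ norms are $O(h^2)$ by the polynomial growth in $C^{r,s}$ and the uniform moment bounds of Assumption~\ref{assump:regularity}(iii). Time increments $L_{k+1}-L_k$ at fixed space use $L\in C^{2,\cdot}$.

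For the backward term we repeat the argument with $\bfy_k=\bfy_{k+1}+\delta'$, where $\delta'=hb^\star_k(\bfy_{k+1})+\sqrt h g_k\bfzeta_k$, expanding everything around $(t_k,\bfy_{k+1})$. Now the forward density is the one evaluated off-center, and $L_k(\bfy_k)-L_{k+1}(\bfy_{k+1})$ enters with the opposite time direction. The same Fokker--Planck cancellation yields $-Q_{A_k(\bfy_{k+1})}(\bfzeta_k)$. The sign flip arises because the roles of proposal and reversal are exchanged. The $h^{3/2}$ remainder is again odd in $\bfzeta_k$, which is independent of $\bfy_{k+1}$, and this gives \eqref{eq:exact-score-U-centered}.
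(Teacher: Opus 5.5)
Your route is the same as the paper's: expand the residual using $f_k+b_k^\star=g_k^2s_k$, Taylor-expand $L$ in space--time about $(t_k,\bfx_k)$, cancel at order $h^{1/2}$, apply log Fokker--Planck at order $h$, and use oddness at order $h^{3/2}$. The gap is that the one step carrying the content of the lemma, namely that the order-$h$ coefficient is \emph{exactly} $Q_{A_k(\bfx_k)}(\bfep_k)$, is only asserted. The fragment you display would not close: in the forward step there is no $g_k^2$ in front of $\bfep_k\tran\nabla b_k^\star\bfep_k$ and no $+g_k^2\|s_k\|^2$, and a leftover $g_k^2\|s_k\|^2$ would survive the Fokker--Planck substitution.

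Here is the computation from your own residual. The cross term $2h^2g^2\,\bfep\tran\nabla b^\star\bfep$, divided by $2g^2h$, shows that the Gaussian ratio contributes $-\bfep\tran\nabla b^\star\bfep-\frac{g^2}{2}\|s\|^2$ at order $h$. The $L$-increment contributes $\partial_tL+f\tran s+\frac{g^2}{2}\bfep\tran\nabla s\,\bfep$. Now insert $\nabla b^\star=-\nabla f+g^2\nabla s$ and $\partial_tL=-\nabla\cdot f-f\tran s+\frac{g^2}{2}(\nabla\cdot s+\|s\|^2)$. The $f\tran s$ and $\|s\|^2$ terms cancel, leaving $\bfep\tran(\nabla f-\frac{g^2}{2}\nabla s)\bfep-\tr(\nabla f-\frac{g^2}{2}\nabla s)$. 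Symmetry of $\nabla f$ plays no role here; it is needed later, for $\mathbb E[Q_A^2]=2\|A\|_{\mathrm F}^2$.

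A $+g^2\|s\|^2$ does genuinely arise in the \emph{backward} step, from $s\tran b^\star$ in $\nabla L\cdot\delta'$. There it cancels against the $-\frac{g^2}{2}\|s\|^2$ from the Gaussian ratio and the $-\frac{g^2}{2}\|s\|^2$ from $-\partial_tL$. In that step the quadratic form is $-\bfzeta\tran\nabla f\,\bfzeta+\frac{g^2}{2}\bfzeta\tran\nabla s\,\bfzeta$, because the off-centre density now expands $f$ rather than $b^\star$. The $+\tr(A_k)$ comes from $-\partial_tL$.

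Two smaller points concern order $h^{3/2}$.
\begin{itemize}
\item Your expansion of $L_{k+1}(\bfx_{k+1})-L_k(\bfx_k)$ omits the mixed term $h\,\partial_ts_k\cdot\delta=h^{3/2}g_k\,\partial_ts_k\tran\bfep_k+O(h^2)$. So the claimed $O(h^2)$ remainder is wrong as stated. Since this term is linear in $\bfep_k$, it simply belongs in $U_k^{\mathrm f}$.
\item Truncating the residual at $O(h^2)$ discards terms that matter. The $h^2$ residual terms $h^2\{\nabla b^\star f+\frac{g^2}{2}D^2b^\star[\bfep,\bfep]\}$, paired with $-\sqrt h\,g\bfep$ and divided by $g^2h$, produce the $h^{3/2}$ terms. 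The cubic term comes from here, not from $\nabla b^\star$ crossed with $D^2b^\star$, which is higher order.
\end{itemize}
Oddness can be seen without listing terms. $\Delta_k^{\mathrm f}$ depends on $(\sqrt h,\bfep_k)$ only through $h$, $\|\bfep_k\|^2$ and $\delta=hf_k+\sqrt h\,g_k\bfep_k$. It is therefore invariant under $(\sqrt h,\bfep_k)\mapsto(-\sqrt h,-\bfep_k)$, which forces every half-integer-power coefficient to be odd in $\bfep_k$. The same holds for $\Delta_k^{\mathrm b}$ in $(\sqrt h,\bfzeta_k)$.
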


\begin{proof}[Proof of Lemma~\ref{lem:exact-score-local-expansion}]
For a scalar function $H$ and a vector field $v$, write
$D^rH(\bfx)[\mathbf u_1,\ldots,\mathbf u_r]$ and
$D^rv(\bfx)[\mathbf u_1,\ldots,\mathbf u_r]$ for their $r$th spatial derivatives, with the latter interpreted componentwise. All remainder
estimates below are uniform in $k$ and hold in every fixed finite
$L^q$ under the stated regularity and moment assumptions in Assumption~\ref{assump:regularity}.

\paragraph{Forward path.}
Fix $k$ and write
\[
    \bfx:=\bfx_k,
    \qquad
    \bfx^+:=\bfx_{k+1},
    \qquad
    \bfep:=\bfep_k.
\]
Suppress the index $k$ on $f_k$, $g_k$, $b_k^\star$, $s_k$, and $L_k$.
Thus,
\[
    \bfx^+
    =
    \bfx+h f(\bfx)+\sqrt h\,g\bfep.
\]
Define
\[
    \widetilde{\bfzeta}
    :=
    \frac{\bfx-\bfx^+-h b^\star(\bfx^+)}
         {g\sqrt h}.
\]
Then
\[
    \Delta_k^{\mathrm f}
    =
    \frac12
    \left(
        \|\bfep\|^2-\|\widetilde{\bfzeta}\|^2
    \right)
    +L_{k+1}(\bfx^+)-L_k(\bfx).
\]
Write
\[
    \widetilde{\bfzeta}=-(\bfep+\bfr),
    \qquad
    \bfr
    :=
    \frac{\sqrt h}{g}
    \left\{
        f(\bfx)+b^\star(\bfx^+)
    \right\}.
\]
Then
\begin{equation}
\label{eq:exact-score-Gaussian-ratio}
    \frac12
    \left(
        \|\bfep\|^2-\|\widetilde{\bfzeta}\|^2
    \right)
    =
    -\bfep\tran\bfr-\frac12\|\bfr\|^2.
\end{equation}

Using $b^\star=-f+g^2s$ and expanding $b^\star(\bfx^+)$ around
$\bfx$ gives
\begin{align*}
    \bfr
    ={}&
    \sqrt h\,g s(\bfx)
    +h\nabla b^\star(\bfx)\bfep
    \notag\\
    &+h^{3/2}\left\{
        \frac1g\nabla b^\star(\bfx)f(\bfx)
        +\frac g2
          D^2b^\star(\bfx)[\bfep,\bfep]
    \right\}
    +O_{L^8}(h^2).
\end{align*}
Substitution into~\eqref{eq:exact-score-Gaussian-ratio} yields
\begin{equation}
\label{eq:exact-score-Gaussian-forward-expansion}
\begin{aligned}
\frac12
    \left(
        \|\bfep\|^2-\|\widetilde{\bfzeta}\|^2
    \right)
&=-\sqrt h\,g s(\bfx)\tran\bfep\\
&\quad+h\left\{
    -\bfep\tran\nabla b^\star(\bfx)\bfep
    -\frac{g^2}{2}\|s(\bfx)\|^2
\right\}
\\
&\quad+h^{3/2}\left\{
    -\frac1g\bfep\tran
        \nabla b^\star(\bfx)f(\bfx)
    -\frac g2\bfep\tran
        D^2b^\star(\bfx)[\bfep,\bfep]
    -g s(\bfx)\tran
        \nabla b^\star(\bfx)\bfep
\right\}\\
&\quad+O_{L^4}(h^2).
\end{aligned}
\end{equation}
Recall that $L_{k}=L_{t_k}$ and $s_t=\nabla L_t$.
A space-time Taylor expansion gives
\begin{equation}
\label{eq:exact-score-ell-expansion}
\begin{aligned}
    L_{k+1}(\bfx^+)-L_k(\bfx)
    ={}&
    \sqrt h\,g s(\bfx)\tran\bfep\\
    &+h\left\{
        \partial_tL(t_k,\bfx)
        +f(\bfx)\tran s(\bfx)
        +\frac{g^2}{2}
          D^2L_k(\bfx)[\bfep,\bfep]
    \right\}\\
    &+h^{3/2}\left\{
        gD^2L_k(\bfx)[\bfep,f(\bfx)]
        +g(\partial_t s)(t_k,\bfx)\tran\bfep
        +\frac{g^3}{6}
          D^3L_k(\bfx)[\bfep,\bfep,\bfep]
    \right\}\\
    &+O_{L^4}(h^2).
\end{aligned}
\end{equation}
Since $p_t$ satisfies the Fokker--Planck equation 
\begin{align*}
\partial_t p_t = -\nabla \cdot (f_t p_t) + \frac{g_t^2}{2} \Delta p_t = -\nabla \cdot (f_t p_t) + \frac{g_t^2}{2} \nabla \cdot (\nabla p_t),
\end{align*} 
we have
\begin{equation}
\label{eq:exact-score-log-Fokker-Planck}
    \partial_tL_t
    =\frac{\partial_t p_t}{p_t}=
    -\nabla\!\cdot f_t
    -f_t\tran s_t
    +\frac{g_t^2}{2}
      \left\{
          \nabla\!\cdot s_t+\|s_t\|^2
      \right\}.
\end{equation}
The order-$h^{1/2}$ terms in
\eqref{eq:exact-score-Gaussian-forward-expansion} and
\eqref{eq:exact-score-ell-expansion} cancel. Using
$\nabla b^\star = -\nabla f+g^2\nabla s$, $D^2L_k=\nabla s$,
and~\eqref{eq:exact-score-log-Fokker-Planck}, the coefficient of $h$ in $\Delta_k^{\mathrm f}$ is
\begin{align*}
    &-\bfep\tran\nabla b^\star(\bfx)\bfep
      -\frac{g^2}{2}\|s(\bfx)\|^2
      +\partial_tL(t_k,\bfx)
      +f(\bfx)\tran s(\bfx)
      +\frac{g^2}{2}
        \bfep\tran\nabla s(\bfx)\bfep\\
    &\qquad
    =
    \bfep\tran
    \left(
        \nabla f(\bfx)
        -\frac{g^2}{2}\nabla s(\bfx)
    \right)\bfep
    -
    \tr\left(
        \nabla f(\bfx)
        -\frac{g^2}{2}\nabla s(\bfx)
    \right)\\
    &\qquad
    =
    Q_{A_k(\bfx_k)}(\bfep_k).
\end{align*}
The coefficient of $h^{3/2}$ is
\begin{align*}
    U_k^{\mathrm f}
    ={}&
    -\frac1{g}\bfep\tran
        \nabla b^\star(\bfx)f(\bfx)
    -\frac{g}{2}\bfep\tran
        D^2b^\star(\bfx)[\bfep,\bfep]
    \notag\\
    &-g s(\bfx)\tran
        \nabla b^\star(\bfx)\bfep
    +gD^2L_k(\bfx)
        [\bfep,f(\bfx)]
    \notag\\
    &+g(\partial_t s)(t_k,\bfx)\tran\bfep
    +\frac{g^3}{6}
        D^3L_k(\bfx)
        [\bfep,\bfep,\bfep].
\end{align*}
Conditionally on $\bfx_k$, each term in
the last display is either linear in $\bfep_k$ or homogeneous cubic in $\bfep_k$.
Since $\bfep_k$ is conditionally standard Gaussian given $\bfx_k$, we have
\begin{align*}
    \mathbb E[U_k^{\mathrm f}\mid\bfx_k]=0.
\end{align*}
Collecting the preceding expansions gives
\[
    \Delta_k^{\mathrm f}
    =
    hQ_{A_k(\bfx_k)}(\bfep_k)
    +h^{3/2}U_k^{\mathrm f}
    +O_{L^4}(h^2).
\]

\paragraph{Backward path.}
Write
\[
    \bfy^+:=\bfy_{k+1},
    \qquad
    \bfy:=\bfy_k,
    \qquad
    \bfzeta:=\bfzeta_k,
\]
and again suppress the index $k$ on the coefficients. Then
\[
    \bfy
    =
    \bfy^+
    +h b^\star(\bfy^+)
    +\sqrt h\,g\bfzeta.
\]
Define
\[
    \widetilde{\bfep}
    :=
    \frac{\bfy^+-\bfy-hf(\bfy)}{g\sqrt h}.
\]
Then
\[
    \widetilde{\bfep}
    =
    -(\bfzeta+\widetilde{\bfr}),
    \qquad
    \widetilde{\bfr}
    :=
    \frac{\sqrt h}{g}
    \left\{
        b^\star(\bfy^+)+f(\bfy)
    \right\},
\]
and
\begin{align*}
    \Delta_k^{\mathrm b}
    &=\frac12\left( \|\bfzeta\|^2  - \|\widetilde{\bfep}\|^2 \right) +L_k(\bfy)-L_{k+1}(\bfy^+)\\
    &=-\bfzeta\tran\widetilde{\bfr}
    -\frac12\|\widetilde{\bfr}\|^2
    +L_k(\bfy)-L_{k+1}(\bfy^+).
\end{align*}
Using $b^\star=-f+g^2s$ and expanding $f(\bfy)$ around $\bfy^+$ gives
\begin{align*}
    \widetilde{\bfr}
    ={}&
    \sqrt h\,g s(\bfy^+)
    +h\nabla f(\bfy^+)\bfzeta
    \notag\\
    &+h^{3/2}\left\{
        \frac1g\nabla f(\bfy^+)b^\star(\bfy^+)
        +\frac g2
          D^2f(\bfy^+)[\bfzeta,\bfzeta]
    \right\}
    +O_{L^8}(h^2).
\end{align*}
Consequently,
\begin{equation}\label{eq:exact-score-Gaussian-backward-expansion}
\begin{aligned}
    -\bfzeta\tran\widetilde{\bfr}
    -\frac12\|\widetilde{\bfr}\|^2
    ={}&
    -\sqrt h\,g s(\bfy^+)\tran\bfzeta\\
    &+h\left\{
        -\bfzeta\tran\nabla f(\bfy^+)\bfzeta
        -\frac{g^2}{2}\|s(\bfy^+)\|^2
    \right\}\\
    &+h^{3/2}\left\{
        -\frac1g\bfzeta\tran
            \nabla f(\bfy^+)b^\star(\bfy^+)
        -\frac g2\bfzeta\tran
            D^2f(\bfy^+)[\bfzeta,\bfzeta]
        -g s(\bfy^+)\tran
            \nabla f(\bfy^+)\bfzeta
    \right\}\\
    &+O_{L^4}(h^2).
\end{aligned}
\end{equation}
A space-time Taylor expansion gives
\begin{equation}\label{eq:exact-score-ell-backward-expansion}
\begin{aligned}
    L_k(\bfy)-L_{k+1}(\bfy^+)
    ={}&
    \sqrt h\,g s(\bfy^+)\tran\bfzeta\\
    &+h\left\{
        -\partial_tL(t_k,\bfy^+)
        +b^\star(\bfy^+)\tran s(\bfy^+)
        +\frac{g^2}{2}
          D^2L_k(\bfy^+)[\bfzeta,\bfzeta]
    \right\}\\
    &+h^{3/2}\left\{
        gD^2L_k(\bfy^+)
            [\bfzeta,b^\star(\bfy^+)]
        +\frac{g^3}{6}
          D^3L_k(\bfy^+)
            [\bfzeta,\bfzeta,\bfzeta]
    \right\}\\
    &+O_{L^4}(h^2).
\end{aligned}
\end{equation}
The order-$h^{1/2}$ terms in~\eqref{eq:exact-score-Gaussian-backward-expansion} and~\eqref{eq:exact-score-ell-backward-expansion} cancel. 
Using
$b^\star=-f+g^2s$ and the log-Fokker--Planck equation~\eqref{eq:exact-score-log-Fokker-Planck}, 
the order-$h$ coefficient in $\Delta_k^{\mathrm b}$ is
\begin{align*}
    &-\bfzeta\tran\nabla f(\bfy^+)\bfzeta
      -\frac{g^2}{2}\|s(\bfy^+)\|^2
      -\partial_tL(t_k,\bfy^+)
      +b^\star(\bfy^+)\tran s(\bfy^+)
      +\frac{g^2}{2}
        \bfzeta\tran\nabla s(\bfy^+)\bfzeta\\
    &\qquad
    =
    -\bfzeta\tran
    \left(
        \nabla f(\bfy^+)
        -\frac{g^2}{2}\nabla s(\bfy^+)
    \right)\bfzeta
    +
    \tr\left(
        \nabla f(\bfy^+)
        -\frac{g^2}{2}\nabla s(\bfy^+)
    \right)\\
    &\qquad
    =
    -Q_{A_k(\bfy_{k+1})}(\bfzeta_k).
\end{align*}
The coefficient of $h^{3/2}$ is
\begin{align*}
    U_k^{\mathrm b}
    ={}&
    -\frac1{g}\bfzeta\tran
        \nabla f(\bfy^+)b^\star(\bfy^+)
    -\frac{g}{2}\bfzeta\tran
        D^2f(\bfy^+)[\bfzeta,\bfzeta]
    \notag\\
    &-g s(\bfy^+)\tran
        \nabla f(\bfy^+)\bfzeta
    +g D^2L_k(\bfy^+)
        [\bfzeta,b^\star(\bfy^+)]
    \notag\\
    &+\frac{g^3}{6}
        D^3L_k(\bfy^+)
        [\bfzeta,\bfzeta,\bfzeta].
\end{align*}
Conditionally on $\bfy^+$, each term in
the last display is either linear in $\bfzeta_k$ or homogeneous cubic in $\bfzeta_k$.
Since $\bfzeta_k$ is conditionally standard Gaussian given $\bfy_{k+1}$, therefore,
\begin{align*}
    \mathbb E[U_k^{\mathrm b}\mid\bfy_{k+1}]&=0.
\end{align*}
Thus,
\[
    \Delta_k^{\mathrm b}
    =
    -hQ_{A_k(\bfy_{k+1})}(\bfzeta_k)
    +h^{3/2}U_k^{\mathrm b}
    +O_{L^4}(h^2).
\]

\end{proof}

\begin{lem}[Leading martingale and accumulated remainder]
\label{lem:exact-score-global-expansion}
Let $M$ and $R$ be defined as in Equation~\eqref{eq:def:M:R}. Then
\begin{align}
    \mathbb E[M]&=0,
    \label{eq:exact-score-M-mean}\\
    \operatorname{Var}(M)
    &=h\mathcal D(T)+O(h^{3/2}),
    \label{eq:exact-score-M-variance}\\
    \|M\|_{L^4}&=O(h^{1/2}),
    \qquad
    \|R\|_{L^2}+\|R\|_{L^4}=O(h).
    \label{eq:exact-score-MR-moments}
\end{align}
Consequently, $\mathbb E|\Delta_h|^4=O(h^2)$.
\end{lem}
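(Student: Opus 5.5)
The plan is to treat $M$ as a sum of martingale increments along a suitable filtration and then to control $R$ through the local expansion of Lemma~\ref{lem:exact-score-local-expansion}.

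\textbf{Filtration and the mean of $M$.} Order the innovations chronologically: $\bfep_0,\ldots,\bfep_{N-1}$ and then $\bfzeta_{N-1},\ldots,\bfzeta_0$. Let $\mathcal F_j$ be the generated filtration. Each increment $hQ_{A_k(\bfx_k)}(\bfep_k)$ has an $\mathcal F$-predictable matrix $A_k(\bfx_k)$ and a fresh Gaussian $\bfep_k$. The same holds for $-hQ_{A_k(\bfy_{k+1})}(\bfzeta_k)$, since $\bfy_{k+1}$ is measurable before $\bfzeta_k$ is drawn. For symmetric $A$ and $\bfxi\sim\N(0,I_d)$ we have $\E Q_A(\bfxi)=0$ and $\E Q_A(\bfxi)^2=2\|A\|_{\mathrm F}^2$. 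Hence $\E[M]=0$, and by orthogonality of increments
\[
\operatorname{Var}(M)=2h^2\sum_{k=0}^{N-1}\Big(\E\|A_k(\bfx_k)\|_{\mathrm F}^2+\E\|A_k(\bfy_{k+1})\|_{\mathrm F}^2\Big).
\]

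\textbf{Identifying the variance limit.} I would replace $\bfx_k$ by $X_{t_k}$ and $\bfy_{k+1}$ by $Y_{t_{k+1}}$ using the strong error bound~\eqref{eq:exact-score-strong-error}. The function $\bfx\mapsto\|A_k(\bfx)\|_{\mathrm F}^2$ is locally Lipschitz with polynomial growth, because $f\in C^{1,3}$ and $L\in C^{2,4}$. Combined with the uniform moments, Hölder gives a per-step error $O(h^{1/2})$. Next, $Y_{t}\overset{d}{=}X_t$ by time reversal under stationarity, and the shift from $t_{k+1}$ to $t_k$ costs $O(h^{1/2})$ by time regularity. The Riemann sum $h\sum_k\E\|A_{t_k}(X_{t_k})\|_{\mathrm F}^2$ approximates the integral up to $O(h^{1/2})$, using the $C^1$ time regularity of $A$ and the $L^2$-Hölder-$1/2$ continuity of $X_t$. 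Altogether $\operatorname{Var}(M)=h\cdot 4\int_0^T\E\|A_t(X_t)\|_{\mathrm F}^2\,\rd t+O(h^{3/2})=h\mathcal D(T)+O(h^{3/2})$.

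\textbf{$L^4$ bounds.} For $\|M\|_{L^4}$ I would apply the Burkholder--Davis--Gundy inequality to the discrete martingale:
\[
\E M^4\lesssim\E\Big(\sum_j\xi_j^2\Big)^2\le N\sum_j\E\xi_j^4 .
\]
Each increment satisfies $\E\xi_j^4=O(h^4)$ by the polynomial moments, and there are $2N$ of them, so $\E M^4=O(N^2h^4)=O(h^2)$.

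For $R$, Lemma~\ref{lem:exact-score-local-expansion} gives
\[
R=h^{3/2}\sum_k\big(U_k^{\mathrm f}+U_k^{\mathrm b}\big)+\sum_k O_{L^4}(h^2).
\]
The second sum is $O_{L^4}(Nh^2)=O(h)$ by Minkowski. The first sum is again a martingale in the same filtration, since $U_k^{\mathrm f}$ is centered given $\bfx_k$ and $U_k^{\mathrm b}$ given $\bfy_{k+1}$; each is a function only of the past and the fresh innovation, so it is centered given the past. The same BDG argument bounds its $L^4$ norm by $h^{3/2}\cdot O(N^{1/2})=O(h)$. The $L^2$ bound follows from the $L^4$ bound. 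Finally, $\E|\Delta_h|^4\le 8(\E M^4+\E R^4)=O(h^2)$.

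\textbf{Main obstacle.} The delicate step is verifying that $U_k$ is a martingale difference with respect to the \emph{whole} past, not merely conditionally on $\bfx_k$ or $\bfy_{k+1}$. The terms are linear or cubic in the fresh Gaussian with past-measurable coefficients, so this should hold. A second point to check is that the $O_{L^4}(h^2)$ remainders are uniform in $k$, which the lemma asserts. The remaining risk lies in the $O(h^{1/2})$ Riemann and strong-error replacement, which needs the local Lipschitz and polynomial-growth control of $A_t$.
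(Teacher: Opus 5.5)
Your proposal is correct and follows essentially the same route as the paper: you order the innovations chronologically so that $M^{\mathrm f}$, $M^{\mathrm b}$ and the $h^{3/2}U_k$ terms are martingale sums, you use $\operatorname{Var}(M)=2h^2\sum_k\bigl(\E\|A_k(\bfx_k)\|_{\mathrm F}^2+\E\|A_k(\bfy_{k+1})\|_{\mathrm F}^2\bigr)$, you handle the Riemann sum by strong-error replacement plus H\"older-$1/2$ time regularity, and you apply Minkowski to the $O_{L^4}(h^2)$ remainders. The differences are cosmetic: you cite Burkholder--Davis--Gundy where the paper proves the martingale fourth-moment bound by direct induction (Lemma~\ref{lem:exact-score-elementary-moments}), and you get the vanishing covariance of $M^{\mathrm f}$ and $M^{\mathrm b}$ from orthogonality in one filtration rather than by conditioning on the forward path.
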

\begin{proof}[Proof of Lemma~\ref{lem:exact-score-global-expansion}]

Define $\mathcal F_k:=\sigma(\bfx_0,\ldots,\bfx_k)$ for
$k=0,\ldots,N$. After the forward path has been generated, extend this
filtration by revealing the backward states in the order
$\bfy_{N-1},\bfy_{N-2},\ldots,\bfy_0$. Since $\bfep_k$ is independent of
$\mathcal F_k$,
\[
    \mathbb E\!\left[
        Q_{A_k(\bfx_k)}(\bfep_k)\mid\mathcal F_k
    \right]=0
\]
by Lemma~\ref{lem:exact-score-elementary-moments}. Thus, the summands of
$M^{\mathrm f}$ are martingale differences. Similarly, when the backward
summands are ordered from $k=N-1$ down to $k=0$, $\bfzeta_k$ is independent
of the current sigma-field and of $\bfy_{k+1}$. Hence, those summands are
also martingale differences, and
$\mathbb E[M^{\mathrm f}]=\mathbb E[M^{\mathrm b}]=0$.
This proves \eqref{eq:exact-score-M-mean}.
Using the martingale property and Lemma~\ref{lem:exact-score-elementary-moments},
\begin{align}
    \operatorname{Var}(M^{\mathrm f})
    &=
    2h^2\sum_{k=0}^{N-1}
    \mathbb E\!\left[\|A_k(\bfx_k)\|_{\mathrm F}^2\right],
    \label{eq:exact-score-Mf-variance}\\
    \operatorname{Var}(M^{\mathrm b})
    &=
    2h^2\sum_{k=0}^{N-1}
    \mathbb E\!\left[\|A_k(\bfy_{k+1})\|_{\mathrm F}^2\right].
    \label{eq:exact-score-Mb-variance}
\end{align}
Conditionally on the complete forward path, the backward path starts from the fixed endpoint $\bfy_N=\bfx_N$ and is driven by independent innovations. Its leading martingale therefore has conditional mean zero, so
\begin{equation}
\label{eq:exact-score-Mf-Mb-covariance}
    \operatorname{Cov}(M^{\mathrm f},M^{\mathrm b})=0.
\end{equation}

Let \(G_t(\bfx):=\|A_t(\bfx)\|_{\mathrm F}^2\). By the polynomial-growth
assumptions on \(A_t\), \(G_t\) satisfies,
for some \(q\ge 0\),
\[
    |G_t(\bfx)-G_t(\bfz)|
    \le
    C(1+\|\bfx\|^q+\|\bfz\|^q)\|\bfx-\bfz\|.
\]
Together with the uniform moment bounds and the strong error estimate
\[
    \max_{0\le k\le N}\|\bfx_k-X_{t_k}\|_{L^q}\le C h^{1/2},
\]
this implies
\[
    \big|
    \E G_{t_k}(\bfx_k)-\E G_{t_k}(X_{t_k})
    \big|
    \le C h^{1/2}
\]
uniformly in \(k\). 
Moreover, \(t\mapsto \E G_t(X_t)\) is \(1/2\)-H\"older continuous under the
same assumptions, since \(\|X_t-X_s\|_{L^p}\le C|t-s|^{1/2}\). Therefore,
\[
    h\sum_{k=0}^{N-1}\E G_{t_k}(X_{t_k})
    =
    \int_0^T \E G_t(X_t)\,\rd t
    +O(h^{1/2}).
\]
Combining the last two displays gives
\begin{align}\label{eq:exact-score-forward-Riemann}
    h\sum_{k=0}^{N-1}\mathbb E[G_{t_k}(\bfx_k)]
    =
    \int_0^T\mathbb E[G_t(X_t)]\,\rd t
    +O(h^{1/2}).
\end{align}
Following the same argument and that the exact reverse process has marginal density $p_t$, 
we obtain
\begin{align}\label{eq:exact-score-backward-Riemann}
    h\sum_{k=0}^{N-1}\mathbb E[G_{t_k}(\bfy_{k+1})]
    =
    \int_0^T\mathbb E[G_t(X_t)]\,\rd t
    +O(h^{1/2}).
\end{align}
Equations
\eqref{eq:exact-score-Mf-variance}--\eqref{eq:exact-score-backward-Riemann} prove
\eqref{eq:exact-score-M-variance}.

Since $M^{\mathrm f}$ and $M^{\mathrm b}$ are sum of martingale differences with $O_{L^4}(h)$ increments,
Lemma~\ref{lem:exact-score-elementary-moments} gives
\[
    \|M^{\mathrm f}\|_{L^4}
    +\|M^{\mathrm b}\|_{L^4}
    =O(h^{1/2}).
\]
By Lemma~\ref{lem:exact-score-local-expansion},
$R$ is the sum of the martingale difference terms $h^{3/2}U_k^{\mathrm f}$, $h^{3/2}U_k^{\mathrm b}$, as well as the $O_{L^4}(h^2)$ remainder term.
Applying Lemma~\ref{lem:exact-score-elementary-moments} again gives
\[
    \Big\|
        h^{3/2}\sum_{k=0}^{N-1}U_k^{\mathrm f}
    \Big\|_{L^2}
    +
    \Big\|
        h^{3/2}\sum_{k=0}^{N-1}U_k^{\mathrm b}
    \Big\|_{L^4}
    =O(h).
\]
Applying Minkowski's inequality to the
$N$ local $O_{L^4}(h^2)$ remainder terms gives
$\|R\|_{L^4}=O(h)$.  Finally,
$\|\Delta_h\|_{L^4}\leq\|M\|_{L^4}+\|R\|_{L^4}=O(h^{1/2})$, which proves
$\mathbb E|\Delta_h|^4=O(h^2)$.
\end{proof}

\begin{lem}[Path-reversal identity]
\label{lem:exact-score-path-reversal}
For an integrable nonnegative function $\psi$,
\begin{align*}
    \mathbb E\!\left[e^{\Delta_h}\psi(\Delta_h)\right]
    =\mathbb E[\psi(-\Delta_h)].
\end{align*}
\end{lem}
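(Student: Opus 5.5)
The identity will follow from the augmented-space description of MAD-Path together with the fact that the path-reversal map $\mathsf R$ flips the sign of the log-ratio. Under stationarity $\bfx_0\sim p$, the full vector $\bfz=(\bfx_0,\ldots,\bfx_N,\bfy_{N-1},\ldots,\bfy_0)$ generated by the forward and backward passes has density $\Pi(\bfz)$. Here $\Pi$ is built with the exact backward drift $b^\star$, which is the setting of Theorem~\ref{thm:exact-score-discretization}; the argument does not use exactness of the score. By construction \eqref{equ:def:Delta}, $\Delta_h=\log\Pi(\mathsf R\bfz)-\log\Pi(\bfz)$, so I write $\Delta_h=\Delta(\bfz)$ with $\Delta(\bfz):=\log\{\Pi(\mathsf R\bfz)/\Pi(\bfz)\}$.

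\textbf{Step 1 (antisymmetry).} Since $\mathsf R$ is an involution,
\[
\Delta(\mathsf R\bfz)=\log\frac{\Pi(\bfz)}{\Pi(\mathsf R\bfz)}=-\Delta(\bfz).
\]
This holds wherever both densities are positive, which is everywhere: all Gaussian transition densities are strictly positive and $p>0$ by the standing assumption that the $p_t$ are strictly positive.

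\textbf{Step 2 (change of variables).} Compute
\[
\EE{e^{\Delta_h}\psi(\Delta_h)}=\int \Pi(\bfz)\frac{\Pi(\mathsf R\bfz)}{\Pi(\bfz)}\psi(\Delta(\bfz))\,\rd\bfz=\int \Pi(\mathsf R\bfz)\,\psi(\Delta(\bfz))\,\rd\bfz .
\]
Substitute $\bfw=\mathsf R\bfz$, which is volume-preserving and involutive, exactly as in the proof of Theorem~\ref{thm:reversibility}. This gives $\int\Pi(\bfw)\,\psi(\Delta(\mathsf R\bfw))\,\rd\bfw$. By Step 1 this equals $\int\Pi(\bfw)\,\psi(-\Delta(\bfw))\,\rd\bfw=\EE{\psi(-\Delta_h)}$.

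\textbf{Step 3 (justification).} Because $\psi\ge0$, Tonelli's theorem justifies every manipulation in $[0,\infty]$, so both sides are equal even if they are infinite. Integrability of $\psi(-\Delta_h)$ then gives integrability of $e^{\Delta_h}\psi(\Delta_h)$. Taking $\psi\equiv1$ recovers $\EE{e^{\Delta_h}}=1$, which the proof of Theorem~\ref{thm:exact-score-discretization} also uses.

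The main point needing care is Step 2's identification of the law of $\bfz$ with $\Pi$ under stationarity, and the matching of $\Delta_h$ with $\log\{\Pi(\mathsf R\bfz)/\Pi(\bfz)\}$. This requires that the backward densities $\leftvec p_k$ appearing in \eqref{equ:def:Delta} are exactly the kernels used to generate the $\bfy_k$. The rest is routine.
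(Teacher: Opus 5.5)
Your proof is correct and follows the paper's argument: you write $\Delta_h=\log\{\Pi(\mathsf R\bfz)/\Pi(\bfz)\}$, use the involution property to get $\Delta_h(\mathsf R\bfz)=-\Delta_h(\bfz)$, and change variables $\bfw=\mathsf R\bfz$. Your remarks on strict positivity of $\Pi$ and on using Tonelli's theorem for nonnegative $\psi$ supply justification that the paper leaves implicit.
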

\begin{proof}[Proof of Lemma~\ref{lem:exact-score-path-reversal}]
Recall the notation from Section~\ref{sec:algorithm}.
We have
\[
    \Delta_h(\bfz)
    =\log\frac{\Pi(\mathsf R\bfz)}{\Pi(\bfz)},
    \qquad
    \Delta_h(\mathsf R\bfz)=-\Delta_h(\bfz).
\]
Therefore,
\begin{align*}
    \mathbb E\!\left[e^{\Delta_h}\psi(\Delta_h)\right]
    &=
    \int \Pi(\mathsf R\bfz)\psi(\Delta_h(\bfz))\,\rd \bfz 
    =\int \Pi(\bfz)\psi(-\Delta_h(\bfz))\,\rd \bfz,
\end{align*}
where the second equality uses the change of variables
$\bfz\mapsto\mathsf R\bfz$. This proves the identity.
\end{proof}

\begin{lem}[Elementary Gaussian and martingale moments]
\label{lem:exact-score-elementary-moments}
Let $\bfxi\sim\mathcal N(\mathbf{0},I_d)$ and let $A$ be symmetric.
Then
\begin{equation}
\label{eq:exact-score-Q-moments}
    \mathbb E[Q_A(\bfxi)]=0,
    \qquad
    \mathbb E[Q_A(\bfxi)^2]
    =2\|A\|_{\mathrm F}^2,
\end{equation}
and, for fixed $d$,
\begin{equation}
\label{eq:exact-score-Q-fourth}
    \mathbb E|Q_A(\bfxi)|^4
    \leq C_d\|A\|_{\mathrm F}^4.
\end{equation}

In addition, let $(\mathcal F_k)_{k=0}^N$ be a filtration and let
$\eta_k$ be $\mathcal F_{k+1}$-measurable with
$\mathbb E[\eta_k\mid\mathcal F_k]=0$ and
$\max_k\|\eta_k\|_{L^4}\leq a$. Then
\begin{equation}
\label{eq:exact-score-martingale-moments}
    \left\|\sum_{k=0}^{N-1}\eta_k\right\|_{L^2}
    \leq N^{1/2}a,
    \qquad
    \left\|\sum_{k=0}^{N-1}\eta_k\right\|_{L^4}
    \leq C N^{1/2}a,
\end{equation}
where $C$ is a universal constant.
\end{lem}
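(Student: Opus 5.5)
The Gaussian identities~\eqref{eq:exact-score-Q-moments}--\eqref{eq:exact-score-Q-fourth} follow from diagonalization. Write $A=U\tran\Lambda U$ with $U$ orthogonal and $\Lambda=\mathrm{diag}(\lambda_1,\ldots,\lambda_d)$. By rotation invariance, $\bfeta:=U\bfxi\sim\mathcal N(\mathbf 0,I_d)$, so
\[
Q_A(\bfxi)=\sum_{i=1}^d\lambda_i Z_i,\qquad Z_i:=\eta_i^2-1 .
\]
The $Z_i$ are i.i.d.\ centered with $\mathbb E Z_i^2=2$ and finite fourth moment. Hence $\mathbb E Q_A=0$, and by independence $\mathbb E Q_A^2=2\sum_i\lambda_i^2=2\|A\|_{\mathrm F}^2$.

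For the fourth moment, I will expand $\mathbb E(\sum_i\lambda_iZ_i)^4$. Every term containing some $Z_i$ to the first power vanishes, because the $Z_i$ are independent and centered. The surviving terms give
\[
\sum_i\lambda_i^4\,\mathbb E Z^4+3\sum_{i\neq j}\lambda_i^2\lambda_j^2(\mathbb E Z^2)^2\le C\Bigl(\sum_i\lambda_i^2\Bigr)^2=C\|A\|_{\mathrm F}^4 .
\]
The constant $C$ here is in fact independent of $d$, which is stronger than the stated $C_d$.

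For the martingale bounds~\eqref{eq:exact-score-martingale-moments}, set $S=\sum_{k=0}^{N-1}\eta_k$. The increments are square integrable because their $L^4$ norms are bounded by $a$.

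\emph{The $L^2$ bound.} The increments are orthogonal: for $j<k$, $\eta_j$ is $\mathcal F_k$-measurable, so by the tower property $\mathbb E[\eta_j\eta_k]=\mathbb E[\eta_j\,\mathbb E[\eta_k\mid\mathcal F_k]]=0$. Therefore
\[
\|S\|_{L^2}^2=\sum_k\mathbb E\eta_k^2\le\sum_k\|\eta_k\|_{L^4}^2\le Na^2 ,
\]
using $\|\cdot\|_{L^2}\le\|\cdot\|_{L^4}$ on a probability space.

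\emph{The $L^4$ bound.} I would invoke Burkholder's inequality for discrete martingales with $p=4$:
\[
\|S\|_{L^4}\le C\Bigl\|\Bigl(\sum_k\eta_k^2\Bigr)^{1/2}\Bigr\|_{L^4}=C\Bigl\|\sum_k\eta_k^2\Bigr\|_{L^2}^{1/2} .
\]
Minkowski's inequality then gives $\|\sum_k\eta_k^2\|_{L^2}\le\sum_k\|\eta_k^2\|_{L^2}=\sum_k\|\eta_k\|_{L^4}^2\le Na^2$. Combining, $\|S\|_{L^4}\le C N^{1/2}a$ with $C$ the universal Burkholder constant.

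The only nontrivial step is this $L^4$ martingale estimate. A direct expansion of $\mathbb E S^4$ without Burkholder's inequality is messier. Terms in which the largest index appears exactly once vanish by conditioning. The terms of the form $\eta_j^2\eta_k^2$ are bounded by $a^4$ via Cauchy--Schwarz, and there are $O(N^2)$ of them. The difficulty is that terms such as $\eta_i\eta_j\eta_k^2$ with $i<j<k$ do not vanish directly and would require an inductive argument. Citing Burkholder (or Burkholder--Davis--Gundy) avoids this bookkeeping, so that is the route I would take.
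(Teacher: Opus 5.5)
Your proposal is correct. The Gaussian part is the paper's argument: diagonalize, use rotation invariance, and expand the fourth power. Your remark that this constant is actually dimension-free is right: one gets $\mathbb E\,Q_A(\bfxi)^4=60\sum_i\lambda_i^4+12\sum_{i\neq j}\lambda_i^2\lambda_j^2\le 60\|A\|_{\mathrm F}^4$. The $L^2$ martingale bound is also the same orthogonality argument as in the paper.

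The genuine difference is the $L^4$ martingale bound. You cite Burkholder's inequality with $p=4$ and then apply Minkowski to $\sum_k\eta_k^2$ in $L^2$. The paper instead stays elementary and expands only in the last increment, $S_{k+1}=S_k+\eta_k$, conditioning on $\mathcal F_k$:
\begin{itemize}
\item The term $\mathbb E[S_k^3\eta_k]$ vanishes.
\item The past increments enter only through $\mathbb E[S_k^2\,\mathbb E(\eta_k^2\mid\mathcal F_k)]$ and $\mathbb E[S_k\,\mathbb E(\eta_k^3\mid\mathcal F_k)]$.
\item Conditional Jensen and H\"older then give the recursion $\mathbb E S_{k+1}^4\le \mathbb E S_k^4+6a^2\|S_k\|_{L^4}^2+4a^3\|S_k\|_{L^4}+a^4$.
\item Induction yields $\mathbb E S_k^4\le Ck^2a^4$.
\end{itemize}
So the bookkeeping you wanted to avoid, such as terms like $\eta_i\eta_j\eta_k^2$, is sidestepped by treating the past partial sum $S_k$ as a single block rather than expanding over index tuples.

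Your route is shorter and carries over unchanged to any $p\ge 2$: it gives $\|S\|_{L^p}\le C_pN^{1/2}a$ whenever $\max_k\|\eta_k\|_{L^p}\le a$. The price is citing a nontrivial theorem. The paper's route is self-contained and gives a concrete (if unoptimized) constant, but it is tailored to the fourth moment.
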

\begin{proof}[Proof of Lemma~\ref{lem:exact-score-elementary-moments}]
Diagonalizing $A$ and using rotational invariance gives
\[
    Q_A(\bfxi)
    \stackrel{d}{=}
    \sum_{i=1}^d\lambda_i(Z_i^2-1),
    \qquad
    Z_i\stackrel{\mathrm{i.i.d.}}{\sim}\mathcal N(0,1).
\]
This proves \eqref{eq:exact-score-Q-moments}; expanding the fourth power
proves \eqref{eq:exact-score-Q-fourth}.

The $L^2$ martingale bound follows from orthogonality.  For the fourth
moment, set $S_k=\sum_{j=0}^{k-1}\eta_j$.  Expanding
$(S_k+\eta_k)^4$ and conditioning on $\mathcal F_k$ gives
\begin{align*}
    \mathbb E[S_{k+1}^4]
    ={}&
    \mathbb E[S_k^4]
    +6\mathbb E\!\left[
        S_k^2\mathbb E(\eta_k^2\mid\mathcal F_k)
    \right]
    +4\mathbb E\!\left[
        S_k\mathbb E(\eta_k^3\mid\mathcal F_k)
    \right]
    +\mathbb E[\eta_k^4].
\end{align*}
Jensen's inequality gives
\[
    \left\|\mathbb E(\eta_k^2\mid\mathcal F_k)\right\|_{L^2}
    \leq a^2,
    \qquad
    \left\|\mathbb E(\eta_k^3\mid\mathcal F_k)\right\|_{L^{4/3}}
    \leq a^3.
\]
H\"older's inequality therefore gives
\[
    \mathbb E[S_{k+1}^4]
    \leq
    \mathbb E[S_k^4]
    +6a^2\|S_k^2\|_{L^2}
    +4a^3\|S_k\|_{L^4}
    +a^4.
\]
A direct induction, with the constant chosen sufficiently large, yields
$\mathbb E[S_k^4]\leq Ck^2a^4$.  This proves the second bound in
\eqref{eq:exact-score-martingale-moments}.
\end{proof}

\begin{rmk}[Exponential integrator]
\label{rmk:exponential_integrator}
For the OU process, consider the exact forward update and the
exponential backward update
\begin{align*}
    \bfx_{k+1}
    &=e^{-h/2}\bfx_k+\sqrt{1-e^{-h}}\,\bfep_k,\\
    \bfy_k
    &=e^{h/2}\bfy_{k+1}
      +2(e^{h/2}-1)s_{t_k}(\bfy_{k+1})
      +\sqrt{e^h-1}\,\bfzeta_k,
\end{align*}
with the corresponding Gaussian transition densities used in the
Metropolis--Hastings ratio.
The same local-expansion argument applies, with
\[
    A_t(\bfx)=-\frac12\nabla s_t(\bfx).
\]
Indeed, the forward and backward variances satisfy
\[
    1-e^{-h}=h-\frac12h^2+O(h^3),
    \qquad
    e^h-1=h+\frac12h^2+O(h^3).
\]
Their difference contributes
$\frac h2(\|\bfep_k\|^2-d)$ to the leading forward
log-ratio, with the opposite sign in the backward expansion.
This replaces the Euler coefficient
$-\frac12(I_d+\nabla s_t)$ by $-\frac12\nabla s_t$.
Consequently, Theorem~\ref{thm:exact-score-discretization} holds with
$\mathcal D(T)$ replaced by
\[
    \mathcal D_{\mathrm{EI}}(T)
    :=\int_0^T
    \mathbb E\!\left[
        \|\nabla s_t(X_t)\|_{\mathrm F}^2
    \right]\,\rd t.
\]
\end{rmk}

\subsection{Proof of Theorem~\ref{thm:score-error-mean}}
\label{prf:thm:score-error-mean}
\begin{proof}[Proof of Theorem~\ref{thm:score-error-mean}]
We use the same notation as in the proof of Theorem~\ref{thm:exact-score-discretization}. Specifically, write
\[
    f_k:=f_{t_k},\quad
    g_k:=g_{t_k},\quad
    s_k:=s_{t_k},\quad
    \delta_k:=\delta_{t_k},\quad
    \widehat s_k:=s_k+\delta_k,
    \quad
    \widehat b_k:=-f_k+g_k^2\widehat s_k,
\]
and
\[
    A_k(\bfx)
    :=\nabla f_k(\bfx)-\frac{g_k^2}{2}\nabla s_k(\bfx),
    \quad
    Q_A(\bfxi):=\bfxi\tran A\bfxi-\tr(A).
\]

\paragraph{Forward path.}
Fix \(k\), condition on \(\bfx_k\), and write
\[
    \bfx^+:=\bfx_{k+1},\qquad
    \bfx:=\bfx_k,\qquad
    \bfep:=\bfep_k,\qquad
    \widetilde{\bfzeta}
    :=
    \frac{\bfx-\bfx^+-h \widehat{b}(\bfx^+)}
         {g\sqrt h}.
\] 
Suppress the index $k$ in the subsequent calculations.
We have
\[
    \Delta_k^{\mathrm f}
    =
    \frac12
    \left(
        \|\bfep\|^2-\|\widetilde{\bfzeta}\|^2
    \right)
    +L_{k+1}(\bfx^+)-L_k(\bfx).
\]
As in the proof of Lemma~\ref{lem:exact-score-local-expansion}, we take the Taylor expansions around $\bfx$, with $b^\star$ replaced by $\widehat b$, and obtain
\begin{align*}
 \frac12\left(
        \|\bfep\|^2 - \|\widetilde{\bfzeta}\|^2
    \right)
 ={}&-\sqrt h\,g\widehat s\tran\bfep \\
 &+h\left\{
      -\bfep\tran \nabla\widehat b\bfep
      -\frac{g^2}{2}\|\widehat s\|^2
    \right\} \\
 &+h^{3/2}\left\{
      -\frac1{g}\bfep\tran \nabla\widehat b f
      -\frac{g}{2}\bfep\tran
          D^2\widehat b[\bfep,\bfep]
      -g\widehat s\tran(\nabla\widehat b)\bfep
    \right\}\\
    &+O_{L^8}(h^2)
\end{align*}
as well as
\begin{align*}
 L_{k+1}(\bfx_{k+1})-L_k(\bfx_k)
 ={}&\sqrt h\,g s\tran\bfep \\
 &+h\left\{
      \partial_tL_k+f\tran s
      +\frac{g^2}{2}\bfep\tran(\nabla s)\bfep
    \right\} \\
 &+h^{3/2}\left\{
      g\bfep\tran(\partial_t s)(t_k,\bfx)
      +g\bfep\tran(\nabla s)f
      +\frac{g^3}{6}
         \bfep\tran D^2s[\bfep,\bfep]
    \right\}\\
    &+O_{L^8}(h^2).
\end{align*}
Since $s_k-\widehat s_k = -\delta_k$, the order-\(h^{1/2}\) coefficient in the sum of these two expansions is
\(-g_k\delta_k\tran\bfep\). 
Using
\[
    \nabla\widehat b_k
    =-\nabla f_k+g_k^2\nabla s_k+g_k^2\nabla\delta_k
\]
and the log Fokker--Planck identity~\eqref{eq:exact-score-log-Fokker-Planck},
the order-\(h\) coefficient is
\begin{align*}
&-\bfep\tran \nabla\widehat b_k \bfep
     -\frac{g_k^2}{2}\|s_k+\delta_k\|^2
     +\partial_tL_k+f_k\tran s_k
     +\frac{g_k^2}{2}\bfep\tran(\nabla s_k)\bfep\\
=&\bfep\tran
       \left(
         \nabla f_k-\frac{g_k^2}{2}\nabla s_k
       \right)\bfep
      -\nabla\!\cdot f_k
      +\frac{g_k^2}{2}\nabla\!\cdot s_k 
      -g_k^2\bfep\tran(\nabla\delta_k)\bfep
      -g_k^2s_k\tran\delta_k
      -\frac{g_k^2}{2}\|\delta_k\|^2 \\
=& Q_{A_k(\bfx)}(\bfep)
      -g_k^2\left\{
          s_k\tran\delta_k
          +\bfep\tran(\nabla\delta_k)\bfep
          +\frac12\|\delta_k\|^2
       \right\}.
\end{align*}
In the order-\(h^{3/2}\) coefficient, each term is an odd polynomial in $\bfep_k$,
and hence
\begin{equation}
    \mathbb E[U_k^{\mathrm f}\mid\bfx_k]=0.
    \label{eq:score-error-forward-U-centered}
\end{equation}
Collecting the coefficients gives
\begin{align}
    \Delta_k^{\mathrm f}
    ={}&
    -\sqrt h\,g_k\delta_k(\bfx_k)\tran\bfep_k
    -\frac h2g_k^2\|\delta_k(\bfx_k)\|^2
    +hQ_{A_k(\bfx_k)}(\bfep_k)
    \notag\\
    &-hg_k^2\left\{
        \delta_k(\bfx_k)\tran s_k(\bfx_k)
        +\bfep_k\tran\nabla\delta_k(\bfx_k)\bfep_k
      \right\}
    +h^{3/2}U_k^{\mathrm f}+O_{L^4}(h^2).
    \label{eq:score-error-forward-local}
\end{align}

\paragraph{Backward path.}
Fix \(k\), condition on \(\bfy_{k+1}\), and write
\[
    \bfy^+:=\bfy_{k+1},\quad
    \bfy:=\bfy_k,\quad
    \bfzeta:=\bfzeta_k,\quad
    \widetilde{\bfep}
    :=
    \frac{\bfy^+-\bfy-hf(\bfy)}{g\sqrt h}.
\]
Following a similar argument as in the forward path, we have
\begin{align*}
\frac12\left(
      \|\bfzeta\|^2
      -\|\widetilde{\bfep}\|^2
   \right)
 ={}&-\sqrt h\,g\widehat s_k\tran\bfzeta \\
 &+h\left\{
      -\bfzeta\tran(\nabla f)\bfzeta
      -\frac{g^2}{2}\|\widehat s_k\|^2
    \right\} \\
 &+h^{3/2}\left\{
      -\frac1{g}\bfzeta\tran(\nabla f)\widehat b_k
      -\frac{g}{2}\bfzeta\tran
          D^2f[\bfzeta,\bfzeta]
      -g\widehat s_k\tran(\nabla f)\bfzeta
    \right\}\\
&+O_{L^4}(h^2)
\end{align*}
as well as 
\begin{align*}
 L_k(\bfy_k)-L_{k+1}(\bfy_{k+1})
 ={}&\sqrt h\,g s_k\tran\bfzeta \\
 &+h\left\{
      -\partial_tL_k+\widehat b_k\tran s_k
      +\frac{g^2}{2}\bfzeta\tran(\nabla s_k)\bfzeta
    \right\} \\
 &+h^{3/2}\left\{
      g\bfzeta\tran(\nabla s_k)\widehat b_k
      +\frac{g^3}{6}
         \bfzeta\tran D^2s_k[\bfzeta,\bfzeta]
    \right\}\\
    &+O_{L^4}(h^2).
\end{align*}
Thus, the order-\(h^{1/2}\) coefficient is
\(-g_k\delta_k\tran\bfzeta\), while the order-\(h\) coefficient is
\begin{align*}
&-\bfzeta\tran(\nabla f_k)\bfzeta
     +\frac{g_k^2}{2}\bfzeta\tran(\nabla s_k)\bfzeta
     -\frac{g_k^2}{2}\|s_k+\delta_k\|^2
     -\partial_tL_k+\widehat b_k\tran s_k\\
=&-Q_{A_k(\bfy^+)}(\bfzeta)
  -\frac{g_k^2}{2}\|\delta_k(\bfy^+)\|^2,
\end{align*}
where the last equality follows from \(\widehat b_k=-f_k+g_k^2(s_k+\delta_k)\) and the log Fokker--Planck identity~\eqref{eq:exact-score-log-Fokker-Planck}.

The complete order-\(h^{3/2}\) coefficient is
\begin{align}
 U_k^{\mathrm b}
 ={}&
   -\frac1{g_k}\bfzeta_k\tran
       \nabla f_k(\bfy_{k+1})
       \widehat b_k(\bfy_{k+1})
   -\frac{g_k}{2}\bfzeta_k\tran
       D^2f_k(\bfy_{k+1})[\bfzeta_k,\bfzeta_k]
   \notag\\
 &-g_k\widehat s_k(\bfy_{k+1})\tran
       \nabla f_k(\bfy_{k+1})\bfzeta_k
   +g_k\bfzeta_k\tran
       \nabla s_k(\bfy_{k+1})
       \widehat b_k(\bfy_{k+1})
   \notag\\
 &+\frac{g_k^3}{6}\bfzeta_k\tran
       D^2s_k(\bfy_{k+1})[\bfzeta_k,\bfzeta_k].
 \label{eq:score-error-U-backward-explicit}
\end{align}
As before, the order-\(h^{3/2}\) coefficient is an odd polynomial in \(\bfzeta_k\),
and hence
\begin{equation}
    \mathbb E[U_k^{\mathrm b}\mid\bfy_{k+1}]=0.
    \label{eq:score-error-backward-U-centered}
\end{equation}

Collecting the coefficients gives
\begin{align}
    \Delta_k^{\mathrm b}
    ={}&
    -\sqrt h\,g_k\delta_k(\bfy_{k+1})\tran\bfzeta_k
    -\frac h2g_k^2\|\delta_k(\bfy_{k+1})\|^2
    -hQ_{A_k(\bfy_{k+1})}(\bfzeta_k)
    \notag\\
    &+h^{3/2}U_k^{\mathrm b}+O_{L^4}(h^2).
    \label{eq:score-error-backward-local}
\end{align}

\paragraph{Taking expectations.}
Since \(\bfep_k\) is independent of \(\bfx_k\) and standard Gaussian,
\[
    \mathbb E[
       Q_{A_k(\bfx_k)}(\bfep_k)
       \mid\bfx_k
    ]=0,
    \qquad
    \mathbb E\!\left[
       \bfep_k\tran\nabla\delta_k(\bfx_k)\bfep_k
       \mid\bfx_k
    \right]
    =\nabla\!\cdot\delta_k(\bfx_k).
\]
Summation of~\eqref{eq:score-error-forward-local} gives
\begin{align}
    \mathbb E\!\left[
       \sum_{k=0}^{N-1}\Delta_k^{\mathrm f}
    \right]
    ={}&
    -\frac h2\sum_{k=0}^{N-1}g_k^2
       \mathbb E\!\left[
          \|\delta_k(\bfx_k)\|^2
       \right]
    \notag\\
    &-h\sum_{k=0}^{N-1}g_k^2
       \mathbb E\!\left[
          \delta_k(\bfx_k)\tran s_k(\bfx_k)
          +\nabla\!\cdot\delta_k(\bfx_k)
       \right]
    +O(Nh^2).
    \label{eq:score-error-forward-mean-sum}
\end{align}
By Assumption~\ref{assump:regularity}--\ref{assump:regularity:score}, $\delta_t (\bfx) p_t(\bfx)$ vanishes as $\|\bfx\|\to \infty$, and $\nabla\! \cdot (\delta_tp_t) \in L^1$.
Since $X_t\sim p_t$ and $s_t=\nabla\log p_t$, Stein's identity implies
\begin{align*}
    \mathbb E\!\left[
        \delta_t(X_t)\tran s_t(X_t)
        +\nabla\!\cdot\delta_t(X_t)
    \right]=0.
\end{align*}
The weak-error bound gives
\begin{align*}
    \mathbb E\!\left[
        \delta_k(\bfx_k)\tran s_k(\bfx_k)
        +\nabla\!\cdot\delta_k(\bfx_k)
    \right]=O(h).
\end{align*}
Since \(Nh^2=Th=O(h)\), this 
implies that the second sum in~\eqref{eq:score-error-forward-mean-sum} is \(O(h)\). The same
weak-error bound gives
\begin{equation}
    h\sum_{k=0}^{N-1}g_k^2
       \mathbb E\!\left[
          \|\delta_k(\bfx_k)\|^2
       \right]
    =
    \int_0^T g_t^2
       \mathbb E\!\left[
          \|\delta_t(X_t)\|^2
       \right]\,\rd t
    +O(h).
    \label{eq:score-error-forward-Riemann}
\end{equation}
Consequently,
\begin{equation}
    \mathbb E\!\left[
       \sum_{k=0}^{N-1}\Delta_k^{\mathrm f}
    \right]
    =
    -\frac12\int_0^T g_t^2
       \mathbb E\!\left[
          \|\delta_t(X_t)\|^2
       \right]\,\rd t
    +O(h).
    \label{eq:score-error-forward-mean}
\end{equation}

By the same argument, we have
\begin{equation}
    \mathbb E\!\left[
       \sum_{k=0}^{N-1}\Delta_k^{\mathrm b}
    \right]
    =
    -\frac12\int_0^T g_t^2
       \mathbb E\!\left[
          \|\delta_t(\widehat Y_t)\|^2
       \right]\,\rd t
    +O(h).
    \label{eq:score-error-backward-mean}
\end{equation}
Combining~\eqref{eq:score-error-forward-mean} and~\eqref{eq:score-error-backward-mean} proves
\[
    \mathbb E[\Delta]
    =-\mathcal E_{\mathrm{score}}(\delta,T)+O(h).
\]

To prove the lower bound on the acceptance probability, by Jensen's inequality, we have
\begin{align*}
    \EE{1\wedge e^{\Delta}} = \EE{\exp(\Delta - \max(\Delta, 0) ) } \geq \exp(\EE{\Delta} - \EE{\max(\Delta, 0) } ).
\end{align*}
Since $\EE{\max(\Delta, 0)}\leq \EE{e^{\Delta-1} }=1/e$ and $e^{-1/e} > 1/2 $, this proves that
\begin{align*}
    \EE{1\wedge e^{\Delta} } \geq \frac12 \exp(\EE{\Delta}) = \frac12 \exp(-\mathcal E_{\mathrm{score}}(\delta,T)+O(h)).
\end{align*}
\end{proof}

\section{Experiment supplements} \label{appendix:experiments}

This section provides the experiment details. In particular, we describe the detailed procedure of score learning in Section~\ref{sec:practical} and provide further details of the experiments discussed in Section~\ref{sec:experiments}.

\subsection{Learning the score} \label{appendix:score}

We parametrize the control $u$ in Section~\ref{sec:practical} by a two-layer multi-layer perceptron.

\paragraph{Noising dynamics and reference distribution}
We instantiate the general noising process~\eqref{equ:reference:process} with the OU coefficients $f_t(\bfx)=-\tfrac{1}{2}\bfx$ and $g_t\equiv 1$, so that $X_t\mid X_0 \sim \N(\sqrt{\alpha_t} X_0, \, (1-\alpha_t)I_d)$ with $\alpha_t = e^{-t}$. The reference distribution is taken to be a Gaussian mixture $p^{\mathrm{ref}} = \sum_{k=1}^K \hat w_k\,\mathcal{N}(\hat\mu_k, \hat\Sigma_k)$, fitted as described below.
Because the OU dynamics are linear, the marginal $p^{\mathrm{ref}}_t$ remains a Gaussian mixture with unchanged weights $\hat{w}_k$, means $\sqrt{\alpha_t} \hat{\mu}_k$, and covariances $(1-\alpha_t)I_d + \alpha_t \hat{\Sigma}_k$, so the reference score $s_t^{\mathrm{ref}} = \nabla \log p_t^{\mathrm{ref}}$ is available in closed form. 
Moreover, the OU process approaches $N(0, I_d)$ at a geometric rate irrespective of the initial distribution. 
Therefore, the memoryless condition approximately holds for moderately large $T$.

\paragraph{GMM fitting}
The GMM parameters are estimated in two stages: first, $N$ particles are evolved under an annealed Langevin algorithm with inverse-temperature schedule $0<\beta_1<\cdots < \beta_L=1$ and adaptive step sizes $h_\ell = h/(\beta_\ell+\varepsilon)$, to obtain samples covering all major modes. The resulting samples are then post-processed to extract $K$ prototype centers via gradient ascent and greedy merging: modes located within a given radius are merged into a single mode, and clusters with fewer than a minimal sample count are discarded. Finally, $K$-means is run on the collected samples initialized from the prototype centers, and the GMM parameters $\{\hat w_k, \hat\mu_k, \hat\Sigma_k\}_{k=1}^K$ are read off from the resulting cluster assignments.

\paragraph{Computational cost} 
The cost of target score evaluation typically dominates in score training (as well as sampling). Therefore, we use it as a measure of computational cost. Since the target score is not incorporated into the parametrization of the control $u_\theta$, it is evaluated only during the annealed Langevin warm-up, the gradient ascent step, and once per trajectory during score training. Therefore, the computational cost is given by
\begin{align*}
    \text{computational cost} &= (\text{\# particles})\times (\text{\# temperatures}) \times (\text{\# steps per temperature}) \\
    &\quad + (\text{\# gradient particles}) \times (\text{\# gradient steps}) \\
    &\quad + (\text{\# training particles}) \times (\text{\# training iterations}).
\end{align*}

\subsection{Further details of examples}
\label{app:experiments:details}

We provide additional implementation details and results for the experiments in Section~\ref{sec:experiments}. 

\paragraph{Bayesian Gaussian mixture model (Section~\ref{sec:mixture:model})}

We reparametrize all parameters into unconstrained space. The mixture weights are represented via logits, and each precision matrix $\Sigma_k^{-1}$ is parameterized through the lower-triangular Cholesky factor, with diagonal entries stored on the log scale to enforce positivity. This yields a parameter vector in $\R^{11}$.
Prior hyperparameters are set to $m_0 = \bar{y}$, $V_0 = 9\,\mathrm{diag}(\hat\sigma^2)$, $\nu_0 = p + 2 = 4$, $W_0 = \mathrm{diag}(\hat\sigma^2)$, and $\alpha_0 = 1$, where $\hat\sigma^2$ denotes the per-coordinate sample variance of the data.
Observations are standardized to zero mean and unit variance prior to analysis. 

The wall-clock times (in seconds) for MAD-Path (including score training) and APT-DEO over 100 runs are $27.86 \pm 0.24$ and  $27.90 \pm 0.54$, respectively.

\paragraph{Bayesian mixture of beta-binomial and binomial (Section~\ref{sec:fetal:deaths})} 

Following \cite{tjelmeland2001mode}, we use independent vague priors for all the parameters. In particular, we use uniform distributions on the unit interval for $\gamma$, $\mu$ and $\nu$ and an exponential prior with expectation one for $\theta$, then apply logit transformations on $\gamma, \mu,\nu$ and a log-transformation on $\theta$. The transformed parameters are denoted by $\tilde{\gamma}, \tilde{\mu},\tilde{\nu}$ and $\tilde{\theta}$, respectively.

Figure~\ref{fig:fetal-deaths-traces} shows the trace plots of the transformed parameters under different choices of time horizon $T$ and number of steps $N$. Mode crossings are more likely to happen when $T$ and $N$ are large, which is consistent with the results in Figure~\ref{fig:fetal-deaths}.

\begin{figure}[htbp]
    \centering
    \includegraphics[width=\linewidth]{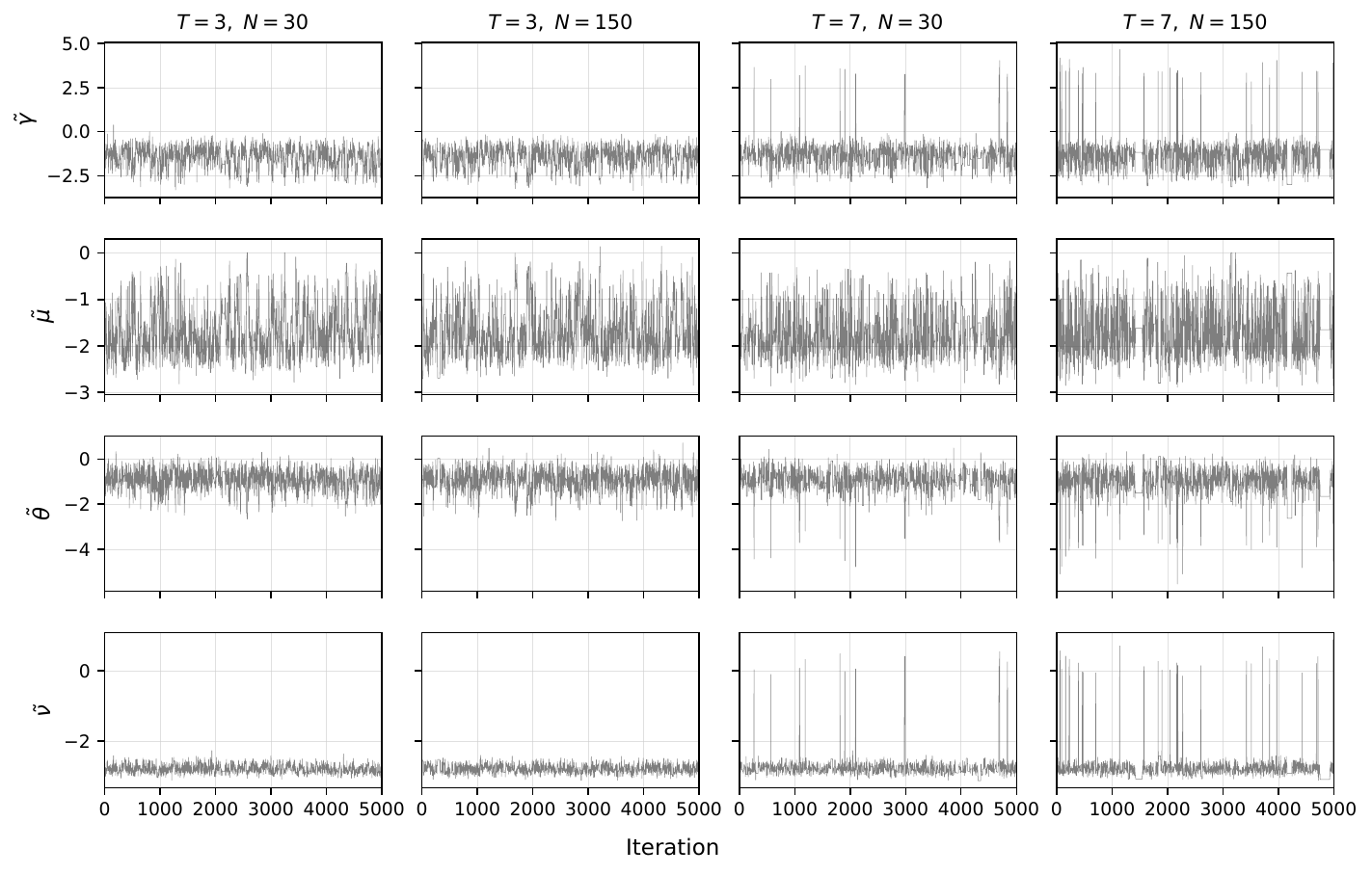}
    \caption{Trace plots of the transformed parameters under MAD-Path with varying horizon $T$ and number of steps $N$ on the fetal deaths in litters example.}
    \label{fig:fetal-deaths-traces}
\end{figure}

\paragraph{Sensor network localization (Section~\ref{sec:sensor:network})}
Figure~\ref{fig:sensor-network} shows the true location and posteriors of sensors. The wall-clock time for MAD-Path, including score training, is 14.83~s; the wall-clock times for the two APT-DEO chains are 14.59~s and 13.86~s, respectively.

\begin{figure}[htbp]
    \centering
    \includegraphics[width=\linewidth]{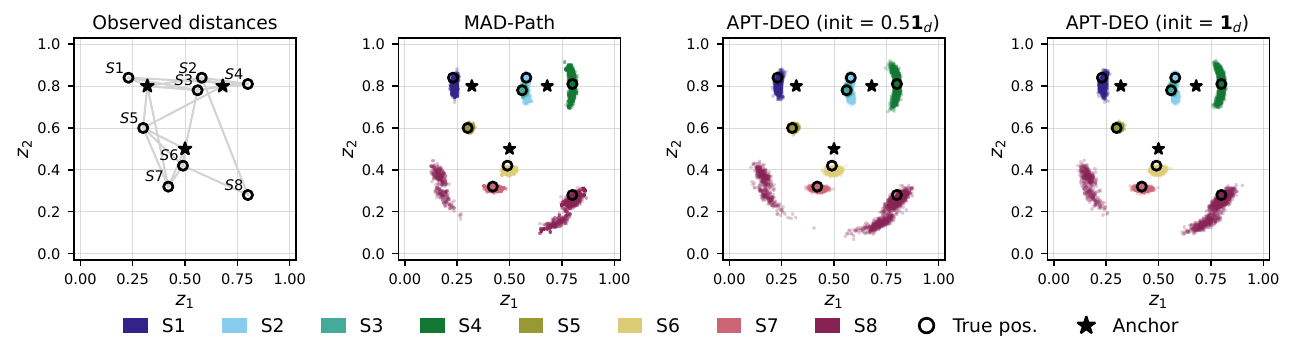}
    \caption{The left panel shows the true locations of the eleven sensors, with the three known locations marked as $\star$ and observed distances indicated by lines. The right three panels represent the posterior distributions of sensor locations obtained by MAD-Path and APT-DEO with two initializations after burn-in. }
    \label{fig:sensor-network}
\end{figure}

\paragraph{Seemingly unrelated regression (Section~\ref{sec:sur})}

The design vectors $X_m$ are drawn i.i.d.\ from $\mathcal{N}(0, I_N)$, the true coefficients $\boldsymbol{\beta}_m$ are drawn i.i.d.\ from $\mathcal{N}(0,1)$, and the cross-equation covariance is set to $\Sigma = (AA^\top)/M + I_M$ with $A \sim \mathcal{N}(0, I_{M\times M})$. 
The wall-clock time for MAD-Path, including score training, is 10.82~s; for APT-DEO, the cold-started chain takes 6.1~s, and the two warm-started chains (APT-DEO~1 and APT-DEO~2 in Figure~\ref{fig:sur-hist}) take 5.68~s and 5.87~s, respectively.
As in the sensor network example (Figure~\ref{fig:sensor-network-traces}), MAD-Path benefits from a natural warm start and so exhibits essentially no transient phase (Figure~\ref{fig:sur-trace}). APT-DEO with cold start has an average swap acceptance rate of 0.242.

\begin{figure}[htbp]
    \centering
    \includegraphics[width=\linewidth]{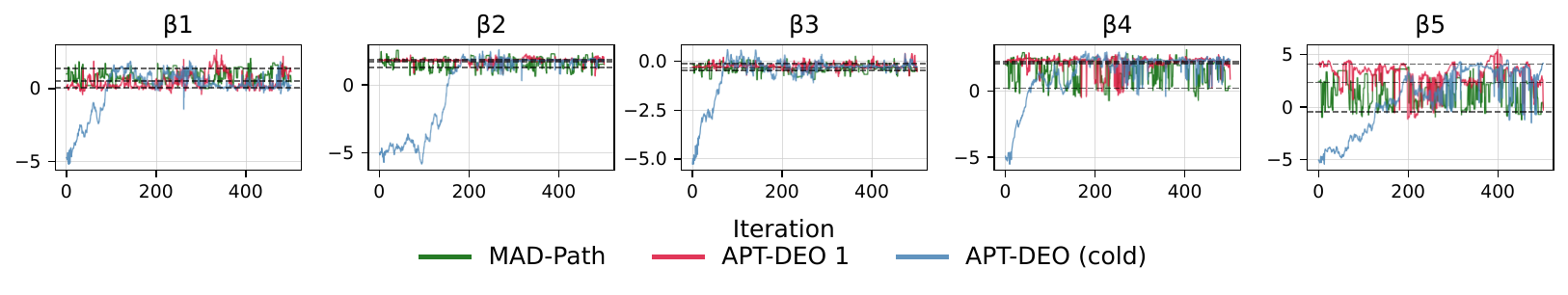}
    \caption{Trace plots for the first 500 iterations of MAD-Path and APT-DEO with varying initializations. Dashed lines indicate the oracle mode locations. APT-DEO 1: the same chain displayed in Figure~\ref{fig:sur-hist}; APT-DEO (cold): APT-DEO with a cold start. }
    \label{fig:sur-trace}
\end{figure} 

\bibliographystyle{apalike}
\bibliography{ref}

\end{document}